\documentclass{rspublic}
\usepackage{amsfonts,amssymb}
\usepackage{epsfig}
\usepackage{url}

\setlength{\arraycolsep}{2pt}

\newtheorem{comment}{Comment}
\begin{document}

\title[ Riemann Hypothesis for Combined Zeta Functions]{ The  Riemann Hypothesis for Symmetrised Combinations of Zeta Functions}
\author[R.C. McPhedran \& C.G. Poulton ]{Ross C. McPhedran$^1$ and Christopher G. Poulton$^2$
}

\affiliation{$^1$ Department of Mathematical Sciences, University of Liverpool,  Liverpool L69 7ZL, United Kingdom, and CUDOS, School of Physics, University of Sydney, NSW 2006, Australia
\\
$^2$School of Mathematical Sciences, University of Technology,Sydney, N.S.W. 2007 Australia\\
}
\label{firstpage}

\maketitle
\begin{abstract}{Lattice sums, Riemann hypothesis, distribution functions of zeros}
This paper studies combinations of the Riemann zeta function, based on one defined by P.R. Taylor, which was shown by him to have all its zeros on the critical line. With a rescaled complex argument, this is denoted here by ${\cal T}_-(s)$, and is considered together with a counterpart function ${\cal T}_+(s)$, symmetric rather than antisymmetric about the critical line. We prove  that ${\cal T}_+(s)$ has all its zeros on the critical line, and that the zeros of both functions are all of first order. We establish a link between  the zeros of ${\cal T}_-(s)$ and of ${\cal T}_+(s)$ with those of the zeros of the Riemann zeta function  $\zeta(2 s-1)$, which enables us to prove  if the Riemann hypothesis holds then the distribution function of the zeros of $\zeta (2 s-1)$ agrees with those for  ${\cal T}_-(s)$ and of ${\cal T}_+(s)$ in all terms which do not remain finite as $t\rightarrow \infty$. \end{abstract}
\section{Introduction}
The Riemann hypothesis is considered to be one of the most important unsolved problems in mathematics, and a central issue in analytic number theory (Titchmarsh and Heath-Brown, 1987), as, if proved, it would give more precise estimates for the distribution of zeros of the Riemann zeta function $\zeta (s)$of the complex argument $s=\sigma+i t$, and in turn more precise information about the distribution of prime numbers.  The hypothesis itself is that all zeros of $\zeta (s)$ have $\sigma=1/2$, and its proof has resisted the efforts of many famous mathematicians. However, their efforts have resulted in an impressive and extensive literature, dealing with many properties of $\zeta (s)$ and closely related functions.

Among many results concerning the Riemann hypothesis which have been proved, we cite the property that all zeros of $\zeta (s)$ lie in the critical strip $0<\sigma<1/2$, and that the number of zeros lying on the critical line is infinite   (Titchmarsh and Heath-Brown, 1987). The Bohr-Landau Theorem (Edwards, 1974) proves that all but an infinitesimal fraction of zeros of $\zeta (s)$ lie arbitrarily close to the critical line. This important result however does not constrain the proportion of zeros actually lying on the critical line. Work on constraining this fraction has continued, with the results improving from 1/3  (Levinson, 1974) to 2/5 (Conrey, 1989) and latterly to 41\%  (Bui, Conrey and Young, 2011).
There of course have been numerical investigations as to the truth of the Riemann hypothesis for finite ranges of $\Im(s)=t$ (see for example Chapter 15 in Titchmarsh and Heath-Brown, 1987), with it being established that the first $1.5\times 10^{9}$ zeros of $\zeta (s)$ are simple and lie on the critical line. (The fact that all zeros of $\zeta (s)$ are simple is an important property of $\zeta(s)$ yet to be proved.)

Such a famous conjecture as the Riemann hypothesis has attracted the attention of physicists, and indeed is of importance to them, since the zeta function occurs in regularisation arguments in quantum field theories (Elizalde, 1995). An excellent review of the links between physics and the Riemann hypothesis has been given by Schumayer and Hutchinson (2011). The connections they mention are in classical mechanics, quantum mechanics, nuclear physics, condensed matter physics, and statistical physics. The path the present authors and their colleagues followed leading to investigations into the hypothesis commenced with a paper by Lord Rayleigh (1892),  which has been applied in
studies of photothermal solar absorbers (McPhedran \& McKenzie, 1978, Perrins, McKenzie \& McPhedran, 1979),  photonic crystals   and metamaterials  (Nicorovici, McPhedran \& Botten, 1995)  . Its underlying mathematics involved the properties of conditionally-convergent double sums, and
the class of such sums used by Rayleigh has interesting connections with the Riemann hypothesis.

To be more specific,  in a recent paper (McPhedran et al 2011, hereafter referred to as I)
it was established that the Riemann hypothesis held for all of a particular class of two-dimensional sums over the square lattice if and only if it held for the lowest member of that class. That lowest
member, denoted ${\cal C}(0,1;s)$ was known from  the work of Lorenz (1871) and Hardy (1920)
to be given by the product of the Riemann zeta function $\zeta (s)$, and a particular Dirichlet $L$ function, $L_{-4}(s)$. In I it was also established that, if  all zeros of ${\cal C}(0,1;s)$ lie on $\sigma=1/2$ by a generalisation of the Riemann hypothesis, then the whole class of sums has the same distribution function for zeros,
as far as all terms tending to infinity with $t$ are concerned. 

Here we wish to link these results to an established result for a combination of Riemann zeta functions given by P.R. Taylor, and published in a posthumous paper in 1945 (Taylor, 1945).
An editor's note to this paper refers to  Flight Lieutenant P.R. Taylor as ``missing, believed killed, on active service in November 1943''. It also notes that Professor Titchmarsh revised and completed
Taylor's argument.  What Taylor proved {\em inter alia} was that $\xi_1(s+1/2)-\xi_1(s-1/2)$
has all its zeros on $\sigma=1/2$, where $\xi_1(s)=\Gamma(s/2) \zeta(s)/\pi^{s/2}$ is a symmetrised form of $\zeta(s)$. After this proposition was stated, it was commented that it``may be a step towards the proof of the Riemann hypothesis''. We will take this comment  further here, in order to establish that Taylor's work does indeed furnish an only partially-explored way of investigating the Riemann hypothesis, together with related questions.

Google Scholar lists ten citations of Taylor's paper. Of the seven relevant to our concerns here, two (Anderson, 1986 and Matsumoto \& Tanigawa) use Taylor's method of proof. The other five have been published in the interval 2006-2010, testifying to recent interest in the properties of functions closely associated with the Riemann zeta function. The survey article of Balazard (2010) refers to Taylor's paper as a "remarkable contribution to the theory of the $\zeta$ function", and to his proof as "classical and elegant". It is also commented that
Velasquez Casta\~{n}on (2010)  extended Taylor's method into a very general context, and showed that numerous recent results flowed naturally from the approach. 

The combination of zeta functions considered by Taylor is antisymmetric under the replacement $s\rightarrow 1-s$. Here, we will complement Taylor's function by a symmetric combination, and replace the variable $s$ by $2 s-1/2$, in  order to provide a link with the two-dimensional sums studied in I. The resulting functions will be denoted ${\cal T}_-(s)$
and  ${\cal T}_+(s)$ respectively. We will give numerical evidence (and later prove) that ${\cal T}_-(s)$ and ${\cal T}_+(s)$ both have the same distribution functions of zeros on  the critical line $\sigma=1/2$, and that this is also the same as that of ${\cal C}(0,1;s)$ and $\zeta (2 s-1/2)$.

We give in Section 2  the definitions of relevant functions, which are ${\cal T}_+(s)$,  ${\cal T}_-(s)$,
their ratio ${\cal V}(s)$, ${\cal U}(s)= ({\cal V}(s)-1)/ ({\cal V}(s)+1)$ and ${\cal C}(0,1;s)$, and provide numerical evidence concerning their distributions of zeros on the critical line. 
The numerical evidence shown motivated the proofs given in Sections 3 and 4, but of course does not in itself constitute an essential element of those proofs.
We discuss in Section 3 the morphologies of relevant functions in the regions around their zeros and poles. This provides the basis for the proofs in Section 4 that ${\cal T}_+(s)$, in addition to 
${\cal T}_-(s)$ as proved by Taylor (1945), has all its zeros on the critical line, and that all zeros of both functions are simple. In Section 4, we also make explicit the links between the zeros and the distribution functions of zeros of ${\cal T}_-(s)$, ${\cal T}_+(s)$ and those of the Riemann zeta function  $\zeta(2 s-1)$. These links in fact enable us to put forward a proof that all the zeros of
 $\zeta(2 s-1)$  have a number distribution function that must agree with those for ${\cal T}_-(s)$ and  ${\cal T}_+(s)$ in all terms which diverge as
 $t\rightarrow \infty$ . An important feature of Section 4 is that the proofs advanced of the mathematical propositions are based on a property of an equivalent electrostatic problem which would be regarded as self-evident by many physicists.
 We include two Appendices, the first of which discusses the convergence of the logarithmic potential function associated with ${\cal U}(s)$.  The second  illustrates modified functions ${\cal U}(s)$ which do not obey the Riemann hypothesis, but from which may be constructed a function ${\cal V}(s)$
 having all the properties listed above (all zeros and poles on the critical line, all these being of first order, no zeros of derivatives on the critical line).

The proofs in this paper  are less formal than is expected for important results in analytic number theory (while being of similar style to that of Speiser,1934), but may have the desirable outcome of motivating the construction of more formal proofs. At all events, their construction is such as to make them accessible and interesting to mathematical physicists
 and applied mathematicians. They also establish that ${\cal T}_-(s)$ and  ${\cal T}_+(s)$ are two functions which provably have  the properties which one would desire to be able to prove for $\zeta(s)$. It may be the case that the comparative method used in I, or a related method, may be used to  forge a link between them and $\zeta(2 s-1/2)$, enabling the resolution of the Riemann hypothesis.
 
After the initial version of this article was completed, the authors became aware of an important earlier paper by Lagarias and Suzuki (2006). There is significant overlap between the 2006 paper and the content of Section 4:  the 2006 paper contains rigorous proofs of propositions for which more physically-based arguments are given here.  Lagarias and Suzuki also give more general forms for combinations of the symmetrized zeta functions $\xi_1(2s)$ and $\xi_1(2 s-1)$ which obey the Riemann hypothesis than those discussed in Section 4. 

\section{Functions and Zero Distributions}
We recall the definition from   McPhedran et al (2011) of two sets of angular
lattice sums for the square array:
\begin{equation}
{\cal C}(n,m;s)=\sum_{p_1,p_2} ' \frac{\cos^n(m
\theta_{p_1,p_2})}{(p_1^2+p_2^2)^s},~~ {\cal
S}(n,m;s)=\sum_{p_1,p_2} ' \frac{\sin^n(m
\theta_{p_1,p_2})}{(p_1^2+p_2^2)^s}, \label{mz-1}
\end{equation}
where $\theta_{p_1,p_2}=\arg (p_1+\ri p_2)$, the prime denotes the
exclusion of the point at the origin, and the complex number $s$ is written 
in terms of real and imaginary parts as $s=\sigma+i t$. The sum independent of the
angle $\theta_{p_1,p_2}$ was evaluated by Lorenz (1871) and Hardy(1920)
in terms of the product of Dirichlet $L$ functions:
\begin{equation}
{\cal C}(0,m;s)={\cal S}(0,m;s)\equiv{\cal C}(0,1;s)\equiv{\cal C}(1,0;s)=4 L_1(s)
L_{-4}(s)=4 \zeta(s) L_{-4}(s) .\label{mz2}
\end{equation}
 A useful
account of the properties of Dirichlet $L$ functions such as $L_{-4}(s)$ has been given
by Zucker \& Robertson (1976).

An  expression for the lowest order sum was derived by  Kober (1936), 
\begin{eqnarray}
{\cal C}(0,1;s) =
2\zeta(2 s)+\frac{2\sqrt{\pi}\Gamma(s-1/2)}{\Gamma(s)}\zeta(2 s-1)
\hspace{25ex} \nonumber \\
+\frac{8\pi^s }{\Gamma(s)}\sum_{p_1=1}^\infty \sum_{p_2=1}^\infty
\left( \frac{p_2}{p_1}\right)^{s-1/2} K_{s-1/2}(2\pi p_1 p_2),
\label{sn3}
\end{eqnarray}
where $K_\nu(z)$ denotes the modified Bessel function of the second
kind, or Macdonald function, with order $\nu$ and argument $z$.
Let us introduce a general notation for  Macdonald function  sums:
\begin{equation}
{\cal K}(n,m;s)= \sum_{p_1,p_2=1}^{\infty} \left(\frac{p_2}{p_1}\right)^{s-1/2} (p_1 p_2\pi)^{ n}   K_{m+s-1/2} (2 \pi p_1 p_2).
\label{n22}
\end{equation}
By interchanging $p_2$ and $p_1$ and using the relation $K_{-\nu}(z)=K_{\nu}(z)$, we see that ${\cal K}(n,0;s)$ is symmetric under the substitution $s\rightarrow 1-s$:
\begin{equation}
{\cal K}(n,0;1-s)= {\cal K}(n,0;s).
\label{n24}
\end{equation}

The lowest symmetric sum is, if $s_-=s-1/2$,
\begin{equation}
{\cal K}(0,0;s)= \frac{\Gamma(s)}{8\pi^s} {\cal C}(0,1;s)-
\left[ \frac{\Gamma(s)\zeta (2s)}{4\pi^s}+\frac{\Gamma(s_-)\zeta (2s_-)}{4\pi^{s_-}}\right].
\label{n25}
\end{equation}
Each of the two terms on the right-hand side of (\ref{n25}) is unchanged by the substitution
$s\rightarrow 1-s$. (That the first term is unchanged follows from the functional equation for 
${\cal C}(0,1;s)$- see  McPhedran et al (2011); that the second is unchanged then follows from the preceding fact and the symmetry of ${\cal K}(0,0;s)$.) They are therefore real on the critical line $\sigma=1/2$.

We now introduce two symmetrised functions, the first of which occurred in equation (\ref{n25}):
\begin{eqnarray}
{\cal T}_+(s)&=&\frac{\Gamma(s)\zeta (2s)}{4\pi^s}+\frac{\Gamma(s_-)\zeta (2s_-)}{4\pi^{s_-}},
\nonumber \\
  &=& \frac{1}{4}\left[  \xi_1( 2 s)+\xi_1(2 s-1) \right],
\label{n26a}
\end{eqnarray}
so that
\begin{equation}
{\cal K}(0,0;s)=\frac{\Gamma(s)}{8\pi^s} {\cal C}(0,1;s)-{\cal T}_+(s).
\label{n25a}
\end{equation}
Its antisymmetric counterpart is
\begin{eqnarray}
{\cal T}_-(s)&=&\frac{\Gamma(s)\zeta (2s)}{4\pi^s}-\frac{\Gamma(s_-)\zeta (2s_-)}{4\pi^{s_-}}
\nonumber \\
& =&  \frac{1}{4}\left[  \xi_1( 2 s)-\xi_1(2 s-1) \right] .
\label{n27}
\end{eqnarray}
This is equivalent to the function considered by P.R. Taylor (1945)
after his variable $s$ is replaced by our $2 s-1/2$. Taylor proved in fact that his function obeys the Riemann hypothesis, as must then ${\cal T}_-(s)$.

Now, from Titchmarsh \& Heath-Brown (1987), the Riemann zeta function has its zeros confined to the region
$0<\sigma<1$, from which we immediately see that ${\cal T}_+(s)$ and ${\cal T}_-(s)$ cannot have
coincident zeros.
\begin{figure}[h]
\includegraphics[width=3.0in]{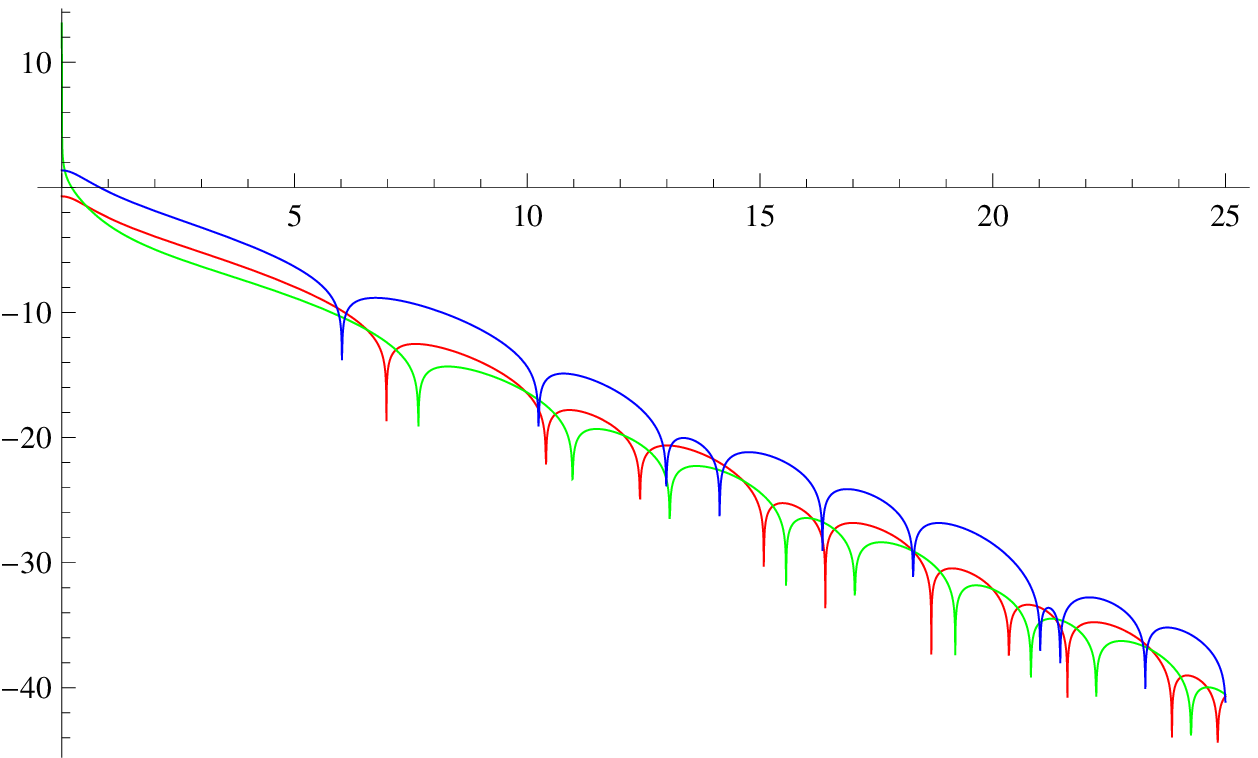}~~\includegraphics[width=3.0in]{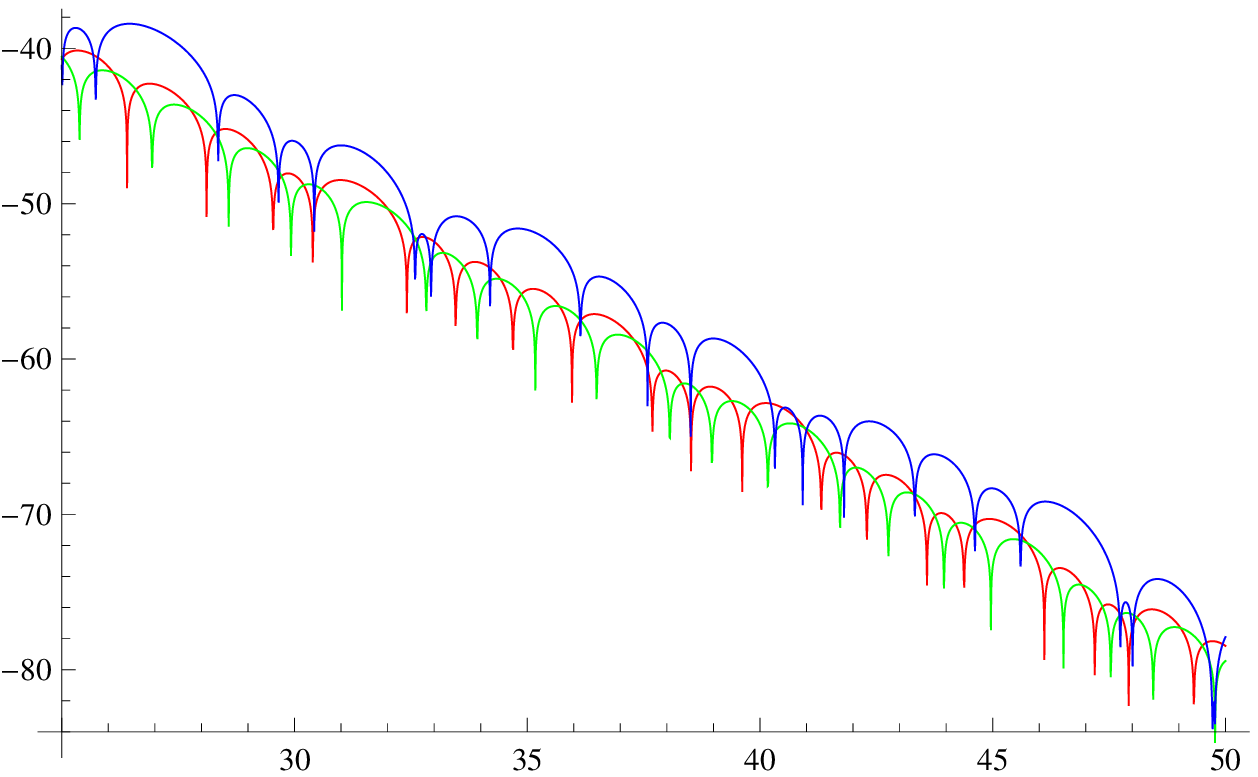}
\caption{\label{fig01} Plots of the logarithmic modulus of the functions ${\cal T}_+(s)$ (red),
${\cal T}_-(s)$ (green)  and ${\cal C}(0,1;s)*\Gamma(s)/\pi^s$ (blue) along the critical line.}\end{figure}
In Fig. \ref{fig01} we show the variation along the critical line of the functions ${\cal T}_+(s)$ ,
${\cal T}_-(s)$  and ${\cal C}(0,1;s)*\Gamma(s)/\pi^s$. Each has a similar number of zeros in the interval shown.
In Fig.\ref{fig02}  we further show the behaviour on the critical line of the functions ${\cal K}(0,0;s)$ ,
${\cal K}(1,1;s)$  and ${\cal C}(0,1;s)*\Gamma(s)/\pi^s$. It can be seen that the first of these has significantly fewer zeros on the critical line than the third, while the second appears to have none.
In McPhedran et al (2004) it is commented that  ${\cal K}(0,0;s)$  has approximately the same number of zeros (in complex conjugate pairs) off the critical line as on it.
\begin{figure}[h]
\includegraphics[width=5.0in]{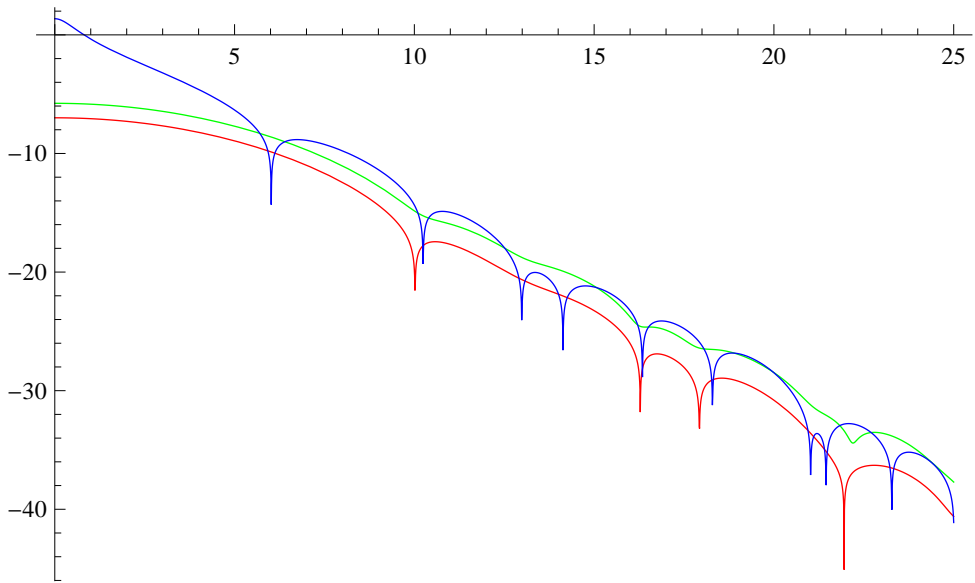}
\caption{\label{fig02}  Plots of the logarithmic modulus of the functions ${\cal K}(0,0;s)$ (red),
${\cal K}(1,1;s)$ (green)  and ${\cal C}(0,1;s)*\Gamma(s)/\pi^s$ (blue) along the critical line.
 }\end{figure}

In  McPhedran et al (2011)   the distributions along the critical line $\sigma=1/2$ of the zeros of
${\cal C}(0,1;s)$ were compared with those of the angular lattice sums ${\cal C}(1,4;s)$, ${\cal C}(1,8;s)$ and 
${\cal C}(1,12;s)$.  Data was presented  for $t$ in the range $[1,300]$ which suggested that all four functions had the same distribution of zeros on this line, and gave a proof that this was the case if 
${\cal C}(0,1;s)$ obeyed the Riemann hypothesis  (which implied that the three angular lattice sums obeyed it as well). As we have remarked, P. R. Taylor (1945) proved that ${\cal T}_-(s)$
obeyed the Riemann hypothesis, and we have seen the connection between ${\cal C}(0,1;s)$ 
and ${\cal T}_+(s)$ in equation (\ref{n25a}) above. It is therefore of interest to examine the
distributions of zeros along the critical line of  ${\cal T}_-(s)$ and ${\cal T}_+(s)$, and to compare these with the distribution of  ${\cal C}(0,1;s)$. We also compare these distributions of zeros
with that of $\zeta(2 s-1/2)$.

As a motivation for the subsequent analytic derivations, we first examine numerically the distributions of zeros of these four functions. Table 1 gives the distributions, with the data for
${\cal C}(0,1;s)$ being taken from Table 1 of  McPhedran et al (2011). The comparison of the second, third and fourth columns of Table 1 clearly suggests that in fact the distributions of zeros for ${\cal T}_-(s)$ and ${\cal T}_+(s)$ are the same as those for ${\cal C}(0,1;s)$,  $\zeta(2 s-1/2)$  and ${\cal C}(1,4;s)$, ${\cal C}(1,8;s)$ and 
${\cal C}(1,12;s)$. Table 2 extends the comparison of zero distributions for    these  four functions  to higher values of $t$.  To compile Table 2, a listing of the first ten thousand zeros of $L_{-4}(s)$  (Silva, 2007) has been used.

 \begin{table}
\caption{Numbers of zeros of ${\cal C} (0,1;1/2+\ri t)$, ${\cal T}_-(1/2+ \ri
t)$ , ${\cal T}_+(1/2+\ri t)$ and $\zeta(1/2+2 i t)$  in successive intervals of $t$.}
\begin{center}\begin{tabular}{|c|c|c|c|c|} \\ \hline
$t$ & $n({\cal C})$ & $n({{\cal T}_-})$ & $n({{\cal T}_+})$ &   $n( \zeta(2 s-1/2))$       \\ \hline
0-10 & 1 & 1 & 1 &1 \\
10-20 & 5 & 5& 5 & 5\\
20-30 & 7&7&7  &7\\
30-40 & 7&7&8 &8\\
40-50 & 10&9&8 &8 \\
50-60 & 8 &9&9 &9 \\
60-70 & 10 & 9 &10 &10\\
70-80  & 10 & 11 & 10 &10\\
80-90 & 11 &10 &11 &11\\
90-100 & 10 & 11 &10  &10\\
\hline 0-100 & 79 & 79 & 79 & 79\\ \hline
100-110 & 11&11&11 &11 \\
110-120 & 12 &11 &12  & 12\\
120-130 & 12 &12 &12 &12 \\
130-140 & 12 &12 &12 &12 \\
140-150 & 11 &12 &12 &12 \\
150-160 & 13 &13 &12 &12    \\
160-170 &  13 &12 &13 & 13     \\
170-180 & 14 &13 &13 & 13  \\
180-190 & 13 &13 &13 &13  \\
190-200 & 12 &13 &13 &13  \\
\hline 0-200 & 202 &201 &202 & 202 \\ \hline
200-210 & 14 &13 &13 & 13\\
210-220 & 13 &14 &14& 14 \\
220-230 & 14 &14 &13 &13\\
230-240 & 14 &14 &14 &14  \\
240-250 & 14 &13 &13  &13\\
250-260 & 14 &14  &15 & 14   \\
260-270 & 14 &14 &14  &14  \\
270-280 & 14 &15 &14  &15 \\
280-290 & 14 &14& 14 & 14 \\
290-300 & 14 &14  & 15 &15 \\
\hline 0-300 & 342 & 340 & 341 &341 \\ \hline
\end{tabular}
\end{center}
\label{table1}
\end{table}

 \begin{table}
\caption{Numbers of zeros of ${\cal C} (0,1;1/2+\ri t)$ and $\zeta(1/2+2 i t)$  in successive intervals of $t$.}
\begin{center}\begin{tabular}{|c|c|c|c|c|c|c|c|} \\ \hline
$t$ &$n({\cal C})$ & $n( \zeta(2 s-1/2))$ & $n({\cal T}_-)$ & $n({{\cal T}_+})$ &  
   $t$ &$n({\cal C})$ &   $n( \zeta(2 s-1/2))$       \\ \hline
0-100&79&79&79&79&1000-2000 &1958 &1957 \\
100-200&122&123&122&123& 2000-3000&2124 & 2124 \\
200-300&140&139&139& 139&3000-4000 & 2231 &2232 \\ 
300-400&150  &150 & 150& 150&4000-5000  &2313 & 2312 \\
400-500 & 157 &158 &158& 158 &5000-6000 &2376 &2377 \\
500-600 &166 &164 &165 &164& 6000-7000 & 2431 &2431 \\
600-700 & 168 &170  & 169&170&7000-8000 & 2474 &2475 \\
700-800 &176 &174 &175& 174 &0-8000  & 17424 & 17425 \\
800-900 &178 &178&178&178 & &  &\\
900-1000& 181&182 &182&182 & &  &\\
0-1000 &1517 &1517&1517&1517 & & & \\ \hline
\end{tabular}
\end{center}
\label{table2}
\end{table}

We recall from  McPhedran et al (2011) the formulae for the distributions of zeros of $\zeta (s)$  and ${\cal C}(0,1;s)$ (assuming the Riemann hypothesis to hold for both):
\begin{equation}
N_{\zeta}(\frac{1}{2},t)=\frac{ t}{2\pi} \log (t)-\frac{
t}{2\pi}(1+\log(2 \pi)) +\ldots,
\label{dz3}
\end{equation}
and
\begin{equation}
N_{{\cal C}0,1}(\frac{1}{2},t)=\frac{ t}{\pi} \log (t)-\frac{
t}{\pi}(1+\log( \pi)) +\ldots .
\label{dz5}
\end{equation}
It is an interesting fact that if $t$ is replaced by $2 t$ in (\ref{dz3}) we obtain
\begin{equation}
N_{\zeta}(\frac{1}{2},2 t)=\frac{ t}{\pi} \log (t)-\frac{
t}{\pi}(1+\log( \pi)) +\ldots,
\label{dz3a}
\end{equation}
in agreement with (\ref{dz5}) (which combined the distributions of zeros for $\zeta(s)$ 
and $L_{-4}(s)$).  Numerical exemplification of (\ref{dz5}) and (\ref{dz3a}) is to be found in Tables 1 and 2, in the latter case extending over the first 17 thousand zeros.

\section{Phase Distributions and Order of Zeros}

In order to begin our analytic investigations of  the zeros  of ${\cal T}_+(s)$ and ${\cal T}_-(s)$, we follow the method of  McPhedran et al (2011), and consider the properties of the
quotient function
\begin{equation}
 {\cal V}(s)=\frac{{\cal T}_+(s)}{{\cal T}_-(s)}.
\label{del51}
\end{equation}
\begin{theorem}
The analytic function ${\cal V}(s)$ is odd under $s\rightarrow 1-s$, it has a zero at $s=1/2$, is finite at $s=1$, and  has poles at  $s\approx 3.9125$,  $s\approx -2.9125$. It is pure imaginary on the critical line. It can be evaluated by direct summation in $\sigma>1$.  ${\cal V}(s)$ has an argument which lies in the fourth quadrant for $\sigma>2$ and $t>3$,
while in  $\sigma< -1$ and $t>3$ it   lies in the third quadrant.
\label{th51t}
\end{theorem}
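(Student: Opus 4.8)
The plan is to base everything on the functional equation $\xi_1(w)=\xi_1(1-w)$, which under $s\to 1-s$ interchanges the two building blocks of (\ref{n26a}) and (\ref{n27}): since $\xi_1(2(1-s))=\xi_1(2s-1)$ and $\xi_1(2(1-s)-1)=\xi_1(2s)$, one gets $\mathcal T_+(1-s)=\mathcal T_+(s)$ and $\mathcal T_-(1-s)=-\mathcal T_-(s)$. Dividing gives $\mathcal V(1-s)=-\mathcal V(s)$, the oddness claim. Because $\mathcal T_\pm$ are built from $\Gamma$, $\pi^{-s}$ and $\zeta$, all real on the real axis, Schwarz reflection gives $\overline{\mathcal V(s)}=\mathcal V(\bar s)$; on $\sigma=1/2$ one has $\bar s=1-s$, so $\overline{\mathcal V(s)}=\mathcal V(1-s)=-\mathcal V(s)$, forcing $\Re\,\mathcal V=0$, i.e. $\mathcal V$ is pure imaginary on the critical line.

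Next I would read off the behaviour at $s=1/2$ and $s=1$ from the two simple poles of $\xi_1$, at $w=1$ (residue $+1$) and at $w=0$ (residue $-1$). Near $s=1/2$ both $\xi_1(2s)$ and $\xi_1(2s-1)$ are singular, with residues that \emph{cancel} in $\mathcal T_+$ (leaving a finite nonzero value) but \emph{reinforce} in $\mathcal T_-$ (leaving a simple pole); hence $\mathcal V(1/2)=0$. Near $s=1$ only $\xi_1(2s-1)$ is singular while $\xi_1(2s)=\xi_1(2)$ is finite, so $\mathcal T_+$ and $\mathcal T_-$ inherit the same simple pole with opposite sign and $\mathcal V(1)=-1$, finite. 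The summation claim is immediate: for $\sigma>1$ both $\zeta(2s)$ and $\zeta(2s-1)$ are given by absolutely convergent Dirichlet series, so $\mathcal T_\pm(s)$ and their quotient are computable termwise.

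For the poles I would locate the real zeros of $\mathcal T_-$. On $(1,\infty)$ write $\mathcal T_-(s)=\frac14\,\xi_1(2s-1)\,[g(s)-1]$ with $g(s)=\xi_1(2s)/\xi_1(2s-1)$. As $s\to 1^{+}$ the denominator blows up so $g\to 0$, while Stirling gives $g(s)\sim\pi^{-1/2}s^{1/2}\to\infty$; continuity then yields a zero $s^{*}$, which numerical refinement places at $s^{*}\approx 3.9125$. At any real zero $\xi_1(2s^{*})=\xi_1(2s^{*}-1)$, so $\mathcal T_+(s^{*})=\frac12\xi_1(2s^{*})>0$ is nonzero; the zero of $\mathcal T_-$ is therefore a genuine simple pole of $\mathcal V$, and oddness supplies its partner at $1-s^{*}\approx -2.9125$.

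The quadrant statement is the crux. Writing $r(s)=\xi_1(2s-1)/\xi_1(2s)=\sqrt{\pi}\,\frac{\Gamma(s-1/2)}{\Gamma(s)}\frac{\zeta(2s-1)}{\zeta(2s)}$ we have $\mathcal V=(1+r)/(1-r)$, and a short computation gives $\Re\,\mathcal V=(1-|r|^2)/|1-r|^2$ and $\Im\,\mathcal V=2\,\Im r/|1-r|^2$. Thus the fourth-quadrant claim for $\sigma>2,\ t>3$ is \emph{equivalent} to the two inequalities $|r|<1$ and $\Im r<0$. I would establish these by Stirling bounds, $|\Gamma(s-1/2)/\Gamma(s)|\approx|s|^{-1/2}$ with $\arg\approx-\frac12\arg s<0$, together with Euler-product and Dirichlet-series estimates showing $\zeta(2s-1)/\zeta(2s)$ close to $1$ with small argument. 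The third-quadrant claim for $\sigma<-1$ then costs nothing: for such $s$ the point $w=1-\bar s$ satisfies $\Re w>2,\ \Im w>3$, and combining oddness with reflection gives $\mathcal V(s)=-\mathcal V(1-s)=-\overline{\mathcal V(w)}$, which lies in the third quadrant whenever $\mathcal V(w)$ lies in the fourth. \textbf{The hard part} will be making these estimates uniform right down to the corner $(\sigma,t)=(2,3)$, where $|r|$ is closest to $1$ and the negative $\Gamma$-argument is only barely larger in size than the oscillating contribution of $\arg[\zeta(2s-1)/\zeta(2s)]$; explicit Stirling remainder bounds and careful sign control of the $\zeta$-ratio argument will be needed to guarantee $|r|<1$ and $\Im r<0$ throughout the region.
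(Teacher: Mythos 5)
Your proposal is correct, and its skeleton matches the paper's own proof: oddness of ${\cal V}$ from the parities of ${\cal T}_\pm$ under $s\rightarrow 1-s$; pure imaginarity on $\sigma=1/2$ from that parity combined with reflection; reduction of the quadrant claims to the ratio ${\cal U}(s)=\xi_1(2s-1)/\xi_1(2s)$ via ${\cal V}=(1+{\cal U})/(1-{\cal U})$; Stirling plus the absolutely convergent Dirichlet series for $\sigma>2$; and the symmetry ${\cal V}(s)=-\overline{{\cal V}(1-\bar{s})}$ to transfer the fourth-quadrant statement to the third quadrant in $\sigma<-1$. The differences are in the level of rigor, and they favour you. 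The paper simply states the special values ${\cal V}(1)=-1$, ${\cal V}(1/2+\delta)=-1.95381\,\delta+O(\delta^2)$ and the pole locations, all obtained numerically; you instead derive the zero at $s=1/2$ and the finiteness at $s=1$ from the residues of $\xi_1$ at $w=0$ and $w=1$ (cancellation in ${\cal T}_+$, reinforcement in ${\cal T}_-$), and you prove the existence of the real pole by an intermediate-value argument on $g(s)=\xi_1(2s)/\xi_1(2s-1)$ between $s\rightarrow 1^+$ and $s\rightarrow\infty$, an argument the paper does not supply at all. Likewise your identities $\Re\,{\cal V}=(1-|{\cal U}|^2)/|1-{\cal U}|^2$ and $\Im\,{\cal V}=2\,\Im\,{\cal U}/|1-{\cal U}|^2$ turn the paper's loose inference (``${\cal U}$ has argument in the fourth quadrant, consequently so does ${\cal V}$'') into the exact equivalence $|{\cal U}|<1$ together with $\Im\,{\cal U}<0$. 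Finally, the step you flag as the hard part --- uniform control of these two inequalities down to the corner $(\sigma,t)=(2,3)$ --- is precisely where the published proof is weakest: the paper disposes of the moderate-$t$ region by stating ``we verify numerically'' (for $\sigma>1$, $t$ above $2.94334$) and only then invokes Stirling for $t\gg 1$. So your proposal is no less complete than the paper's own argument, and carrying out the explicit Stirling-remainder and zeta-ratio bounds you describe would in fact strengthen it.
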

\begin{proof}
Since ${\cal T}_+(s)$ is even under $s\rightarrow 1-s$ while  ${\cal T}_-(s)$ is  odd, it follows that
 ${\cal V}(s)$ is odd under $s\rightarrow 1-s$. This means that its real part is odd  and its imaginary part even under $\sigma+i t\rightarrow (1-\sigma) +i t$, a reflection in the critical line, so the first quadrant maps onto the second quadrant and the third onto the fourth under the reflection. The modulus of ${\cal T}_-(s)$ is even under the reflection. From these remarks, it follows that ${\cal V}(1/2+i t)$  is pure imaginary.
 
 At particular points: ${\cal V}(1)=-1$,  ${\cal V}(0)=1$ and  ${\cal V}(1/2+\delta)=-1.95381 \delta +O(\delta^2)$.  The function ${\cal V}(s)$ has only one derivative zero on the critical line, near $t=2.94334$ (see Fig. \ref{fig03t}; the proof is given as Corollary 4.4).
 
 \begin{figure}[h]
\includegraphics[width=3.0in]{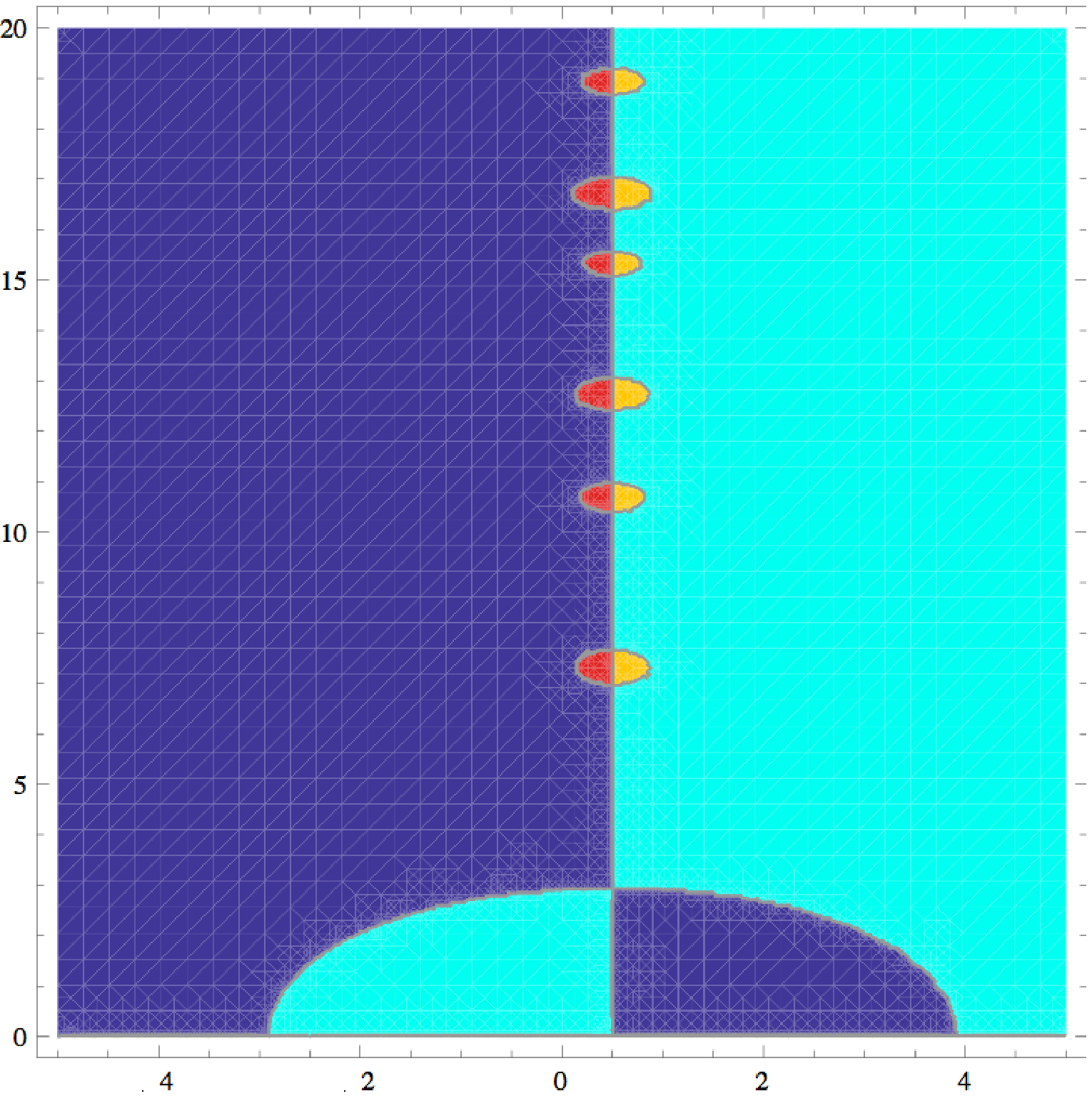}~~\includegraphics[width=3.0in]{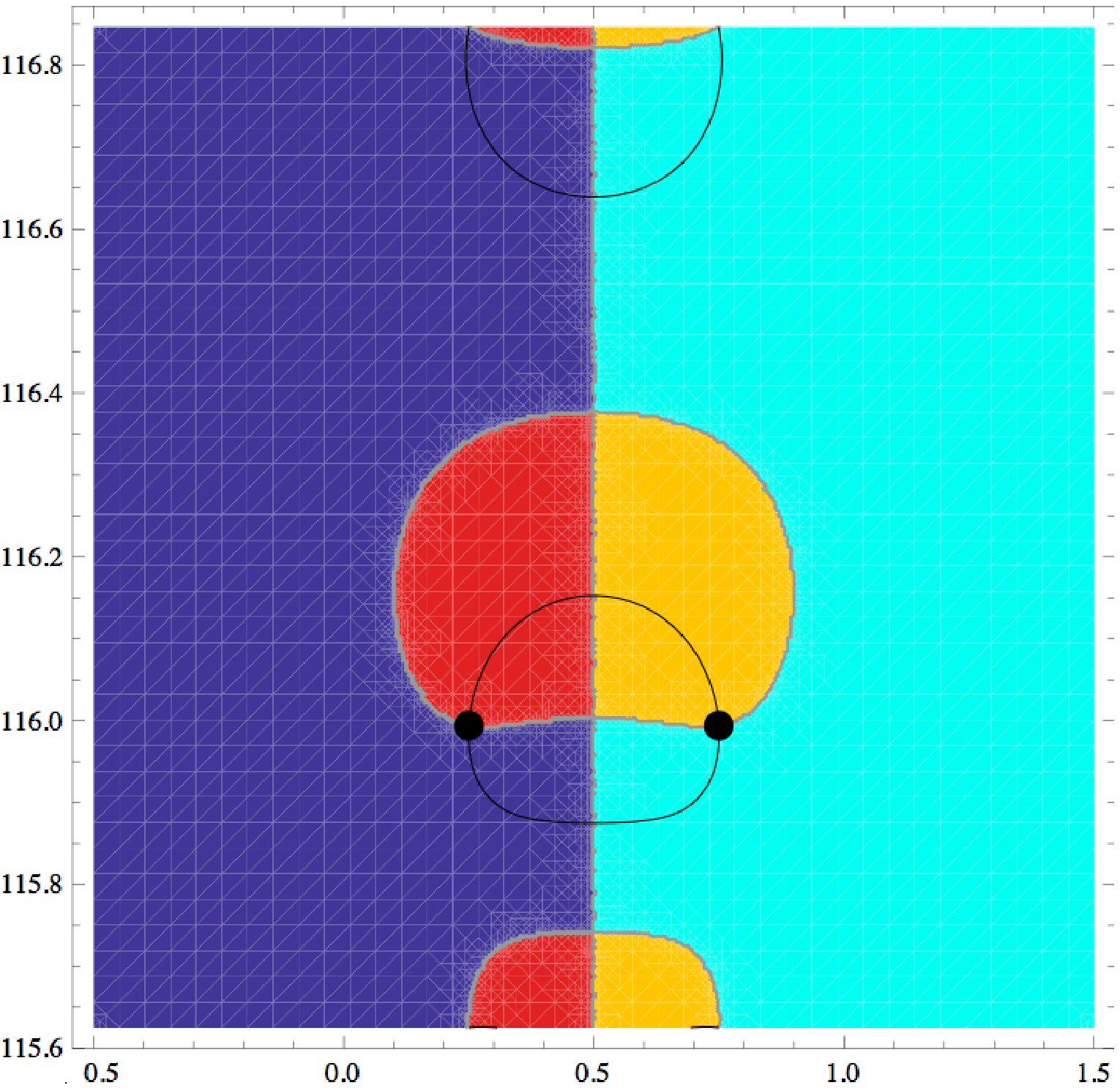}
\caption{\label{fig03t}  Plots of the phase of ${\cal V}(s)$ in the complex plane. The phase is denoted by colour according to which
quadrant it lies in:  yellow-first quadrant,  red- second, purple- third, and light blue- fourth.
The plot at right is in the vicinity of the 98th zero of $\zeta(2 s)$, denoted by a dot (together with its reflection in the critical line). The black contours  through the dots show lines along which $|{\cal V}(s)|=1$. }\end{figure}

 We write
 \begin{equation}
 {\cal V}(s)=\frac{1+{\cal U}(s)}{1-{\cal U}(s)},
 \label{tdef}
 \end{equation}
 where
 \begin{equation}
 {\cal U}(s)=\frac{\xi_1(2 s-1) }{\xi_1(2 s)}=\frac{\Gamma(s-1/2)\sqrt{\pi}\zeta (2 s-1)}{\Gamma(s) \zeta (2 s)}.
 \label{udef2}
 \end{equation}
 We verify numerically the proposition that in  $\sigma>1$ ${\cal V}(s)$ has an argument which lies in the fourth quadrant for $t$ above 2.94334.  When we reach the region $t>>1$, we can then
 use Stirling's expansion for the $\Gamma$ function and expand the $\zeta$ function by direct summation in $\sigma>1$:
 \begin{equation}
  {\cal U}(s)=
  \sqrt{\frac{\pi}{s}} (1+\frac{3}{8 s}+\ldots) (1+\frac{1}{4^s}+\frac{2}{9^s}+\ldots).
 \label{udef3}
 \end{equation}
 If $\sigma>2$ (say), the second series in (\ref{udef3}) converges rapidly in exponential fashion, and the first series and its prefactor dominate, ensuring that ${\cal U}(s)$ and consequently
 ${\cal V}(s)$ have arguments in the fourth quadrant.
 For $\sigma<-1$, the symmetry properties of ${\cal V}(s)$ under reflection in the critical line ensure its argument lies in the third quadrant.
\end{proof}

\begin{theorem}
The analytic function ${\cal U} (s)$  has a phase which is even under $s\rightarrow 1-\bar{s}$ (with the bar denoting complex conjugation) and a modulus which reciprocates. It has a zero at $s=0$,  a pole at $s=1$, and (apart from the influence of the pole) decreases monotonically as $s=\sigma$ goes along the real axis from minus infinity to infinity. Apart from the possible exception of points that are poles of ${\cal U}(s)$,
${\cal V}'(s)=0$  implies ${\cal U}'(s)=0$.  ${\cal U}(s)$ can be evaluated by direct summation in $\sigma>1$.  ${\cal U}(s)$ has an argument which lies in the fourth quadrant for $\sigma>1$ or  $\sigma< 0$  and $t>0.5249$.
\label{th51u}
\end{theorem}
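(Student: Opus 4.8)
The plan is to treat the six assertions in turn, leaning on the functional equation $\xi_1(w)=\xi_1(1-w)$ and the real-analyticity of $\xi_1$ (so that $\overline{\xi_1(w)}=\xi_1(\bar w)$). First I would establish the symmetry. Writing ${\cal U}(s)=\xi_1(2s-1)/\xi_1(2s)$ from (\ref{udef2}) and applying the functional equation to both factors gives ${\cal U}(1-\bar s)=\xi_1(2\bar s)/\xi_1(2\bar s-1)=1/{\cal U}(\bar s)=1/\overline{{\cal U}(s)}$. Since $1/\overline{{\cal U}(s)}={\cal U}(s)/|{\cal U}(s)|^2$ has the same argument as ${\cal U}(s)$ but reciprocal modulus, the phase is even and the modulus reciprocates under $s\to1-\bar s$. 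The zero at $s=0$ and pole at $s=1$ then follow either by direct inspection of (\ref{udef2}) --- the pole of $\Gamma(s)$ at $s=0$ forces ${\cal U}(0)=0$, and the pole of $\zeta(2s-1)$ at $s=1$ forces ${\cal U}(1)=\infty$ --- or immediately from the symmetry, which interchanges these two points while reciprocating the modulus. Assertion four is purely algebraic: differentiating (\ref{tdef}) gives ${\cal V}'(s)=2{\cal U}'(s)/(1-{\cal U}(s))^2$, so ${\cal V}'(s)=0$ forces ${\cal U}'(s)=0$ wherever $(1-{\cal U}(s))^2$ is finite and nonzero, in particular away from the poles of ${\cal U}$. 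Assertion five is the observation that for $\sigma>1$ both $\zeta(2s-1)$ and $\zeta(2s)$ are absolutely convergent Dirichlet series, yielding the expansion (\ref{udef3}).

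For the monotonicity (assertion three) I would factor ${\cal U}(\sigma)=\sqrt\pi\,[\Gamma(\sigma-1/2)/\Gamma(\sigma)]\,[\zeta(2\sigma-1)/\zeta(2\sigma)]$ on $(1,\infty)$. Both bracketed factors are positive there, and each is decreasing: the logarithmic derivative of the $\Gamma$-ratio is $\psi(\sigma-1/2)-\psi(\sigma)<0$ because $\psi$ is increasing, while that of the $\zeta$-ratio is $2[\zeta'(2\sigma-1)/\zeta(2\sigma-1)-\zeta'(2\sigma)/\zeta(2\sigma)]<0$ because $\zeta'/\zeta=-\sum\Lambda(n)n^{-x}$ is increasing for $x>1$. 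A product of positive decreasing factors is decreasing, so ${\cal U}$ decreases on $(1,\infty)$ from $+\infty$ to $0$. The reflection ${\cal U}(1-\sigma)=1/{\cal U}(\sigma)$ then transfers this to $(-\infty,0)$, and reduces the remaining interval $(0,1)$ to $(0,1/2)$, on which ${\cal U}$ runs from $0$ down to ${\cal U}(1/2)=-1$ (the value $-1$ being forced by ${\cal V}(1/2)=0$ from Theorem \ref{th51t}).

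Finally, for the fourth-quadrant claim (assertion six) on $\sigma>1$ I would work from (\ref{udef3}): the prefactor $\sqrt{\pi/s}=\sqrt\pi\,s^{-1/2}$ has argument $-\frac{1}{2}\arg s\in(-\pi/4,0)$ whenever $t>0$, placing it in the fourth quadrant, and the correction series $(1+3/(8s)+\ldots)$ and $(1+4^{-s}+2\cdot9^{-s}+\ldots)$ have arguments small enough that their total cannot rotate ${\cal U}(s)$ out of the open fourth quadrant once $t$ exceeds the stated threshold $0.5249$. The region $\sigma<0$ is then immediate from assertion one: for $\sigma<0$ and $t>0.5249$ the point $w=1-\bar s$ has $\Re w>1$ and $\Im w=t>0.5249$, so $\arg{\cal U}(s)=\arg{\cal U}(w)$ lies in the fourth quadrant by the case just treated.

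The main obstacle is the monotonicity on $(0,1/2)$: here $2\sigma-1\in(-1,0)$ and $2\sigma\in(0,1)$ lie in the critical strip where $\zeta$ is not sign-definite, so the clean factorisation into positive decreasing pieces fails and one cannot simply quote $\psi$- and $\zeta'/\zeta$-monotonicity. On this interval ${\cal U}$ is negative with no zeros or poles, so I would instead test the sign of ${\cal U}'/{\cal U}=\psi(\sigma-1/2)-\psi(\sigma)+2\zeta'(2\sigma-1)/\zeta(2\sigma-1)-2\zeta'(2\sigma)/\zeta(2\sigma)$ and show it is positive (whence ${\cal U}'<0$ since ${\cal U}<0$); failing a clean bound, this short bounded interval can be dispatched by the numerical phase monitoring underlying the plots, consistent with the less-formal style adopted here. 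A secondary technical point is making the threshold $t>0.5249$ rigorous, which requires uniform control of the arguments of the two correction series in (\ref{udef3}) near the real axis.
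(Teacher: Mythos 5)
Your proposal is correct in substance and follows the same overall skeleton as the paper's proof (functional equation for the symmetry, inspection of singularities for the zero and pole, M\"obius algebra for the derivative implication, absolutely convergent Dirichlet series for $\sigma>1$, asymptotics plus a numerically determined threshold for the quadrant claim), but it diverges from the paper in instructive ways. On the symmetry, the paper simply states ${\cal U}(s){\cal U}(1-s)=1$ and appeals to analyticity; your explicit use of $\overline{\xi_1(w)}=\xi_1(\bar w)$ is the missing step spelled out. On the zero and pole, the paper computes the Laurent data ${\cal U}(1+\delta)=3/(\pi\delta)+O(1)$ and reflects it to ${\cal U}(\delta)=-\pi\delta/3+O(\delta^2)$, which yields slightly more information than your inspection argument but is otherwise equivalent. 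Your derivative step differentiates (\ref{tdef}) where the paper differentiates (\ref{uthm4}) to get (\ref{uthm4a}); these are the same computation with the same exceptional set. The most significant difference is the monotonicity claim: the paper's proof contains \emph{no} argument for it at all, recording only the asymptotic behaviours $\sqrt{\pi/\sigma}$ and $\sqrt{\sigma/\pi}$ at $\pm\infty$, whereas your factorisation into the $\Gamma$-ratio and $\zeta$-ratio, with $\psi$ increasing and $\zeta'/\zeta=-\sum_n\Lambda(n)n^{-x}$ increasing, is an actual proof on $(1,\infty)$, transferred to $(-\infty,0)$ by the functional equation; the interval $(0,1/2)$ you flag as an obstacle is equally a gap in the paper, so your honesty costs you nothing relative to the original. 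Likewise your reduction of the $\sigma<0$ quadrant statement to the $\sigma>1$ case via phase evenness is made explicit only for ${\cal V}(s)$ (in Theorem \ref{th51t}) in the paper, not for ${\cal U}(s)$. One caution: the paper does not attempt to extract the constant $0.5249$ from (\ref{udef3}); it obtains it purely numerically, as the ordinate at which the third-quadrant region attached to the segment of the real axis between the zero at $s=0$ and the pole at $s=1$ terminates on the critical line, and it invokes (\ref{udef3}) only for $t\gg 1$. Your plan to control the correction series in (\ref{udef3}) down to $t=0.5249$ cannot work as stated, because (\ref{udef3}) is an asymptotic expansion valid for large $|s|$; in the intermediate region (moderate $\sigma$ and $t$) some direct computation of the kind the paper performs is unavoidable, exactly as your closing caveat anticipates.
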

\begin{proof}
The definition (\ref{udef2}), together with the functional equation for the Riemann zeta function
(Titchmarsh \& Heath-Brown, 1987),
ensure that ${\cal U}(s)$ obeys the functional equation
\begin{equation}
{\cal U}(s) {\cal U}(1-s)=1.
\label{uthm1}
\end{equation}
This, together with its analytic nature, ensures that its phase is symmetric under the reflection
$s\rightarrow 1-\bar{s}$, while
its modulus reciprocates (i.e. $\log[|{\cal U}(s)|]$ is antisymetric under reflection).

Its behaviour near $s=1$ is governed by the singularity of the Riemann zeta function, and we find
\begin{equation}
{\cal U}(1+\delta)=\frac{3}{\pi \delta}+O(\delta^0).
\label{uthm2}
\end{equation}
Using (\ref{uthm1}) we then have
\begin{equation}
{\cal U}(\delta)=-\frac{\pi \delta}{3}+O(\delta^2).
\label{uthm3}
\end{equation}

The inverse of (\ref{tdef}) is
\begin{equation}
{\cal U}(s)=\frac{{\cal V}(s)-1}{{\cal V}(s)+1}.
\label{uthm4}
\end{equation}
The fixed points (Knopp, 1952)  of both ${\cal U}$ and ${\cal V}$ are $\pm i$, so that the two functions are also related in the form
\begin{equation}
F_1(s)=\left( \frac{{\cal V}(s)-i}{{\cal V}(s)+i}\right)=i\left( \frac{{\cal U}(s)-i}{{\cal U}(s)+i}\right).
\label{uthm5}
\end{equation}
Note that from (\ref{uthm4}), lines of phase zero or $\pi$ of ${\cal V}$ map onto lines of phase
zero or $\pi$ of ${\cal U}$, and vice versa. The exact correspondence is determined by the modulus
of the first-mentioned function. Also, the lines of $|{\cal U}|=1$ correspond to the lines 
 $\arg[{\cal V}(s)]=\pm \pi/2$
where ${\cal V}(s)$ is pure imaginary, and vice versa. The relationship between ${\cal V}(s)$
and ${\cal U}(s)$ is, from equations (\ref{uthm4},\ref{uthm5}), a rotation through $90^\circ $
in the complex plane (Cohn, 1967).

The derivative of equation (\ref{uthm4}) is
\begin{equation}
{\cal U}'(s)=\frac{2{\cal V}'(s)}{[{\cal V}(s)+1]^2}.
\label{uthm4a}
\end{equation}
This shows that, if ${\cal V}(s)\neq -1$, ${\cal V}'(s)=0$ implies ${\cal U}'(s)=0.$

For $t$ near zero, the phase of ${\cal U}(s)$ lies in the third quadrant in a  region which at $t=0$ extends between the pole at $s=1$ and the zero at $s=0$. It terminates on the critical line at the value $t=0.5249$, determined numerically.

We have given the expansion of ${\cal U}(\sigma+i t)$ for $t>>1$ and $\sigma>1$ in (\ref{udef3}).
From this, it follows that as $\sigma$ becomes large compared with $t$, $|{\cal U}(\sigma+i t)|$ tends to zero as $\sqrt{\pi/\sigma}$, and for $\sigma$ tending to $-\infty$ it diverges as $\sqrt{\sigma/\pi}$. In either case, $|{\cal V}(\sigma+i t)|$ tends to unity.

\end{proof}

We will show below that the function $F_1(s)$ provides a bridge from the properties of ${\cal V}(s)$ to those of ${\cal U}(s)$. Hence, it is necessary to establish its key properties.
\begin{theorem}
The function $F_1(s)$ satisfies the functional equation
\begin{equation}
F_1(1-s)=\frac{1}{F_1(s)}.
\label{f1-1}
\end{equation}
It is real on the critical line, has unit modulus on the real axis, has the special values
$F_1(0)=-i$, $F_1(1/2)=-1$, and in $t>>\sigma>>1$ has the asymptotic expansion
\begin{equation}
F_1(\sigma+i t)=-i +(1-i)\sqrt{\frac{2 \pi}{t}}+\frac{2 \pi}{t}+(1+i)(\pi+\frac{\sigma}{2})\sqrt{\frac{2 \pi}{t^3}}+\ldots .
\label{f1-2}
\end{equation}
\label{thmf1}
\end{theorem}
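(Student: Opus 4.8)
The plan is to derive the four elementary properties directly from the relations between $F_1$, ${\cal V}$ and ${\cal U}$ established in Theorems \ref{th51t} and \ref{th51u}, and then to treat the asymptotic expansion separately by series inversion. For the functional equation, since ${\cal V}(s)$ is odd under $s\to 1-s$, substituting ${\cal V}(1-s)=-{\cal V}(s)$ into the definition of $F_1$ in (\ref{uthm5}) gives $F_1(1-s)=\frac{-{\cal V}(s)-i}{-{\cal V}(s)+i}=\frac{{\cal V}(s)+i}{{\cal V}(s)-i}$, which is exactly $1/F_1(s)$, establishing (\ref{f1-1}). Reality on the critical line is immediate from the same representation: Theorem \ref{th51t} gives ${\cal V}(1/2+it)$ pure imaginary, say ${\cal V}=iv$ with $v$ real, whence $F_1=(iv-i)/(iv+i)=(v-1)/(v+1)\in\mathbb{R}$. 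For unit modulus on the real axis I would use the second representation in (\ref{uthm5}): for real $s$ the quotient defining ${\cal U}$ in (\ref{udef2}) is a ratio of real quantities, so ${\cal U}(\sigma)$ is real, ${\cal U}+i=\overline{{\cal U}-i}$, and hence $|({\cal U}-i)/({\cal U}+i)|=1$ and $|F_1|=1$. The special values follow from ${\cal V}(1/2)=0$ and ${\cal V}(0)=1$ (Theorem \ref{th51t}): $F_1(1/2)=(0-i)/(0+i)=-1$ and $F_1(0)=(1-i)/(1+i)=-i$; as a consistency check the functional equation then forces $F_1(1)=i$, which agrees with ${\cal V}(1)=-1$.

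The substantive part is the asymptotic expansion (\ref{f1-2}). Here I would exploit that in the regime $t\gg\sigma\gg1$ one has $|s|\to\infty$, so by (\ref{udef3}) the modulus $|{\cal U}(s)|\sim\sqrt{\pi/|s|}\to0$ and the Dirichlet factor $1+4^{-s}+\cdots$ is exponentially close to unity. This legitimises expanding the fractional-linear (M\"obius) map in (\ref{uthm5}) as a geometric series in the small quantity ${\cal U}$: writing $F_1=-i(1+i{\cal U})/(1-i{\cal U})$ gives $F_1=-i+2{\cal U}+2i{\cal U}^2-2{\cal U}^3+\cdots$. It then remains to substitute the Stirling-type expansion ${\cal U}(s)=\sqrt{\pi/s}\,(1+\frac{3}{8s}+\cdots)$ from (\ref{udef3}) together with the large-argument expansion $s^{-1/2}=(it)^{-1/2}(1-i\sigma/t)^{-1/2}$, where $(it)^{-1/2}=e^{-i\pi/4}t^{-1/2}=(1-i)t^{-1/2}/\sqrt2$. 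Collecting powers of $t$ then reproduces (\ref{f1-2}): the constant $-i$ from the leading term, $(1-i)\sqrt{2\pi/t}$ from $2{\cal U}$ at leading order, $2\pi/t$ from $2i{\cal U}^2$, and the $t^{-3/2}$ coefficient from the $\sigma/t$ and Stirling corrections to $2{\cal U}$ together with the leading part of $-2{\cal U}^3$.

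The main obstacle is purely the bookkeeping of the $t^{-3/2}$ term, which receives contributions from three distinct sources — the $\sigma/t$ correction to $s^{-1/2}$, the $3/(8s)$ Stirling correction, and the cubic term $-2{\cal U}^3$ — each carrying its own branch-of-the-root phase factor $(1-i)$ or $(1+i)$ (note $i(1-i)=1+i$ and $(it)^{-3/2}=-(1+i)t^{-3/2}/\sqrt2$). Care is needed to fix these phases consistently, taking all roots with $\arg s\to\pi/2$, and to confirm that no $t^{-3/2}$ contribution hides in the $2i{\cal U}^2$ term; a direct accounting shows its correction sits at order $t^{-2}$ and so does not interfere. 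The single delicate point is verifying that the additive constants inside the $t^{-3/2}$ bracket combine to the stated $(\pi+\sigma/2)$, since $\sigma\gg1$ in this regime makes any $O(1)$ constant negligible against $\sigma/2$; I would pin this down by comparing the collected expansion against $F_1$ evaluated from its exact definition as a numerical cross-validation.
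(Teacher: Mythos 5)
Correct, and essentially the paper's own route: the paper likewise reads the functional equation, reality on the critical line, unit modulus on the real axis, and the special values directly off the M\"obius relations (\ref{uthm5}) linking $F_1(s)$ to ${\cal V}(s)$ and ${\cal U}(s)$, and disposes of (\ref{f1-2}) with the single remark that it ``follows readily from equation (\ref{udef3}), neglecting exponentially small terms.'' Your explicit expansion $F_1=-i+2{\cal U}+2i{\cal U}^2-2{\cal U}^3+\cdots$ with the phase bookkeeping is exactly the computation the paper leaves implicit, and your flagged ``delicate point'' about the $O(1)$ constants in the $t^{-3/2}$ bracket is a fair reading of the accuracy to which (\ref{f1-2}) is asserted (the Stirling correction in fact contributes an additional $-3/8$ alongside $\pi+\sigma/2$, which is negligible only because $\sigma\gg1$ in the stated regime).
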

\begin{proof}
The functional equation for $F_1(s)$ follows readily from that for ${\cal U}(s)$. It is real on the critical line 
since ${\cal V}(s)$ is pure imaginary there. Again, since ${\cal V}(\sigma)$ is real, the modulus of
${\cal F}_1(\sigma)$ is unity. Its special values at zero and $1/2$ are in keeping with this, and with
the functional equation in the latter case. The asymptotic expansion for $F_1(\sigma+i t)$ follows readily from equation (\ref{udef3}), neglecting exponentially small terms.

Note that from (\ref{f1-2}), the phase of $F_1(s)$ will lie in the fourth quadrant in the asymptotic region to the right of the critical line, and, using the functional equation (\ref{f1-1}), in the first quadrant in the asymptotic region to the left of it. Note also that on the critical line
\begin{equation}
F_1'(\frac{1}{2}+ i t)=\frac{2 [\Im ({\cal V}(\frac{1}{2}+i t))]'}{(1+\Im({\cal V}(\frac{1}{2}+i t)))^2}.
\label{f1-3}
\end{equation}
which is real.
\end{proof}

\begin{corollary} On the critical line, the functions ${\cal U}(s)$, ${\cal V}(s)$ and $F_1(s)$ are  related simply to the symmetrized zeta function in the following way:
\begin{eqnarray}
\label{ec1}  {\cal U}(\frac{1}{2}+ i t)=\exp[-2 i \arg[\xi_1(1+2 i t)]] , &
{\cal V}(\frac{1}{2}+ i t)=i\tan[\arg[\xi_1(1+2 i t)]+\frac{\pi}{2}], \nonumber \\ 
 F_1(\frac{1}{2}+ i t)=\tan[\arg(\xi_1(1+2 i t))+\frac{\pi}{4}]. & 
 \label{ecor1}
\end{eqnarray}
\label{extracor}
\end{corollary}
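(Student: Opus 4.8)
The plan is to specialise the definitions (\ref{udef2}), (\ref{tdef}) and (\ref{uthm5}) to $s=\frac{1}{2}+i t$ and then exploit the reality and functional-equation properties of $\xi_1$. Since $2s-1=2it$ and $2s=1+2it$ on the critical line, equation (\ref{udef2}) gives
\begin{equation}
{\cal U}(\tfrac{1}{2}+it)=\frac{\xi_1(2it)}{\xi_1(1+2it)}.
\label{plancor1}
\end{equation}
The whole corollary will then reduce to understanding this single ratio, after which the ${\cal V}$ and $F_1$ formulas follow purely by substitution into their defining relations.

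The key step I would establish first is the conjugation identity $\xi_1(2it)=\overline{\xi_1(1+2it)}$. This uses two standard facts: that $\xi_1(s)=\Gamma(s/2)\zeta(s)/\pi^{s/2}$ is real on the real axis, so Schwarz reflection gives $\overline{\xi_1(s)}=\xi_1(\bar s)$; and the completed functional equation $\xi_1(s)=\xi_1(1-s)$ (equivalent to the functional equation for $\zeta$ already invoked in the proof of Theorem~\ref{th51u}). Combining them,
\begin{equation}
\overline{\xi_1(1+2it)}=\xi_1(1-2it)=\xi_1(2it).
\label{plancor2}
\end{equation}
Writing $\xi_1(1+2it)=|\xi_1(1+2it)|\,e^{i\phi}$ with $\phi=\arg[\xi_1(1+2it)]$, the numerator of (\ref{plancor1}) is the complex conjugate of the denominator, so the modulus cancels and ${\cal U}(\tfrac12+it)=e^{-2i\phi}$, which is exactly the first claimed identity.

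The remaining two identities I expect to follow by elementary trigonometric manipulation. Substituting ${\cal U}=e^{-2i\phi}$ into (\ref{tdef}) and clearing $e^{i\phi}$ from numerator and denominator yields ${\cal V}=\frac{2\cos\phi}{2i\sin\phi}=-i\cot\phi=i\tan(\phi+\pi/2)$, using $\tan(\phi+\pi/2)=-\cot\phi$; this is the second identity. Feeding ${\cal V}=-i\cot\phi$ into the first form of $F_1$ in (\ref{uthm5}) gives $F_1=\frac{-i(\cot\phi+1)}{-i(\cot\phi-1)}=\frac{\cos\phi+\sin\phi}{\cos\phi-\sin\phi}=\tan(\phi+\pi/4)$, completing the proof. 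I do not anticipate a genuine obstacle here: the only point requiring care is the conjugation identity (\ref{plancor2}), which rests entirely on the real-analyticity and functional equation of $\xi_1$, and one should note that the formulas are to be read at generic $t$, away from $t=0$ where $\xi_1(1+2it)$ inherits the pole of $\zeta$ at its argument $1$.
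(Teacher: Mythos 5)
Your proof is correct and follows essentially the same route as the paper's: specialise ${\cal U}(s)=\xi_1(2s-1)/\xi_1(2s)$ to $s=\tfrac12+it$, use the symmetry of $\xi_1$ to identify the numerator as $\overline{\xi_1(1+2it)}$, and then read off the ${\cal V}$ and $F_1$ formulas from (\ref{tdef}) and (\ref{uthm5}). The only difference is one of detail: you make explicit the Schwarz-reflection step $\overline{\xi_1(s)}=\xi_1(\bar s)$ and carry out the trigonometric manipulations that the paper dismisses with ``follow readily from the first.''
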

\begin{proof}
We recall the definition (\ref{udef2}) of ${\cal U}(s)$:
\begin{equation}
{\cal U}(s)=\frac{\xi_1(2 s-1) }{\xi_1(2 s)}.
\label{ecor3}
\end{equation}
Putting $s=1/2+i t$ and using the symmetry of $\xi(s)$ under $s\rightarrow 1-s$, equation (\ref{ecor3}) yields
\begin{equation}
{\cal U}(\frac{1}{2}+ i t)=\frac{\xi_1(2 i t) }{\xi_1(1 + 2 i t))}=\frac{\overline{\xi_1(1+2 i t)} }{\xi_1(1+2 i t)}.
\label{ecor4}
\end{equation}
This is equivalent to the first of equations (\ref{ecor1}).
The second and third of equations (\ref{ecor1}) follow readily from the first.
\end{proof}

It follows from equations (\ref{ecor1}) that, if $t_1$ and $t_2$ are any two values of $t$ such that
$\arg[\xi_1(1+2 i t_2)]=\arg[\xi_1(1+2 i t_1)]+n \pi$, for any integer $n$, then ${\cal U}(\frac{1}{2}+ i t_2)={\cal U}(\frac{1}{2}+ i t_1)$, ${\cal V}(\frac{1}{2}+ i t_2)={\cal V}(\frac{1}{2}+ i t_1)$
and $F_1(\frac{1}{2}+ i t_2)=F_1(\frac{1}{2}+ i t_1)$.

The properties of ${\cal V}(s)$ , ${\cal U}(s)$   and $F_1(s)$ proved in Theorems \ref{th51t}, \ref{th51u} and \ref{thmf1}
are exemplified in Figs. \ref{fig98} and \ref{fig1444}. The former shows the behaviour of the distributions of phase for the three functions around the 98th zero of $\zeta(2 s-1/2)$. It will be noted that the three plots show the same lines bounding phase regions, but with
differing interpretations for each function. For example, the closed contour formed by the segments of constant phase $0$ and $\pi$ in the first case (${\cal V}(s)$), are preserved in the second case (${\cal U}(s)$), while the contour of constant amplitude $|{\cal V}(s)|=1$ has become a contour composed of
segments of constant phase ($\arg[{\cal U}(s)]=\pm \pi/2$). The region containing the pole and zero of $F_1(s)$ is more compact than the corresponding regions for the other two functions.

\begin{figure}[h]
\includegraphics[width=3.0in]{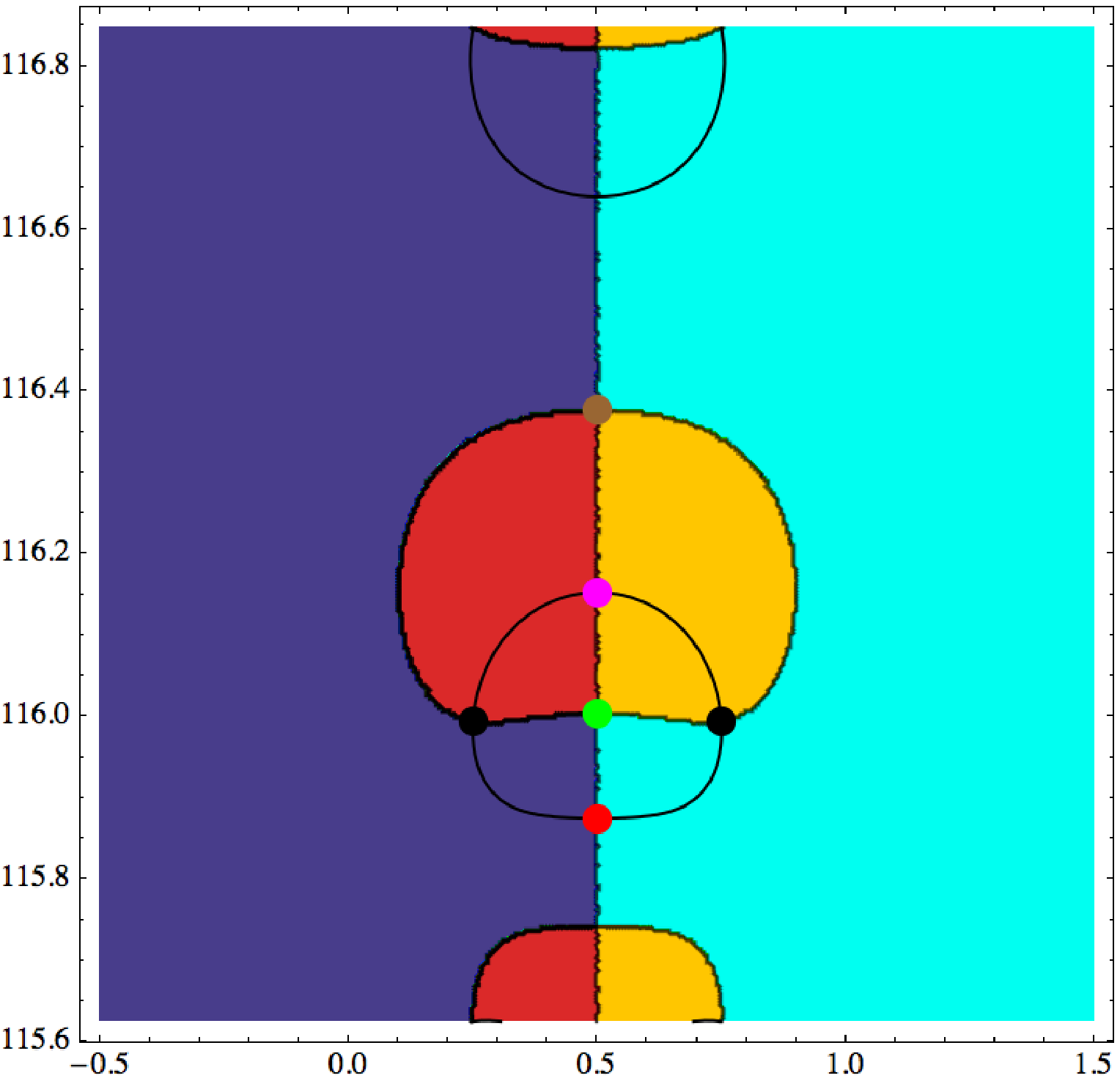}~~\includegraphics[width=3.0in]{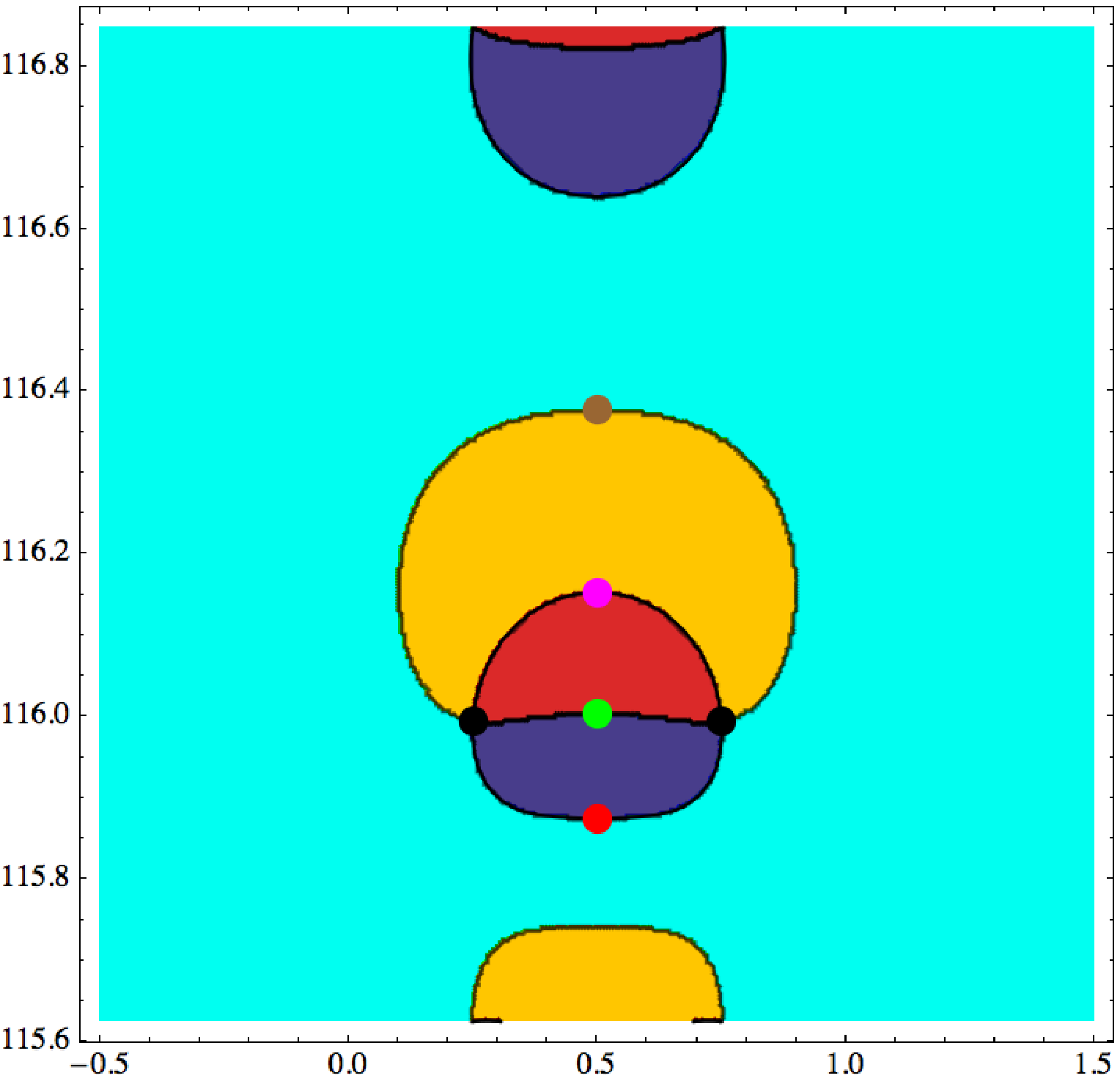}\\
\includegraphics[width=3.0in]{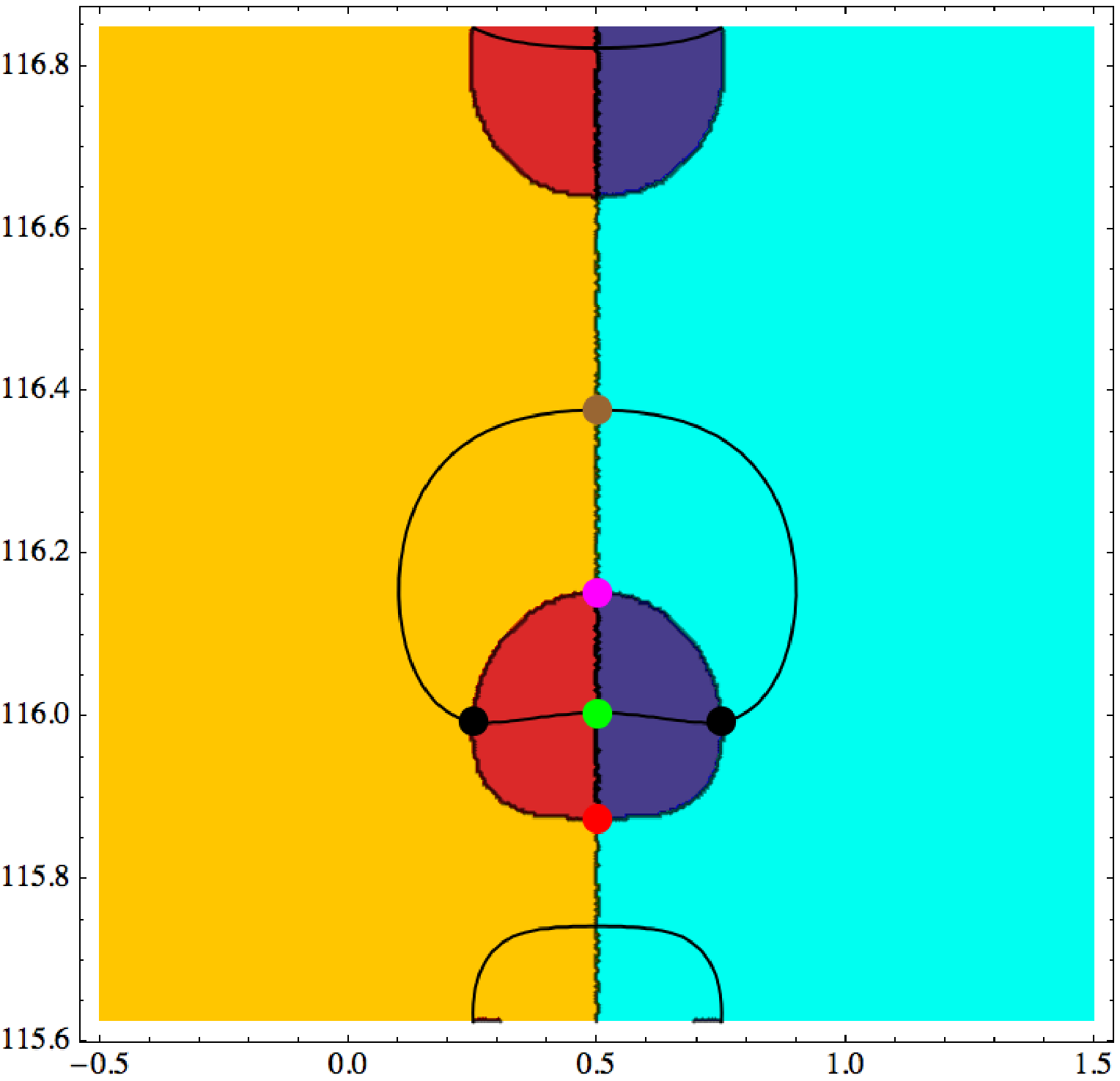}
\caption{\label{fig98}  Plots of the phase of ${\cal V}(s)$ (top left), ${\cal U}(s)$ (top right)
and $F_1(s)$  in the complex plane, in the vicinity of the 98th zero of $\zeta(2 s)$, denoted by a dot (together with its reflection in the critical line).  The phase is denoted by colour according to which
quadrant it lies in:  yellow-first quadrant,  red- second, purple- third, and light blue- fourth.
 The black contours  through the dots show lines along which $|{\cal V}(s)|=1$ (top two  plots)
 and $|F_1(s)|=1$ (lower plot). The coloured dots are: zero (green) and pole (brown) of
 ${\cal V}(s)$; pole (red) and zero (magenta) of $F_1(s)$.  }\end{figure}

Figure \ref{fig1444} shows the most extreme variation among the 1517 phase plots constructed in the range of values of $t$ up to 1000. The phase plots around the 1444th and 1445th zeros of $\zeta(2s -1/2)$ have become linked, more strongly in the cases of ${\cal V}(s)$ and ${\cal U}(s)$
than for $F_1(s)$. The two former functions have phase distributions with inner and outer bounding lines, with the zeros and poles of ${\cal U}(s)$ occurring on both boundaries.

\begin{figure}[h]
\includegraphics[width=3.0in]{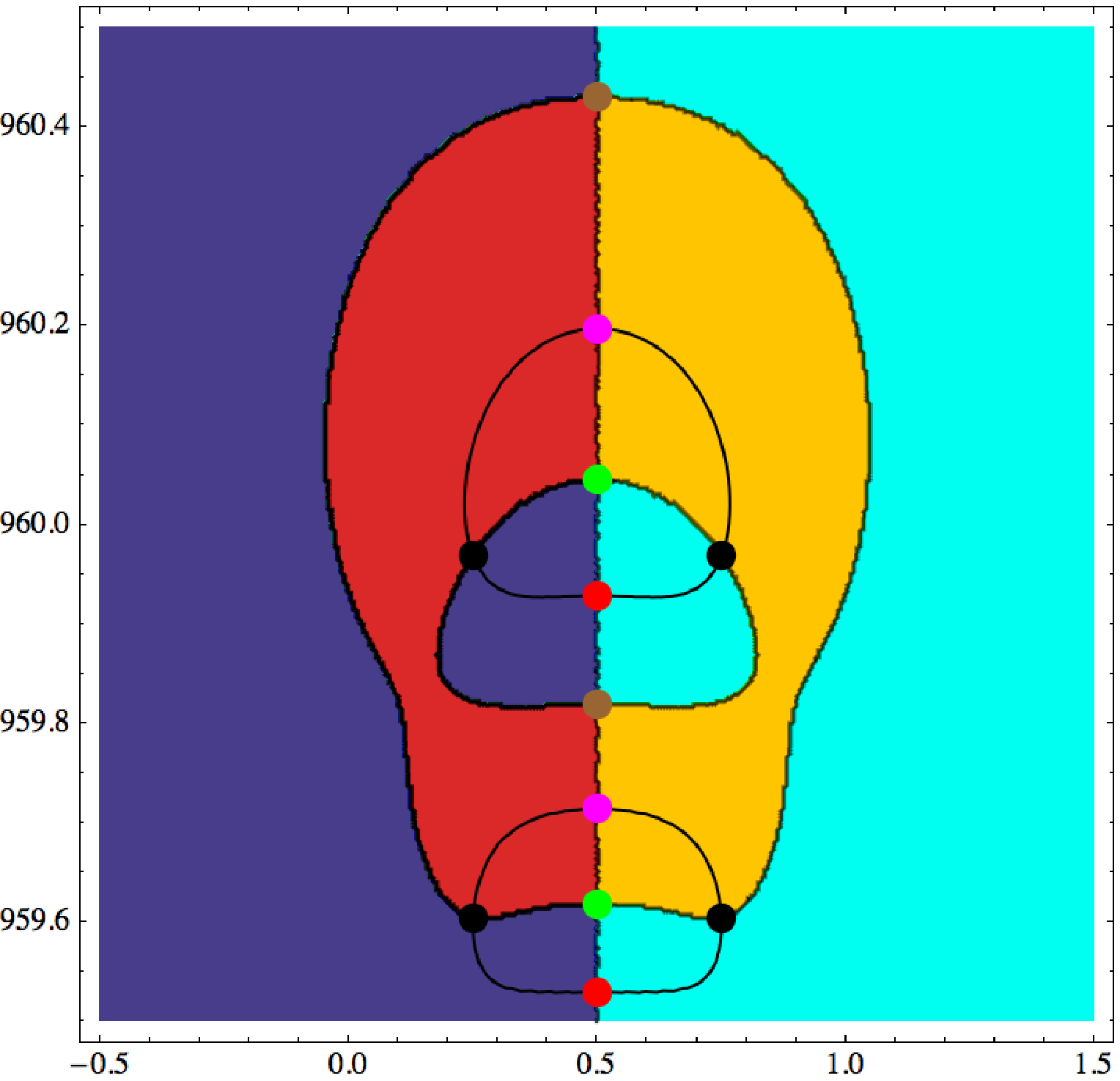}~~\includegraphics[width=3.0in]{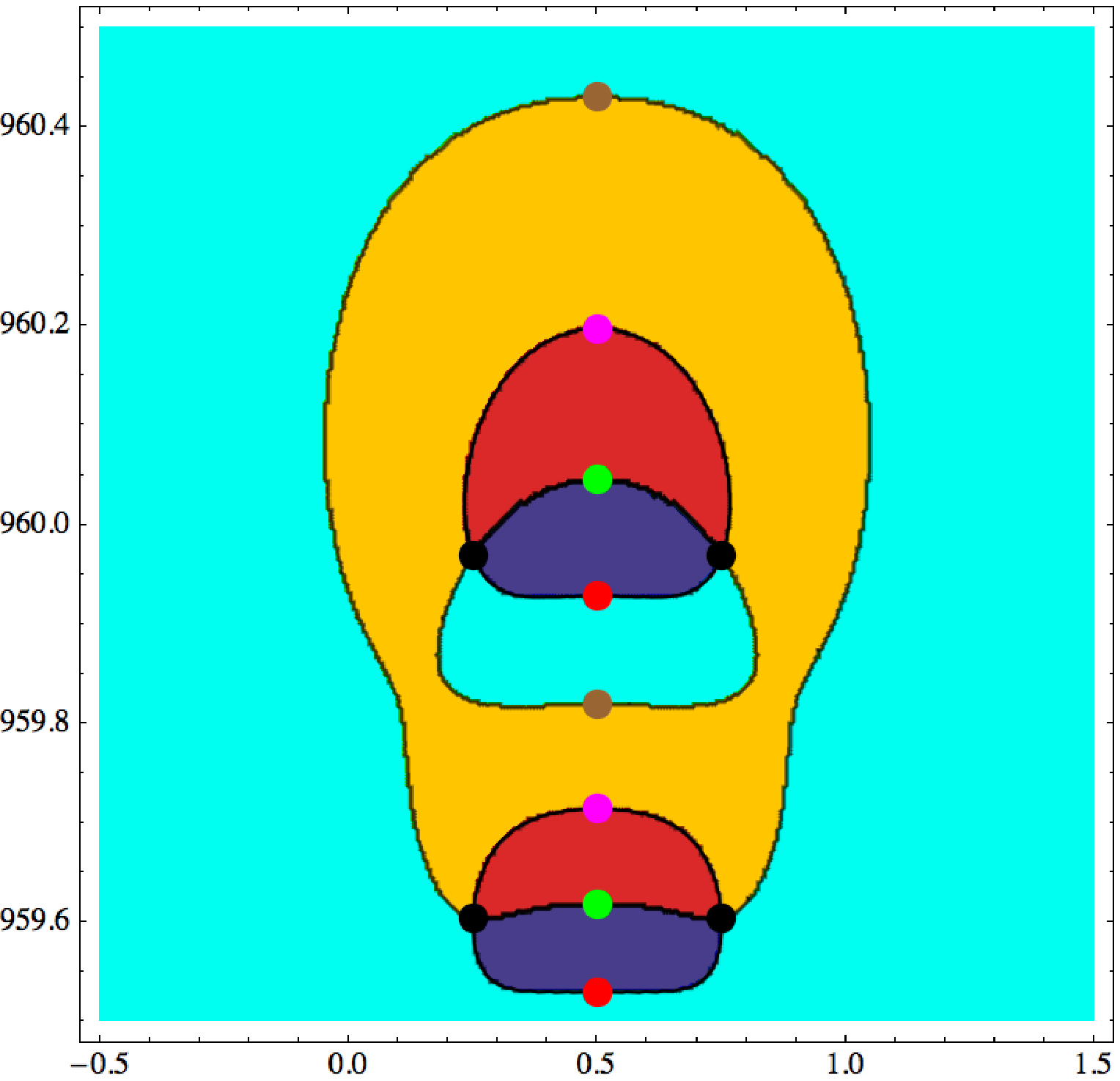}\\
\includegraphics[width=3.0in]{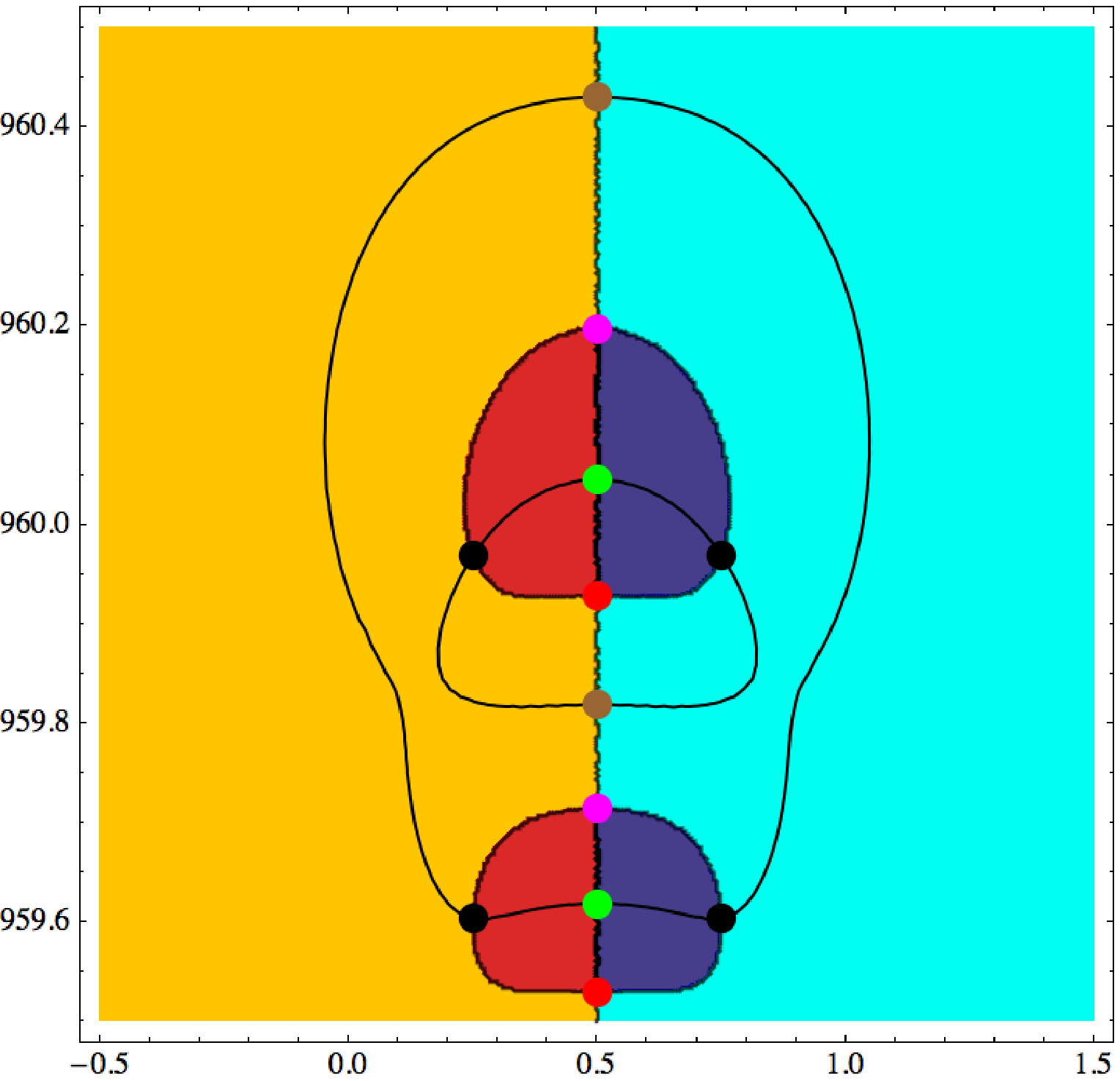}
\caption{\label{fig1444}  Plots of the phase of ${\cal V}(s)$ (top left), ${\cal U}(s)$ (top right)
and $F_1(s)$  in the complex plane, in the vicinity of the 1444th and 1445th zeros of $\zeta(2 s)$, denoted by  dots (together with their reflections in the critical line).  . The phase is denoted by colour according to which
quadrant it lies in:  yellow-first quadrant,  red- second, purple- third, and light blue- fourth.
 The black contours  through the dots show lines along which $|{\cal V}(s)|=1$ (top two  plots)
 and $|F_1(s)|=1$ (lower plot). The coloured dots are: zero (green) and pole (brown) of
 ${\cal V}(s)$; pole (red) and zero (magenta) of $F_1(s)$.  }\end{figure}

\begin{figure}[h]
\includegraphics[width=3.0in]{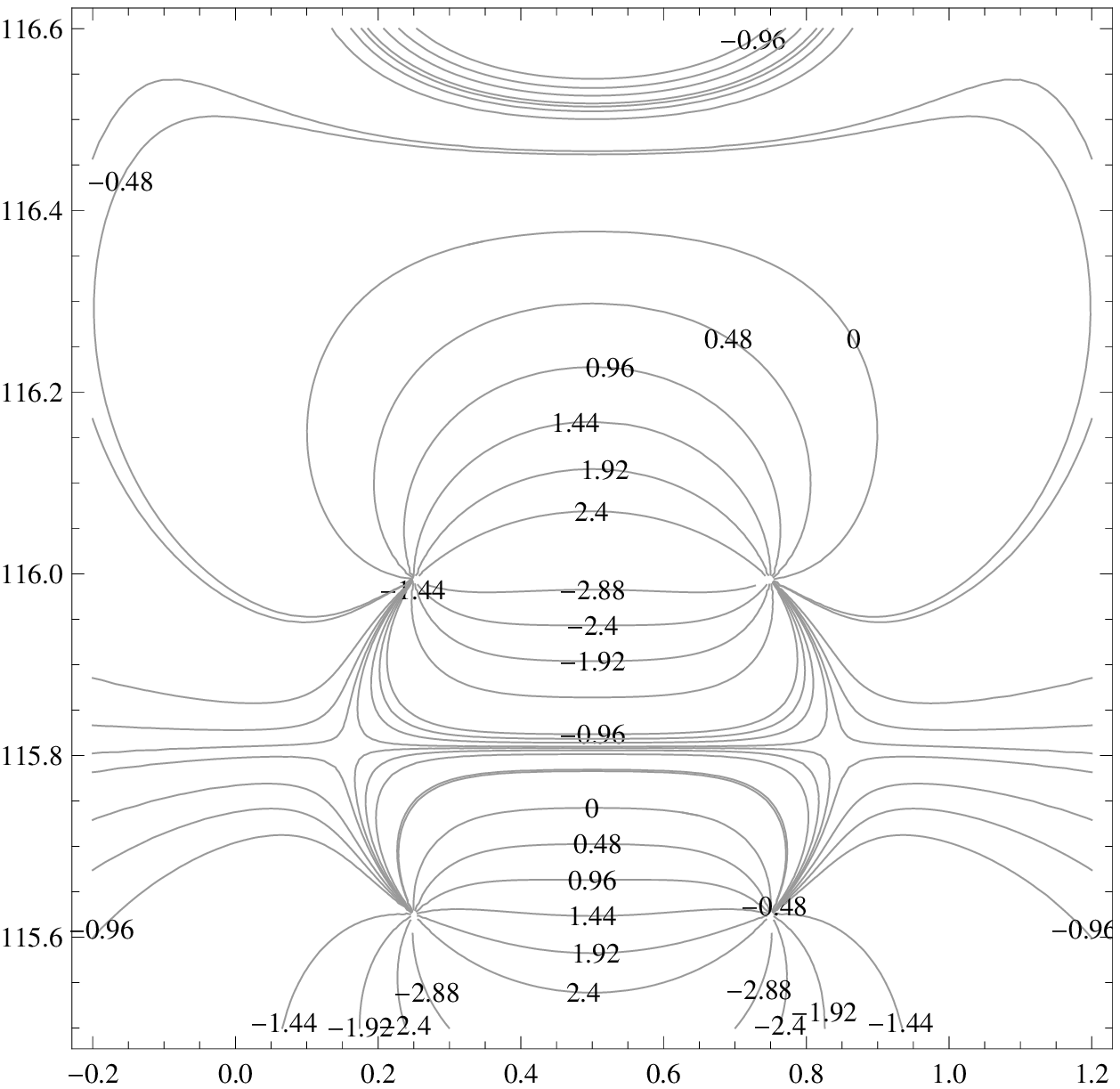}~~\includegraphics[width=3.0in]{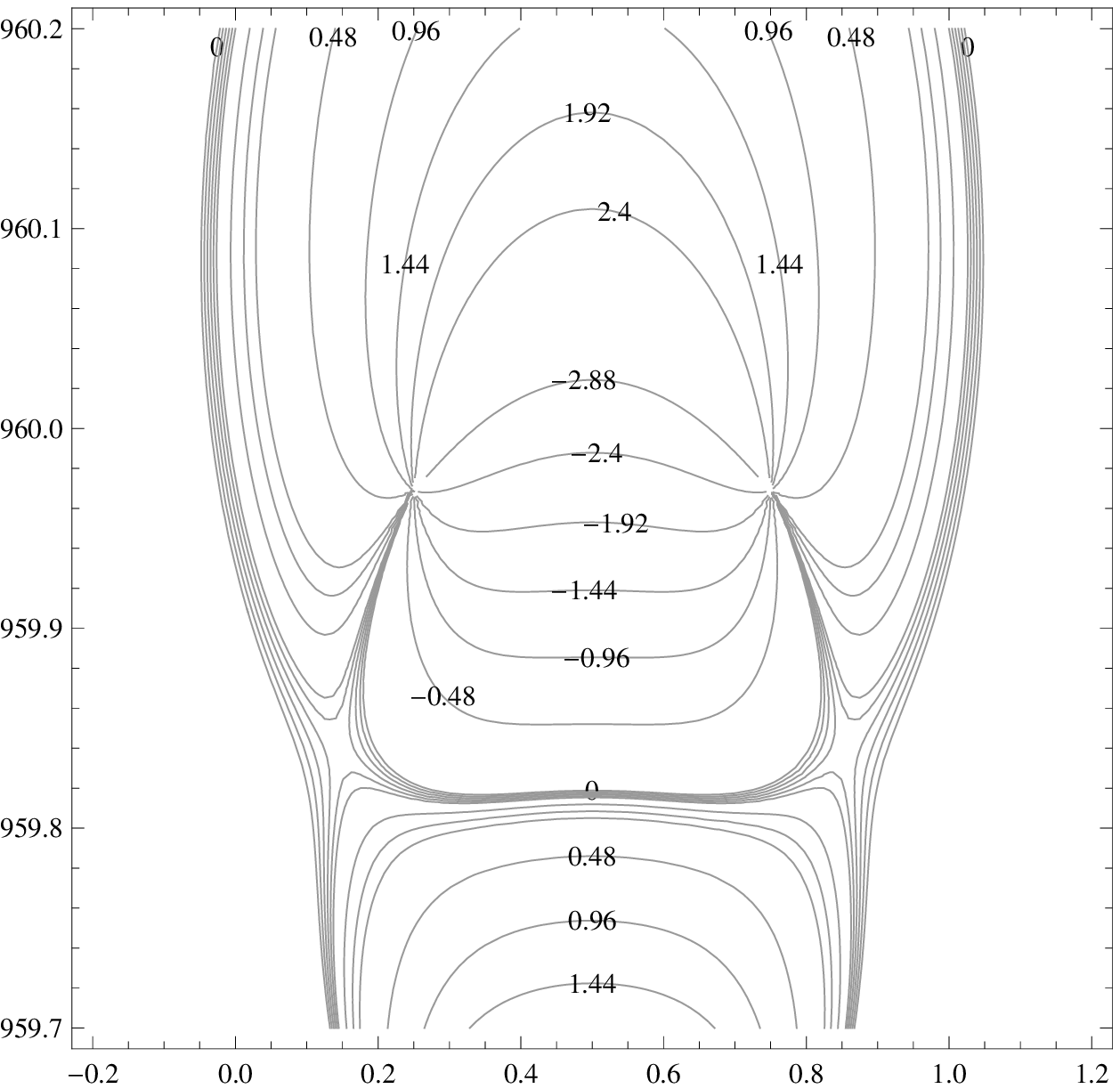}
\caption{\label{fig-derivz}  Plots showing lines of constant phase of ${\cal U}(s)$ in the vicinity of zeros of ${\cal U}'(s)$, corresponding to (left)  Fig. \ref{fig98} and (right) Fig. \ref{fig1444}.  
In the former case, the derivative zero occurs near 0.17+115.80 i, with $\arg {\cal U}(s)$ near -0.79, while in the latter case, it occurs near 0.15+959.84 i, with $\arg {\cal U}(s)$ near  0.036. }\end{figure}

We give in Fig. \ref{fig-derivz}  diagrams of contours of constant phase of ${\cal U}(s)$, showing the regions around points where ${\cal U}'(s)=0$. These points occur where lines of constant phase running from a pole of ${\cal U}(s)$ to a zero separate from lines of constant phase associated with the pole-zero pair lying above or below. In the typical case at left, the point where ${\cal U}'(s)=0$ corresponds to a phase in the fourth quadrant, while in the rare case at right the phase value lies in the first quadrant.

Our study of the zeros and poles of the three functions  ${\cal V}(s)$, ${\cal U}(s)$ 
and $F_1(s)$, for $t$ ranging up to 1000 suggest the following properties merit invesitgation:
\begin{itemize}
\item $\Im({\cal V} (1/2+i t)$ is a monotonic increasing function of $t$ above $t=2.94334$, with its increasing intervals interrupted only by sign reversals at its poles, which are of first order;
\item $|{\cal U}(\sigma+i t)|$ above $t=2.94334$ is always larger than unity to the left of the critical line, and smaller than unity to the right of the critical line;
\item along the critical line, we find groups of zeros and poles always occurring in the following order: pole of $F_1$, zero of ${\cal V}$, zero of  $F_1$, pole of ${\cal V}$;
\item if an upper limit on $t$ is chosen to end at a point where such a group is wholly included,
the numbers of zeros and poles of $F_1$ and ${\cal V}$ are exactly the same, and furthermore
agree with the number of zeros of $\zeta (2 s-1/2)$ if the upper limit also includes the associated pole and zero of ${\cal U}(s)$;
\item every pole (zero) of ${\cal U}(s)$ lies at the intersection  of a contour $|{\cal V}(s)|=1$
with a contour $\arg {\cal V}(s)=\pi$ ($\arg {\cal V}(s)=0$);
\item every pole (zero) of ${\cal U}(s)$ lies at the intersection  of a contour $|F_1(s)|=1$
with a contour $\arg F_1(s)=\pi/2$ ($\arg F_1(s)=-\pi/2$).
\end{itemize}
Proofs of all of these properties are put forward in Section 4 and the Appendices.

The alignment between values of $t$ for which ${\cal U}(s)$ has zeros and poles, and between corresponding values for ${\cal T}_+(s)$ and  ${\cal T}_-(s)$ is not subject to strict rules. The zero of
${\cal T}_-(s)$ lies below that of $\zeta (2 s-1/2)$ four times in the first 1517. The zero of $\zeta (2 s-1/2)$ lies below that of ${\cal T}_+(s)$ 235 times in  the first 1517 occasions.

Since the number of zeros of $\zeta(2 s-1/2)$, ${\cal T}_-(s)$ and ${\cal T}_+(s)$ up to $t=1000$
is the same, then the mean gap between zeros  for all three functions is the same (0.655).
The square root of the variance, expressed as a fraction of the mean gap is smallest for  ${\cal T}_-(s)$ (0.304), slightly larger for  ${\cal T}_+(s)$ (0.348), and more than twice as large for $\zeta(2 s-1/2)$ (0.655). This no doubt is a consequence of the trend for $\zeta (2 s -1/2)$ to vary more regularly with $t$ as $\sigma$ moves away from 1/2.

It is well known that the gaps between the zeros of the Riemann zeta function are distributed according to the Wigner surmise, following the pattern expected for the Gaussian unitary ensemble
(Dietz \& Zyczkowski 1991; Bogomolny \& Leboeuf 1994). Since the standard deviations of the gaps between zeros of ${\cal T}_-(s)$  and ${\cal T}_+(s)$ are distinctly smaller than that for
$\zeta(2 s-1/2)$, even though the mean gaps are the same, it is evident that the former two functions do not have gap distributions between zeros obeying the Wigner surmise.

A simple example having some of the properties of  ${\cal T}_+(s)$ has been commented on by Suzuki (2009). The analogue of  ${\cal T}_+(s)$  is $2 \cos (i(s-1/2))$; the natural analogue of ${\cal T}_-(s)$ is  $2 \sin (i(s-1/2))$, while that of ${\cal V}(s)$ is $\cot (i(s-1/2))$. This is real on the critical line rather than imaginary, but has first order zeros strictly alternating with first oder poles. The poles and zeros are however uniformly distributed along the critical line.

\section{Bounds on the Modulus of the Function $\log|{\cal U}(s)|$}
We commence with the proposition enunciated above that the magnitude of the analytic function ${\cal U}(s)$ is strictly larger than unity to the left of the critical line, and, by the functional equation (\ref{uthm1}), strictly less than unity in the region to the right of it, provided that $t>2.94334$.
We first present a proof for a related proposition, using the language of electrostatics in two dimensions. This of course is relevant to the case of analytic functions of a complex variable, given the correspondence between the governing equations of the two fields (Morse and Feshbach,1981, Roos, 1969, Bremermann, 1965).
\begin{theorem} Suppose $u(\mathbf{x})$ is the logarithmic electrostatic potential for a two dimensional problem in which all positive charges  lie in $x<0$ and all negative charges are mirror images of the
positive charges lying in $x<0$, so that $x=0$ is an equipotential line with $u(0,y)=0$ for all $y$.
Then $u(x,y)>0$ if $x<0$ and $u(x,y)<0$ if $x>0$.
\label{thelectrostatics}
\end{theorem}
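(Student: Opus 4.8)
The plan is to exploit the linearity of the logarithmic potential (superposition) together with the pairing hypothesis, so that the entire problem collapses to a single elementary inequality for one positive charge and its mirror image. First I would fix the standard convention in which the potential of a charge $q$ located at $\mathbf{x}_0=(x_0,y_0)$ is $-q\log|\mathbf{x}-\mathbf{x}_0|$, so that the potential tends to $+\infty$ at a positive charge. By hypothesis every positive charge, say $+q$ sitting at $(-a,y_0)$ with $a>0$ (since it lies in $x<0$), is matched by exactly one negative charge $-q$ at its reflected location $(a,y_0)$. Because $u$ is a sum (or integral) of such terms, it suffices to establish the claimed sign for the contribution of a single pair and then add.

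Next I would compute the contribution of one pair. Writing $r_+$ and $r_-$ for the distances from the field point $(x,y)$ to the positive charge at $(-a,y_0)$ and the negative charge at $(a,y_0)$ respectively, the pair contributes
\[
u_{\mathrm{pair}}(x,y)=q\log\frac{r_-}{r_+}=\frac{q}{2}\log\frac{(x-a)^2+(y-y_0)^2}{(x+a)^2+(y-y_0)^2}.
\]
The sign is governed entirely by the difference of squared distances,
\[
r_-^2-r_+^2=(x-a)^2-(x+a)^2=-4ax.
\]
Since $a>0$ and $q>0$, this quantity is positive exactly when $x<0$, negative exactly when $x>0$, and zero when $x=0$. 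Hence $u_{\mathrm{pair}}>0$ for $x<0$, $u_{\mathrm{pair}}<0$ for $x>0$, and $u_{\mathrm{pair}}=0$ on $x=0$, which also recovers the stated equipotential along the line $x=0$.

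Finally I would invoke superposition: because every pair contributes a term of the same definite sign on each side of the line $x=0$, the total potential inherits that sign, giving $u(x,y)>0$ for $x<0$ and $u(x,y)<0$ for $x>0$. The strictness survives because $a>0$ \emph{strictly} for each charge, all positive charges lying in the open half-plane $x<0$. The one genuine subtlety, and the step I expect to require the most care, is that the charge distribution is in general infinite, so the ``sum'' is really a convergent series (or integral) rather than a finite one; here I would appeal to the convergence of the logarithmic potential established in Appendix A to guarantee that the termwise sign argument passes to the limit and that no rearrangement issue disturbs the conclusion.
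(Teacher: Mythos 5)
Your proposal is correct and follows essentially the same route as the paper: decompose the potential into charge--mirror-image pairs by superposition, and observe that each pair's contribution $\log(r_-/r_+)$ has a sign determined by which half-plane the field point lies in (your explicit computation $r_-^2-r_+^2=-4ax$ is just the concrete form of the paper's appeal to monotonicity of $\ln R$ together with the distance comparison $|\mathbf{x}+\mathbf{x}_p|<|\mathbf{x}-\mathbf{x}_p|$). Your closing remark on convergence of the infinite sum is a welcome extra precaution that the paper defers to its Appendix 1 rather than addressing inside the theorem's proof.
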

\begin{proof}
Given an (inhomogeneous) Laplace equation of the form
\begin{equation}
\nabla^2 u(\mathbf{x}) =2\pi  \sum_p m_p[ \delta (\mathbf{x} -\mathbf{x}_p)-\delta (\mathbf{x} +\mathbf{x}_p)]
\end{equation}
where $\mathbf{x} = (x,y)$, $m_p$  denotes a positive coefficient, and all the points $\mathbf{x}_p$ lie in the $x>0$  half-plane,
we write the solution as
\begin{equation}
u(\mathbf{x}) = \sum_p m_p[V(| \mathbf{x} - \mathbf{x}_p|) - V(|\mathbf{x} + \mathbf{x}_p|)]
\end{equation}
where
$$
V(R) =  \ln R= \ln |\mathbf{x} - \mathbf{x}_p| ~.
$$
Now, $V(R)$ is a monotonically increasing function ($V(a)>V(b)$ for any $a>b$), so for all points
in the right half-plane (which is where $|\mathbf{x} + \mathbf{x}_p|<|\mathbf{x} - \mathbf{x}_p|$) we have
$V(|\mathbf{x} - \mathbf{x}_p|) < V(|\mathbf{x} + \mathbf{x}_p|)$. So for all $\mathbf{x}_p$,
$\mathbf{x}$ lying in the right half-plane we have
$$
V(|\mathbf{x} - \mathbf{x}_p|) - V(|\mathbf{x} + \mathbf{x}_p|) < 0
$$
and therefore
\begin{equation}
u(\mathbf{x}) = \sum_p m_p[V(|\mathbf{x} - \mathbf{x}_p|) - V(|\mathbf{x} + \mathbf{x}_p|)] < 0 ~~~~\forall x > 0 ~~.
\end{equation}
\end{proof}

This result is of course physically obvious: all the positive charges lie to the left of an equipotential line, and the negative charges to its right. Therefore the potential must increase as one moves to the left of the equipotential line and towards the positive charges, and must always exceed the potential at the mirror point to the right of the critical line.

The adaptation of this result to the case of $\log |{\cal U}(s)|$ has to take into account the exceptional pole and zero lying on the axis $t=0$. These  give rise to the atypical region evident in Fig. \ref{fig03t} for ${\cal V}(s)$, which, using the discussion in Theorem 3.2, carries over to ${\cal V}(s)$.
We also note that the positive coefficients
$m_p$ take into account the possible existence of poles  and zeros of multiplicity larger than unity 
(with however $m_p=1$ for all  poles and zeros in the range of $t$ which has been numerically investigated). 

\begin{corollary}
The function $|{\cal U}(s)|$ nowhere takes the value unity off the critical line in the finite part of the
complex plane of $s$, if $t>2.94334$.
\label{corabsu}
\end{corollary}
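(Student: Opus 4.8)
The plan is to read $\log|{\cal U}(s)|$ as the logarithmic potential of a mirror-symmetric charge configuration and then invoke Theorem \ref{thelectrostatics}. First I would record that, away from the zeros and poles of ${\cal U}(s)$, the function $u(s):=\log|{\cal U}(s)|$ is harmonic, and that by the functional equation (\ref{uthm1}) it is antisymmetric under the reflection $s\to 1-\bar s$ in the critical line, so that $u\equiv 0$ there. From the definition (\ref{udef2}), $u\to-\infty$ at each zero of ${\cal U}(s)$, i.e. at $s=(\rho+1)/2$ where $\rho$ runs over the nontrivial zeros of $\zeta$, and $u\to+\infty$ at each pole $s=\rho/2$; in the electrostatic picture these are unit positive and negative charges respectively. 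In addition there is the exceptional positive charge (zero of ${\cal U}$) at $s=0$ and negative charge (pole) at $s=1$, both on the real axis.

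The geometric input is the classical, unconditional fact, quoted earlier in the paper, that the nontrivial zeros satisfy $0<\Re\rho<1$. This forces every pole $s=\rho/2$ into the strip $0<\sigma<1/2$ (left of the critical line) and every zero $s=(\rho+1)/2$ into $1/2<\sigma<1$ (right of it). The functional equation (\ref{uthm1}) together with conjugate symmetry pairs each pole with a zero that is exactly its mirror image in the line $\sigma=1/2$, so the regular charges form precisely the positive/negative mirror-image pattern required by Theorem \ref{thelectrostatics}, with the positive charges on the right. Applying that theorem, in the orientation of its proof, gives $u<0$ for $\sigma>1/2$ and $u>0$ for $\sigma<1/2$, hence $|{\cal U}(s)|\neq1$ off the line --- provided the two exceptional charges are neutralised.

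The only obstruction is that the exceptional zero at $s=0$ and pole at $s=1$ sit on the wrong sides of the line (positive on the left, negative on the right), producing the atypical region near $t=0$ seen in the proof of Theorem \ref{th51u}. To dispose of them I would pass to the region $R=\{\sigma>1/2,\ t>2.94334\}$, on which ${\cal U}(s)$ is analytic --- its poles all lie in $\sigma<1/2$ or on the real axis --- so that $u=\log|{\cal U}|$ is subharmonic there. On $\partial R$ one has $u=0$ on the edge $\sigma=1/2$, $u\to-\infty$ as $\sigma\to\infty$ (since $|{\cal U}(s)|\to0$ by Theorem \ref{th51u}), and, granting the boundary estimate $u<0$ on the horizontal edge $t=2.94334$, the supremum of $u$ over $\partial R$ is $0$, attained only on the critical edge. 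The maximum principle for subharmonic functions then forces $u<0$ throughout the interior of $R$, since an interior maximum equal to the boundary supremum would make $u$ constant, contradicting $u\to-\infty$. Antisymmetry under $s\to1-\bar s$ yields $u>0$ in the reflected region, and Corollary \ref{corabsu} follows.

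I expect the real difficulty to lie in two places. First, the logarithmic potential is an infinite, only conditionally summable, sum over the charges, so one must justify that $u$ genuinely behaves as the potential of this configuration and that the maximum principle may be applied on the unbounded region $R$ (a Phragm\'en--Lindel\"of control at infinity is needed beyond the pointwise decay $|{\cal U}|\to0$); the convergence is the subject of the first Appendix. Second, and most seriously, the scheme reduces everything to the single boundary estimate $u<0$ along $t=2.94334$, $\sigma\ge1/2$ --- equivalently, that above this height the mirror-paired bulk charges dominate the wrong-sided exceptional pair at $s=0,1$. This is precisely where the specific numerical threshold enters, and it is the step least susceptible to a purely ``self-evident'' electrostatic argument.
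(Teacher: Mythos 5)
You set up the same electrostatic framework as the paper: $\log|{\cal U}(s)|$ as a logarithmic potential, antisymmetry under $s\to 1-\bar s$ from the functional equation (\ref{uthm1}), zeros of ${\cal U}$ at $(\rho+1)/2$ and poles at $\rho/2$ forced into $\sigma>1/2$ and $\sigma<1/2$ respectively because all nontrivial zeros lie in the open strip $0<\Re\rho<1$, plus the wrong-sided exceptional pair at $s=0$ and $s=1$; your mirror pairing via the functional equation is in fact slightly more careful than the paper's expansion (\ref{corabsu1}), which pairs each pole $s_p$ with the translated zero $s_p+1/2$. The genuine gap is exactly where you admit it is: the edge estimate $u=\log|{\cal U}|<0$ on $\{t=2.94334,\ \sigma>1/2\}$ is assumed (``granting''), not proved. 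That inequality is not a peripheral technicality; it is the quantitative heart of the corollary, the only place the threshold $2.94334$ can enter, and it is created precisely by the exceptional charges that Theorem \ref{thelectrostatics} cannot handle (they form a mirror pair of the wrong orientation, so the theorem applied to them alone gives the opposite sign). Granting it amounts to granting the corollary on a horizontal slice, and your maximum principle then merely propagates that slice upward.

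The paper does supply a mechanism for this step, and that is the essential difference between the two arguments. It pulls out the $p=1$ term of (\ref{corabsu1}) (the pole/zero pair at $t_1\approx 7.06736$) to counterbalance the exceptional factor $(s-1)/s$, applies Theorem \ref{thelectrostatics} only to the sum over $p\ge2$, and computes in closed form where the resulting three-factor expression (\ref{corabsu2}) has unit modulus: by (\ref{corabsu3})--(\ref{corabsu4}) this happens only on $\sigma=1/2$ or on a hyperbola whose lowest point is $t\approx 4.0727$, which gives an explicit analytic threshold; numerical inclusion of the $p\ge2$ terms then lowers it to $2.94334$. Your maximum-principle/Phragm\'en--Lindel\"of reduction of the two-dimensional statement to a one-dimensional edge estimate is a legitimate and even attractive alternative scheme (the subharmonicity of $\log|{\cal U}|$ in the quarter-plane is correct, since all poles lie in $\sigma<1/2$ or at $t=0$), and it could plausibly be completed in the same semi-numerical spirit as the paper by verifying the edge inequality numerically on a compact segment and using the asymptotics (\ref{udef3}) for large $\sigma$, together with the growth bounds required at infinity. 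But as written, the proposal contains no analogue of the paper's counterbalancing computation, so its central inequality remains unproven.
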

\begin{proof}
We apply the expansion of Theorem \ref{thelectrostatics} to the logarithmic modulus of 
${\cal U}(s)$, taking into account its zero and pole on the axis $\Im (t)=0$:
\begin{equation}
\log |{\cal U}(s)|=-\Re\left[ \log \left( \frac{s-1}{s}\right)\right]-
\sum_{p=1}^\infty m_p \Re\left[\log\left(\frac{s-s_p}{s-1/2-s_p} \right)+
\log\left(\frac{s-\bar{s_p}}{s-1/2-\bar{s_p}} \right)
\right]
\label{corabsu1}
\end{equation}
Here the $s_p$ denotes the $p$ th pole of  ${\cal U}(s)$, located at the point $s=\sigma_p+i t_p$,
where $\sigma_p=1/4$ and $m_p=1$  for all poles encountered numerically, and $2 t_p$ is the imaginary part of the $p$th zero of $\zeta(s)$. Note that the zeros of  ${\cal U}(s)$ in the upper half plane and its zeros and poles in the lower half plane have been included explicitly in equation (\ref{corabsu1}).
Note also that the representation (\ref{corabsu1}) satisfies $|{\cal U}(1/2+ it)|=0$. (An alternative route to the equation (\ref{corabsu1}) is to use the product representations
incorporating the zeros
of the numerator and denominator in the definition of ${\cal U}(s)$, (\ref{udef2}), and to take the logarithm of that quotient.)

We note that the result of Theorem \ref{thelectrostatics} applies to the sum over $p$ in equation (\ref{corabsu1}), or any subset of it. We pick out the term $p=1$ to counterbalance the exceptional term, and apply Theorem \ref{thelectrostatics}  to the sum over $p$ from two to infinity. We then examine the value of the first term in (\ref{corabsu1}) combined with the term $p=1$, or
the modulus of
\begin{equation}
\left( \frac{s-1}{s}\right) \left(\frac{s-s_1}{s-1/2-s_1} \right)
\left( \frac{s-\bar{s_1}}{s-1/2-\bar{s_1}} \right) .
\label{corabsu2}
\end{equation}
The modulus of this expression is unity when
\begin{equation}
(1-2\sigma)(1+16 t_1^2)(1-16\sigma+16\sigma^2-48 t^2+16 t_1^2)=0,
\label{corabsu3}
\end{equation}
which occurs when $\sigma=1/2$ or
\begin{equation}
t=\frac{1}{\sqrt{3}} \sqrt {t_1^2-\frac{3}{16}+\left(\sigma-\frac{1}{2}\right)^2}.
\label{corabsu4}
\end{equation}
Replacing $t_1$ by its value of approximately 7.06736, this gives a value of 4.07268 for $\sigma=1/2$, which is a minimum with respect to variation of $t$ from (\ref{corabsu4})
with $\sigma$. (In fact, the equation (\ref{corabsu4}) gives a value for $t$ which varies only from 4.0727 to 4.1134 as $t$ ranges from -0.5 to 1.5.) 

The simple analytic estimate of 4.07268  just obtained is replaced by the numerical value 2.94334 when the terms $p=2,3,\ldots$ are taken into account in equation (\ref{corabsu1}).
\end{proof}

The expansion (\ref{corabsu1}) is absolutely convergent for any finite value of $t$. Indeed,  we divide the infinite sum into an infinite number of terms with $t_p>>t$, and a finite sum. The infinite series then converges as a sum over $1/t_p^2$, given that $\sigma_p$ lies in the open interval $(1/2,1)$. Any density of zeros function less strong than the first power of $t$ will leave the expansion absolutely convergent. The accuracy of the representation (\ref{corabsu1}) can easily be verified numerically by direct summation, given a table of zeros of $\zeta (s)$. (Further comments expanding these remarks may be found in  Appendix 1.)
\begin{corollary}
All zeros of ${\cal T}_+(s)$ and ${\cal T}_-(s)$ lie on the critical line, and are of order unity.
\label{corol2}
\end{corollary}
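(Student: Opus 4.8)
The plan is to reduce every question about ${\cal T}_+$ and ${\cal T}_-$ to the single function ${\cal U}(s)$, for which Corollary~\ref{corabsu} already controls the level set $|{\cal U}(s)|=1$. From the definitions (\ref{n26a}), (\ref{n27}) and (\ref{udef2}) I would first write
\[
{\cal T}_\pm(s)=\frac{1}{4}\,\xi_1(2s)\,\bigl[\,1\pm{\cal U}(s)\,\bigr].
\]
Wherever $\xi_1(2s)$ is finite and non-zero this shows ${\cal T}_+(s)=0\iff{\cal U}(s)=-1$ and ${\cal T}_-(s)=0\iff{\cal U}(s)=+1$. At a zero of $\xi_1(2s)$ (a non-trivial zero of $\zeta(2s)$, necessarily with $0<\sigma<1/2$) one has ${\cal T}_\pm(s)=\pm\frac{1}{4}\xi_1(2s-1)$, which cannot vanish because the zeros of $\xi_1(2s-1)$ occupy the disjoint strip $1/2<\sigma<1$; and the poles of $\xi_1(2s)$ at $s=0,1/2$ are not zeros of ${\cal T}_\pm$ either. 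Consequently every zero of ${\cal T}_+$ or ${\cal T}_-$ is a point at which ${\cal U}(s)=\mp1$, and in particular $|{\cal U}(s)|=1$.

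The location of the zeros then follows at once. Since $|{\cal U}(s)|=1$ at each such point, Corollary~\ref{corabsu} forces $\sigma=1/2$ for every zero with $t>2.94334$. The reflection symmetries $s\to1-s$ (built into the definitions) and $s\to\bar{s}$ (real Dirichlet coefficients) reduce the remaining, bounded region $|t|\le2.94334$ to a compact box, in which the finitely many zeros are located and shown to lie on the line by direct evaluation. This establishes the first assertion.

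For the order I would argue that, near a zero $s_0$ with $\xi_1(2s_0)$ finite and non-zero, the order of the zero of ${\cal T}_\pm$ equals the order of the zero of $1\pm{\cal U}(s)$, which, since ${\cal U}(s_0)=\mp1$, is one precisely when ${\cal U}'(s_0)\neq0$. So simplicity reduces to the non-vanishing of ${\cal U}'$ at the points of the critical line where $|{\cal U}|=1$. Here I would invoke the electrostatic picture of Theorem~\ref{thelectrostatics}: the potential $\phi(s)=\log|{\cal U}(s)|$ is harmonic away from the zeros and poles of ${\cal U}$, vanishes on the critical line, and by Corollary~\ref{corabsu} is non-zero off it for $t>2.94334$, so the line is the entire zero level set of $\phi$ there. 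As $\phi\equiv0$ along the line its tangential derivative vanishes, so a zero of ${\cal U}'$ on the line would be a critical point of $\phi$, at which the level set $\{\phi=0\}$ must cross itself and throw a branch transverse to the line --- putting zeros of $\phi$ off the critical line, in contradiction with Corollary~\ref{corabsu}. Hence the normal derivative $\partial_\sigma\phi$ is strictly non-zero (indeed negative, since $\phi>0$ to the left and $\phi<0$ to the right), so $|{\cal U}'/{\cal U}|=|\nabla\phi|\neq0$, and as ${\cal U}=\mp1\neq0$ there we obtain ${\cal U}'(s_0)\neq0$. Every zero of ${\cal T}_\pm$ is therefore simple.

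The main obstacle is the exceptional region $|t|\le2.94334$. There the isolated real-axis pole--zero pair of ${\cal U}$ --- the factor $(s-1)/s$ separated out in (\ref{corabsu1}) --- contributes to $\partial_\sigma\phi$ with the opposite sign, and it is exactly the competition between this term and the convergent sum over the reflected pole--zero pairs that fixes the threshold $2.94334$ and produces the single derivative zero of ${\cal V}(s)$ on the line noted in Theorem~\ref{th51t}. In this range the clean left/right charge separation of Theorem~\ref{thelectrostatics} no longer holds, so the argument must be completed by the explicit estimate already used in Corollary~\ref{corabsu} together with a direct check that this derivative zero is not itself a point where ${\cal U}=\pm1$ (so that no zero of ${\cal T}_\pm$ is double). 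The other delicate point is the Hopf-type passage from ``$\phi=0$ on the line, $\phi<0$ to its right'' to the strict inequality $\partial_\sigma\phi<0$; a physicist would call this self-evident, but a rigorous version rests on the manifest term-by-term negativity of the $\sigma$-derivative of the sum in (\ref{corabsu1}) for $t>2.94334$.
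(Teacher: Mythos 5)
Your proposal is correct and takes essentially the same route as the paper: both identify the zeros of ${\cal T}_\pm(s)$ with points where ${\cal U}(s)=\mp 1$, hence $|{\cal U}(s)|=1$, so that Corollary \ref{corabsu} places them on the critical line, and both obtain simplicity by arguing that a higher-order zero would force extra branches of the level set $|{\cal U}(s)|=1$ off that line (your critical point of $\log|{\cal U}(s)|$ is exactly the paper's ``additional lines of constant phase of ${\cal V}(s)$''), again contradicting Corollary \ref{corabsu}. Your write-up is in fact slightly more careful than the paper's proof, since you factor out $\xi_1(2s)$ to dispose of its zeros and poles explicitly and you flag the region $t\le 2.94334$, which the paper passes over in silence, as needing a separate finite check.
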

\begin{proof}
From equation (\ref{tdef}), all zeros or poles of ${\cal V}(s)$ must correspond to points where  ${\cal U}(s)=\pm 1$, i.e. where $|{\cal U}(s)|=1$. From Corollary \ref{corabsu}, such points can only occur on the line $\sigma=1/2$. This proves the first proposition, and extends the result of Taylor (1945) from ${\cal T}_-(s)$ to ${\cal T}_+(s)$.

With regard to the order of the zeros or poles of ${\cal V}(s)$, we note that zeros or poles of order greater than unity would require more than at least a second line of constant phase emanating
from them  (in addition to the critical line) along which ${\cal V}(s)$ is pure imaginary. Such lines would have to be lines of unit amplitude for ${\cal U}(s)$,
and so in fact are limited to the critical line by Corollary \ref{corabsu}.
\end{proof}

\begin{corollary}
${\cal U}'(s)$  and ${\cal V}'(s)$ have no zeros on the critical line in $t>2.94334$.
\label{corol1}
\end{corollary}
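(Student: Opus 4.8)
The plan is to obtain the statement for ${\cal V}'$ from that for ${\cal U}'$ by an algebraic reduction, and to prove the statement for ${\cal U}'$ by a level-set argument resting on Corollary~\ref{corabsu}. First I would reduce the claim about ${\cal V}'$ to one about ${\cal U}'$. By Theorem~\ref{th51t} the function ${\cal V}(s)$ is pure imaginary on the critical line, so ${\cal V}(s)+1$ has real part unity and cannot vanish there; equation~(\ref{uthm4a}), namely ${\cal U}'(s)=2{\cal V}'(s)/[{\cal V}(s)+1]^2$, then shows that on $\sigma=1/2$ one has ${\cal U}'(s)=0$ if and only if ${\cal V}'(s)=0$. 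It therefore suffices to prove that ${\cal U}'(s)$ has no zero on the critical line for $t>2.94334$.

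Next I would record that on the critical line ${\cal U}(s)$ is analytic, finite and nonzero: its zeros lie in the strip $1/2<\sigma<1$ and its poles in $0<\sigma<1/2$ (being the solutions of $2s-1=\rho$ and $2s=\rho$ for the nontrivial zeros $\rho$ of $\zeta$, all of which lie strictly inside the critical strip), so none meets $\sigma=1/2$, and $|{\cal U}(s)|=1$ there (first of equations~(\ref{ecor1})). Consequently $g(s):=\log{\cal U}(s)$ is analytic in a neighbourhood of any point $s_0=1/2+it_0$ of the line, its real part $h:=\log|{\cal U}|$ is harmonic, and by the Cauchy--Riemann equations $g'(s)=h_\sigma-ih_t={\cal U}'(s)/{\cal U}(s)$. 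Since ${\cal U}(s_0)\neq0$, the condition ${\cal U}'(s_0)=0$ is equivalent to $\nabla h(s_0)=0$.

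The hard part is to exclude $\nabla h(s_0)=0$, and here I would argue by contradiction using the local structure of a harmonic function at a critical point. Suppose $g'(s_0)=0$, so that $g(s)-g(s_0)=c(s-s_0)^m(1+o(1))$ with $c\neq0$ and $m\geq2$. Because $h(s_0)=\log|{\cal U}(s_0)|=0$ and $h\equiv0$ along the whole critical line, the zero level set of $h$ near $s_0$ contains that line; but at a critical point where $\nabla h$ vanishes to order $m-1$ this level set is a union of $m$ analytic arcs through $s_0$ meeting at equal angles $\pi/m$, one of which is the vertical critical line. Since $m\geq2$ forces $\pi/m\le\pi/2$, at least one further arc is transverse to the line and hence carries points with $\sigma\neq1/2$, arbitrarily close to $s_0$ (so still with $t>2.94334$), at which $h=0$, i.e. $|{\cal U}(s)|=1$ off the critical line. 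This contradicts Corollary~\ref{corabsu}.

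Therefore $\nabla h(s_0)\neq0$, so ${\cal U}'(s_0)\neq0$, and by the reduction of the first paragraph ${\cal V}'(s_0)\neq0$, for every $t_0>2.94334$. The restriction $t>2.94334$ enters only through Corollary~\ref{corabsu} and is inherited unchanged. I expect the single step needing the most care to be the local description of the zero set of the harmonic function $h$ at the hypothetical critical point (equivalently, that $g'$ vanishing of order $m-1\ge1$ forces $m$ equiangular zero-arcs of $\Re g$); once that is in place, and with Corollary~\ref{corabsu} granted, the remainder is bookkeeping.
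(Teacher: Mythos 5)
Your proposal is correct, but it reaches the conclusion by a genuinely different route from the paper. The paper's proof is quantitative: it differentiates the explicit pole--zero expansion (\ref{corabsu1}) term by term to obtain the formula (\ref{sigder1}) for $\partial_\sigma \log|{\cal U}(s)|$ on the critical line, observes that the exceptional term is positive while every term of the sum over $p$ is negative (since $\sigma_p<1/2$), and then invokes a numerical verification (Fig.~\ref{figderiv}) that the negative terms dominate precisely for $t>2.94334$; non-vanishing of ${\cal U}'$ follows because one component of $\nabla\log|{\cal U}|$ is strictly negative, and ${\cal V}'$ is handled through (\ref{uthm4a}) exactly as you do. You instead take the \emph{statement} of Corollary~\ref{corabsu} as a black box and run a soft, Speiser-style contradiction: a zero of ${\cal U}'$ at a point of the line where ${\cal U}$ is finite, nonzero and of unit modulus would force the level set $\{|{\cal U}|=1\}$ to branch into $m\ge 2$ equiangular arcs there, at least one transverse to the critical line, producing points off the line with $|{\cal U}|=1$ and $t>2.94334$ --- contradicting Corollary~\ref{corabsu}. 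Your hypotheses check out (the zeros of ${\cal U}$ lie in $1/2<\sigma<1$ and its poles in $0<\sigma<1/2$, so a local branch of $\log{\cal U}$ exists on the line), and the level-set structure of $\Re g$ at a zero of $g'$ is standard. The trade-off: your argument needs no new numerics beyond what Corollary~\ref{corabsu} already supplies and is cleaner logically, but it yields only $\nabla\log|{\cal U}|\ne 0$, not its sign; the paper's computation additionally establishes that $\partial_\sigma\log|{\cal U}|<0$ on the line, which is what powers the later monotonicity statements (increase of $\Im{\cal V}(1/2+it)$, of $\arg\xi_1(1+2it)$, etc.) in Corollaries~\ref{corol3} and \ref{newcor}. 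Both arguments inherit the threshold $2.94334$ from the (numerically determined) constant in Corollary~\ref{corabsu}.
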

\begin{proof}
We use the method of Corollary \ref{corabsu}, and the expansion (\ref{corabsu1}) to evaluate
the partial derivative with respect to $\sigma$ of $\log |{\cal U}(s)|$:
\begin{equation}
\left. \frac{\partial \log |{\cal U}(s)|}{\partial \sigma}\right|_{\sigma=1/2}=
\frac{1}{t^2+1/4}+\sum_{p=1}^\infty m_p \frac{8(2\sigma_p-1)(4 t^2+4 t_p^2+4 \sigma_p (\sigma_p-1)+1)}{[(1-2\sigma_p)^2+4 (t-t_p)^2] [(1-2\sigma_p)^2+4 (t+t_p)^2]}.
\label{sigder1}
\end{equation}
The first term is exceptional, and gives a positive contribution, whereas all terms from the sum over $p$ give a negative contribution, since $\sigma_p<1/2$.  The negative contributions outweigh the
exceptional term as soon as $t$ exceeds 2.94334: see Fig. \ref{figderiv}. From this value on, we
have that the partial derivative of $\log |{\cal U}(s)|$ with respect to $\sigma$ on the critical line is
strictly negative. This guarantees that ${\cal U}'(s)$  is never zero in $t>2.94334$, while the corresponding result for ${\cal V}'(s)$ follows from equation (\ref{uthm4a}).
\end{proof}

\begin{corollary}
$F_1(s)$ has all its zeros and poles on the critical line, and all these are of first order, provided   $t>2.94334$.\label{corf1}
\end{corollary}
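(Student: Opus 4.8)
The plan is to read off the zeros and poles of $F_1(s)$ directly from its representation (\ref{uthm5}) in terms of ${\cal U}(s)$, namely $F_1(s)=i({\cal U}(s)-i)/({\cal U}(s)+i)$, and then to locate and order them using the two preceding corollaries. From this representation the zeros of $F_1$ are exactly the points where ${\cal U}(s)=i$ and the poles are exactly the points where ${\cal U}(s)=-i$; a pole of ${\cal U}(s)$ (a zero of $\xi_1(2s)$) gives $F_1=i$, so it is neither a zero nor a pole of $F_1$.

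To locate them, I would observe that at any zero or pole of $F_1$ one has $|{\cal U}(s)|=|\pm i|=1$. By Corollary \ref{corabsu}, for $t>2.94334$ the equality $|{\cal U}(s)|=1$ can hold only on the line $\sigma=1/2$. Hence every zero and every pole of $F_1$ lies on the critical line, which is the first assertion.

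For the order, I would argue uniformly from the fact, established in Corollary \ref{corol1}, that ${\cal U}'(s)\neq 0$ at every point of the critical line with $t>2.94334$. Near a zero $s_0$ of $F_1$ we have ${\cal U}(s_0)=i$, so ${\cal U}(s)-i$ vanishes to exactly first order while ${\cal U}(s)+i\to 2i\neq 0$; therefore $F_1$ has a simple zero at $s_0$. Dually, near a pole $s_0$ we have ${\cal U}(s_0)=-i$, so ${\cal U}(s)+i$ vanishes to exactly first order while ${\cal U}(s)-i\to -2i\neq 0$; therefore $F_1$ has a simple pole. This proves that all zeros and poles are of first order. As a consistency check one may note that the functional equation (\ref{f1-1}), $F_1(1-s)=1/F_1(s)$, interchanges the zeros and poles and is compatible with both being simple.

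The work here is light precisely because the analytic content has already been discharged in Corollaries \ref{corabsu} and \ref{corol1}; the only point requiring care is the bookkeeping of the excluded strip. Both corollaries carry the hypothesis $t>2.94334$, which is exactly the range in which the exceptional pole--zero pair of ${\cal U}(s)$ on the axis $t=0$ (responsible for the atypical region visible in Fig. \ref{fig03t}) neither violates the modulus inequality nor reverses the sign of $\partial_\sigma\log|{\cal U}(s)|$ on the critical line. I do not expect this to be a genuine obstacle, but it is why the conclusion must be qualified by $t>2.94334$ rather than stated for all $t$.
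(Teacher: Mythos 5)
Your proposal is correct and follows essentially the same route as the paper: the zeros and poles of $F_1(s)$ are located by observing that $|{\cal U}(s)|=1$ at each of them and invoking Corollary \ref{corabsu}, and their simplicity is deduced from the non-vanishing of the derivative guaranteed by Corollary \ref{corol1}. If anything, your local factorization of $F_1$ through the factors ${\cal U}(s)\mp i$ handles the pole case more cleanly than the paper, which argues via $F_1'(\frac{1}{2}+it)\neq 0$ from equation (\ref{f1-3}) and relegates the order of the poles to a parenthetical remark about ${\cal V}(s)$ and its derivative remaining finite there.
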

\begin{proof}
The zeros and poles of $F_1(s)$  occur where ${\cal U}(s)=i, -i$ respectively- i.e., in both cases $|{\cal U}(s)|=1$. Thus, both poles and zeros lie on $\sigma=1/2$.  From equation (\ref{f1-3}) and Corollary \ref{corol1}, we see that
$F_1'(s)$ is never zero on the critical line if $t>2.94334$, so the poles and zeros are of order unity. (The statement concerning the zeros of $F_1(s)$ being first order is obvious.
The statement concerning poles relies on the connection between $F_1(s)$ and ${\cal V}(s)$, and the knowledge that ${\cal V}(s)$ and its derivative remain finite at the pole of $F_1(s)$.)
\end{proof}

\begin{figure}[h]
\includegraphics[width=3.0in]{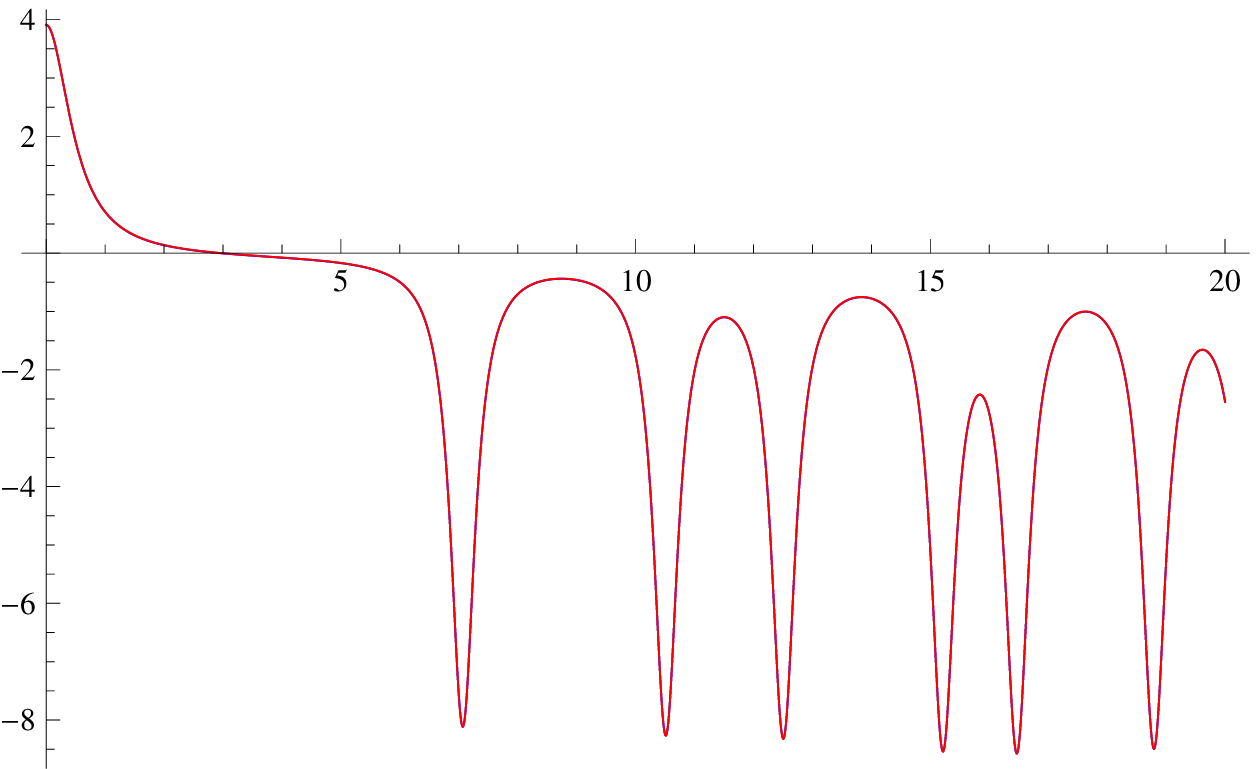}~~\includegraphics[width=3.0in]{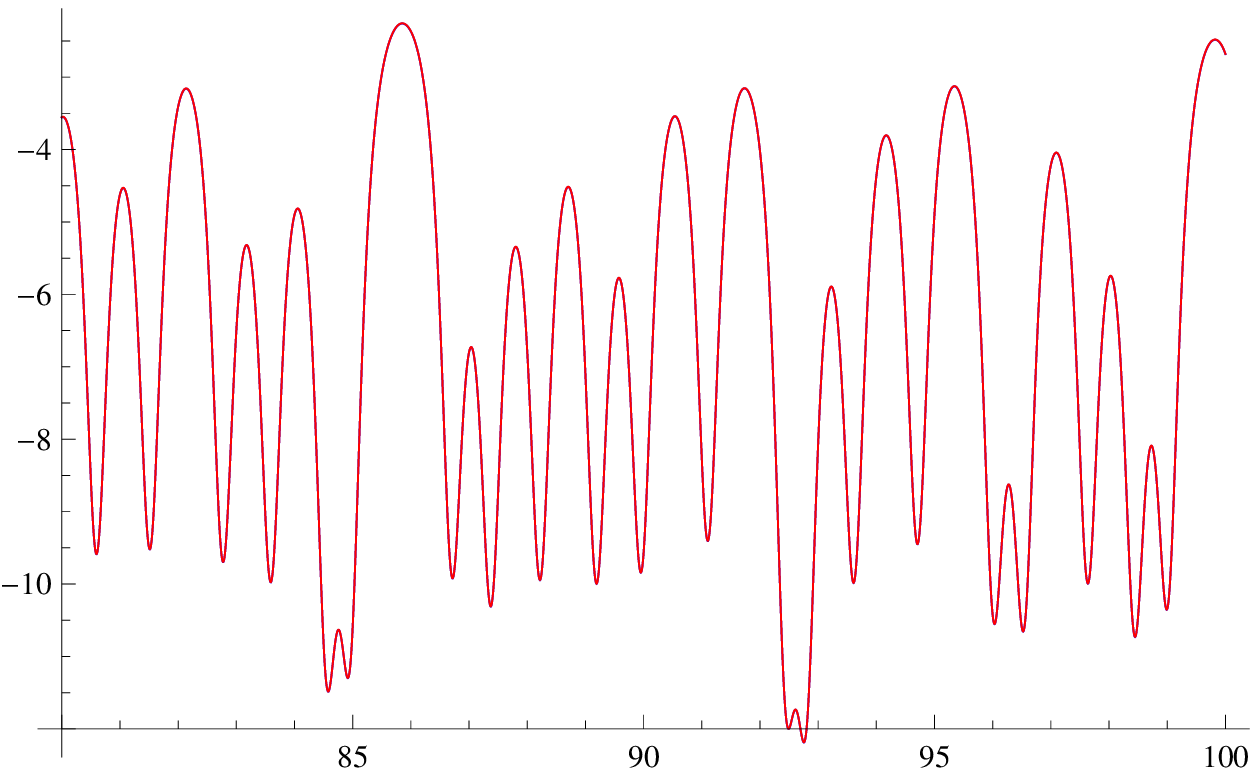}
\caption{\label{figderiv}  Plots of the  partial derivative of $\log |{\cal U}(s)|$ with respect to $\sigma$ on the critical line. The plots of the numerically-differentiated function and the result from equation (\ref{sigder1}), summed over $p$ from 1 to 1000, coincide to graphical accuracy.  }\end{figure}

The derivative plots in Fig. \ref{figderiv} show a function which has negative peaks whenever 
$t$ passes through a value $t_p$. The maximum value between the negative peaks is always negative once $t$ exceeds 2.94334, and decreases slowly as $t$ increases.

Note that, from (\ref{ecor1}), we also have that $F_1'(1/2+i t)$ and $\Psi(t)=\partial \arg \xi_1(1+2 i t)/\partial t$ are never zero in $t>2.94334$. From this, it follows that
$\arg {\cal U}(1/2+i t)$ decreases as $t$ increases, while $\Im {\cal V}(1/2+ i t)$ increases, as does $\Re F_1(1/2+i t)$.

\begin{corollary}
Zeros and poles of ${\cal V}(s)$ strictly alternate on the critical line, and so ${\cal T}_+(s)$
and ${\cal T}_-(s)$ have the same distribution function for zeros.
\label{corol3}
\end{corollary}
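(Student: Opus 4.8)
The plan is to reduce the statement to the behaviour of the single real function $f(t):=\Im[{\cal V}(1/2+it)]$ and to read off the interlacing of zeros and poles from its monotonicity. First I would record, from Theorem \ref{th51t}, that ${\cal V}(1/2+it)=i\,f(t)$ is pure imaginary on the line, so that the zeros of ${\cal V}$ there (equivalently the zeros of ${\cal T}_+$) are exactly the points where $f(t)=0$, while the poles of ${\cal V}$ (the zeros of ${\cal T}_-$) are the points where $f(t)$ is infinite. By Corollary \ref{corol2} all these zeros lie on $\sigma=1/2$ and are simple, and since ${\cal T}_+$ and ${\cal T}_-$ have no common zeros, the poles of ${\cal V}$ on the line are simple poles of $f$ and its zeros are simple zeros of $f$.

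Next I would establish the monotonicity that drives everything. Differentiating ${\cal V}(1/2+it)=i\,f(t)$ along the line shows that ${\cal V}'(1/2+it)$ is real and equal to $f'(t)$. Corollary \ref{corol1} states that ${\cal V}'(s)$ has no zero on the critical line for $t>2.94334$, so $f'(t)$ never vanishes there, and the remark following Corollary \ref{corf1} fixes its sign; hence $f$ is strictly increasing on every pole-free subinterval of the line for $t>2.94334$.

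The third step is to combine simplicity of the poles with strict increase. On an interval $(\tau_j,\tau_{j+1})$ bounded by two consecutive poles of ${\cal V}$, the function $f$ is finite, continuous and strictly increasing; because a strictly increasing function can accommodate a simple pole only by diverging to $+\infty$ as the pole is approached from below and re-entering from $-\infty$ above (the sign reversal already noted in Section 3), we have $f\to-\infty$ as $t\to\tau_j^+$ and $f\to+\infty$ as $t\to\tau_{j+1}^-$. By the intermediate value theorem $f$ then vanishes exactly once in $(\tau_j,\tau_{j+1})$, and that single crossing is a simple zero of ${\cal V}$. Thus each inter-pole interval contains exactly one zero, and symmetrically each inter-zero interval contains exactly one pole: zeros and poles strictly alternate. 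The finitely many points with $t\le 2.94334$, together with the exceptional region inherited from ${\cal U}(s)$ and the zero at $s=1/2$, contribute only a bounded correction and cannot disturb this interlacing asymptotically.

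Finally, strict alternation forces $|N_{{\cal T}_+}(1/2,T)-N_{{\cal T}_-}(1/2,T)|\le 1$ for every $T$, so the two counting functions agree in all terms that diverge as $T\to\infty$; this is exactly the assertion that ${\cal T}_+(s)$ and ${\cal T}_-(s)$ share the same distribution function of zeros, in accord with Tables \ref{table1} and \ref{table2}. I expect the genuine obstacle to be the second and third steps taken together: one must be sure both that $f$ has no stationary points and that every pole is exactly of first order with the correct direction of divergence, since a double pole, or a return to $\pm\infty$ on the same side, would leave either two zeros or none between consecutive poles and break the alternation. Both facts are however already supplied by Corollaries \ref{corol2} and \ref{corol1}, so the remaining task is purely one of assembly.
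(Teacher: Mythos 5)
Your proposal is correct and follows essentially the same route as the paper's own proof: both rest on Corollary \ref{corol1} (non-vanishing of ${\cal V}'$ on the line) to get strict monotonicity of $\Im[{\cal V}(1/2+it)]$, on the first-order poles producing jumps from $+\infty$ to $-\infty$, and on the resulting strict alternation to conclude that the two counting functions differ by a bounded amount. Your version merely assembles the same ingredients more explicitly (the identity ${\cal V}'(1/2+it)=f'(t)$, the intermediate value argument, and the bound $|N_{{\cal T}_+}-N_{{\cal T}_-}|\le 1$), which the paper leaves implicit.
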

\begin{proof}
Since ${\cal V}'(s)\neq 0$ on the critical line, $\log[|{\cal V}(1/2+i t)|]$ is a monotonic function of $t$.  Starting at a pole with the value $+\infty$, it decreases, passing through each real value once before arriving at a zero, with the value $-\infty$.  Starting at a zero, it increases monotonically until it reaches a pole. (Put in another way, after its zero derivative point at $(1/2, 2.94334)$, 
$\Im[{\cal V}(1/2+i t)]$ increases monotonically, except at its first-order poles, where it changes
discontinuously from $+\infty$ to $-\infty$.)

Given this result, it follows that the distribution function $N({\cal T}_+)$ which gives the number of zeros  of  ${\cal T}_+(s)$
on the critical line up to the ordinate $t$ must agree with the corresponding function $N({\cal T}_-)$ for ${\cal T}_-(s)$ in all terms which diverge as $t\rightarrow \infty$.
\end{proof}

\begin{corollary} The real function $\arg[\xi_1(1+2 i t)]$ is monotonic increasing in $t>2.94334$, apart from discontinuous drops from $\pi$ to $-\pi$, which occur at the same values of $t$ as the poles of ${\cal V}(1/2+i t)$. The distribution function of the values where $\arg[\xi_1(1+2 i t)]=n\pi$ for positive integers $n$ is the same in all terms which go to infinity with $t$ as those of the zeros of ${\cal T}_+(s)$ and 
${\cal T}_-(s)$.
\label{newcor}
\end{corollary}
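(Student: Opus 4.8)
The plan is to reduce this statement almost entirely to the identities of Corollary~\ref{extracor} together with the monotonicity facts already in hand, so that the corollary becomes a repackaging of Corollary~\ref{corol3}. Write $\phi(t)=\arg[\xi_1(1+2it)]$. First I would check that $\phi$ is a genuine continuous function of $t$ for $t>0$: since $\Gamma(1/2+it)$ and $\pi^{-(1+2it)/2}$ never vanish and $\zeta(1+2it)\neq 0$ on the line $\Re(s)=1$ (a classical fact, consistent with the zeros lying in the open critical strip recalled in Section~2), the value $\xi_1(1+2it)$ is finite and non-zero for every $t>0$, the only singularity being the pole at $t=0$ inherited from $\zeta$. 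Hence a continuous branch of $\phi$ exists on $(0,\infty)$, and the only discontinuities of the principal-valued argument are the conventional jumps of size $2\pi$.

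For the monotonicity claim I would invoke the remark following Corollary~\ref{corf1}, where it is established that $\Psi(t)=\partial_t\arg[\xi_1(1+2it)]$ is never zero for $t>2.94334$ and that $\arg[{\cal U}(1/2+it)]$ decreases with $t$. Since ${\cal U}(1/2+it)=\exp[-2i\phi(t)]$ by Corollary~\ref{extracor}, a decreasing $\arg{\cal U}$ pins down the sign, giving $\phi'(t)>0$, so the continuous branch of $\phi$ is strictly increasing on $t>2.94334$. The principal value therefore climbs monotonically to $\pi$ and then drops to $-\pi$; these drops occur where $\xi_1(1+2it)$ is negative real, i.e.\ where the continuous branch of $\phi$ crosses an odd multiple of $\pi$. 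Because ${\cal V}(1/2+it)=i\tan[\phi+\pi/2]$ (Corollary~\ref{extracor}) has poles at every integer multiple of $\pi$, each such drop lands on a pole of ${\cal V}(1/2+it)$ (indeed on the subset of poles at which $\xi_1$ is negative real), which is the first assertion.

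For the distribution statement I would identify the level set directly. The values of $t$ with $\phi(t)=n\pi$ are exactly the poles of ${\cal V}(1/2+it)$, and since ${\cal V}={\cal T}_+/{\cal T}_-$ has no coincident zeros and poles (the zeros of ${\cal T}_+$ and ${\cal T}_-$ never coincide, as noted in Section~2), these are precisely the zeros of ${\cal T}_-(s)$ on the critical line. Because $\phi$ is strictly increasing and unbounded, the number of positive integers $n$ with $\phi^{-1}(n\pi)\le t$ equals $\lfloor\phi(t)/\pi\rfloor$ up to a bounded error arising from the exceptional region $t\le 2.94334$ and from the offset between ``positive integers $n$'' and the index of the first genuine zero; this is, up to a term staying finite as $t\to\infty$, the counting function of the zeros of ${\cal T}_-$. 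Finally Corollary~\ref{corol3} asserts that the distribution functions of the zeros of ${\cal T}_-$ and ${\cal T}_+$ agree in all terms diverging with $t$, so the level set $\{\phi=n\pi\}$, the zeros of ${\cal T}_-$ and the zeros of ${\cal T}_+$ all share one asymptotic distribution function, as claimed.

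I expect no genuine analytic obstacle remains, because the heavy lifting---that $|{\cal U}|\neq 1$ off the critical line (Corollary~\ref{corabsu}), that ${\cal U}'$ and ${\cal V}'$ do not vanish on the line (Corollary~\ref{corol1}), and the strict alternation of poles and zeros (Corollary~\ref{corol3})---is already done. The only points requiring care are bookkeeping ones: fixing a single continuous branch of $\phi$, extracting the \emph{sign} of $\phi'$ (not merely its non-vanishing) from the decreasing behaviour of $\arg{\cal U}$, and verifying that the finitely many crossings below $t=2.94334$ contribute only bounded terms and hence nothing to the divergent part of the distribution function.
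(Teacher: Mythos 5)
Your proposal is correct and takes essentially the same route as the paper's own proof: both rest on the identities of Corollary \ref{extracor}, which identify the level sets $\arg[\xi_1(1+2 i t)]=n\pi$ and $\arg[\xi_1(1+2 i t)]=(n-1/2)\pi$ with the critical-line zeros of ${\cal T}_-(s)$ and ${\cal T}_+(s)$ respectively, and both obtain the monotonicity of $\arg[\xi_1(1+2 i t)]$ from the derivative results established around Corollaries \ref{corol1}--\ref{corol3}. Your additional steps --- the continuous-branch check, the $\lfloor \arg[\xi_1(1+2 i t)]/\pi\rfloor$ bookkeeping, invoking Corollary \ref{corol3} for the ${\cal T}_+$ half rather than reading off the half-integer level sets directly, and the sharper observation that the principal-value drops occur only at those poles of ${\cal V}(1/2+i t)$ where $\xi_1(1+2 i t)<0$ --- are elaborations (indeed refinements) of the same argument rather than a different method.
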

\begin{proof}
From equation (\ref{ecor1}) the zeros of ${\cal T}_+(s)$ occur when $t$ satisfies $\arg[\xi_1(1+2 i t)]=(n-1/2)\pi$ for integer $n$, while those of  ${\cal T}_-(s)$ occur when $t$ satisfies $\arg[\xi_1(1+2 i t)]=n\pi$. These are first-order, and so the phase change  of $\arg[\xi_1(1+2 i t)]$ from $\pi$ to $-\pi$ occurs at the poles of ${\cal V}(1/2+i t)$. The monotonicity of $\arg[\xi_1(1+2 i t)]$ is a consequence of that of
$\Im[{\cal V}(1/2+i t)]$- see Corollary \ref{extracor}.
\end{proof}

This result is related to that in Theorem 11.1 in Titchmarsh\& Heath-Brown (1987), which is that $\zeta(s)$ takes every value, with one possible exception, an infinity of times in any strip $1-\delta<\sigma\leq1+\delta$. Here the function in question is $\arg[\xi_1(1+2 i t)]$, which takes all real values an infinity of times, with no exception. Note also that the same comments apply to $\arg[\xi_1(2 i t)]$, except that it is monotonic decreasing rather than increasing.

\begin{theorem}
Consider an interval  ${\cal I}$ on the critical line which begins with a pole   $P_1$ of ${\cal V}(s)$, continues with a zero $Z_1$ and terminates with a pole $P_2$. 
This maps onto a circle $|{\cal U}(s)|=1$, which contains a single  zero of ${\cal U}(s)$, which corresponds to a phase jump of $\pi$. This establishes a direct correspondence between zeros of ${\cal U}(s)$
and intervals on the critical line between poles of ${\cal V}(s)$.
\label{thmconn}
\end{theorem}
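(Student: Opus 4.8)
The plan is to reduce the geometric statement to a winding-number computation via the argument principle, using the M\"obius relation (\ref{uthm4}) together with the monotonicity already established. First I would parametrise the interval ${\cal I}$: by Corollary \ref{corol3}, $\Im[{\cal V}(1/2+it)]$ is strictly increasing between the consecutive poles $P_1$ and $P_2$, rising from $-\infty$ at $P_1$, through $0$ at the zero $Z_1$, to $+\infty$ at $P_2$. Writing ${\cal V}(1/2+it)=i\tau(t)$ with $\tau$ real and monotone increasing, the map ${\cal U}=({\cal V}-1)/({\cal V}+1)$ of (\ref{uthm4}) carries the pure-imaginary axis onto the unit circle, so $|{\cal U}|=1$ along all of ${\cal I}$ and ${\cal U}$ runs from $1$ (at $\tau=-\infty$) through $-1$ (at $\tau=0$, i.e.\ at $Z_1$) back to $1$ (at $\tau=+\infty$). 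By the first relation of Corollary \ref{extracor}, $\arg{\cal U}(1/2+it)=-2\arg[\xi_1(1+2it)]$, and since the continuous phase $\arg[\xi_1(1+2it)]$ increases by exactly $\pi$ as $t$ traverses ${\cal I}$ (it rises from the value at $P_1$ to the next integer multiple of $\pi$ at $P_2$, with $Z_1$ at the intermediate half-integer multiple, by Corollaries \ref{corol1} and \ref{newcor}), $\arg{\cal U}$ decreases by exactly $2\pi$. Hence the image of ${\cal I}$ is the full unit circle described once, with winding number one about the origin, and $\arg{\cal U}$ passes through the value $\pi$ (where ${\cal U}=-1$) precisely once, at $Z_1$, furnishing the $\pi$ phase feature named in the statement.

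The correspondence with zeros of ${\cal U}(s)$ then follows from the argument principle. By the electrostatic result (Theorem \ref{thelectrostatics}) and Corollary \ref{corabsu}, for $t>2.94334$ one has $|{\cal U}|<1$ throughout $\sigma>1/2$ and $|{\cal U}|>1$ throughout $\sigma<1/2$; consequently every zero of ${\cal U}(s)$ (a solution of $\xi_1(2s-1)=0$) lies strictly in $\sigma>1/2$, mapping to the centre ${\cal U}=0$, while every pole lies strictly in $\sigma<1/2$, mapping to the exterior of the unit circle. I would therefore close ${\cal I}$ by a return path $\gamma$ lying in $\sigma>1/2$ to form a contour bounding a region $R$ to the right of the critical line, chosen so that the net change of $\arg{\cal U}$ along $\gamma$ vanishes. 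Then the winding number one accumulated over ${\cal I}$ equals the number of zeros minus poles of ${\cal U}$ inside $R$; as $R$ contains no poles, this count is one, establishing that precisely one zero of ${\cal U}(s)$ is associated with the interval $[P_1,P_2]$. Running this over all consecutive pairs of poles of ${\cal V}(s)$ yields the asserted bijection between zeros of ${\cal U}(s)$ and inter-pole intervals.

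The main obstacle is the rigorous construction of the return path $\gamma$ and the verification that it neither encircles a second zero nor accrues spurious phase. The difficulty is twofold: $|{\cal U}(s)|\to0$ as $\sigma\to+\infty$ (Theorem \ref{th51u}), so one cannot simply close the contour at infinity, where ${\cal U}$ itself tends to a zero; and the zero of ${\cal U}$ sits at the crossing of the contours $\arg{\cal V}=0$ and $|{\cal V}|=1$, so a careless loop would place it on the boundary rather than the interior of $R$. I expect this to be resolved by taking $\gamma$ along the bounded closed $|{\cal V}(s)|=1$ contour of Figures \ref{fig98} and \ref{fig1444} (equivalently the locus $\arg{\cal U}=\pm\pi/2$), which crosses the critical line only at the two points ${\cal V}=\pm i$ interior to ${\cal I}$ and isolates the single zero of ${\cal U}$ lying on it; a small inward deformation brings that zero into the interior, and Corollary \ref{corabsu}, by forbidding any further crossing of $|{\cal U}|=1$ off the critical line, guarantees that no additional zero enters. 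As an independent check, the count can be confirmed directly from the monotone phase of Corollary \ref{newcor}, under which the poles of ${\cal V}(s)$ occur exactly where $\arg[\xi_1(1+2it)]$ passes through successive integer multiples of $\pi$, so that the number of intervals $[P_j,P_{j+1}]$ up to a given height matches the number of zeros of ${\cal U}(s)$ accounted for by the same $\pi$-increments --- which is precisely the link needed to tie the distribution functions of ${\cal T}_-(s)$ and ${\cal T}_+(s)$ to that of $\zeta(2s-1)$.
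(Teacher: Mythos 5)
Your proposal is correct in substance, and its first half is exactly the paper's own argument: the paper likewise starts from the monotonicity of $\Im[{\cal V}(1/2+it)]$ between consecutive poles (Corollary \ref{corol3}), concludes that ${\cal V}$ sweeps the whole imaginary axis once over ${\cal I}$, and uses the M\"obius map (\ref{uthm4}) to conclude that ${\cal U}$ traverses the unit circle once with monotonic phase, reaching ${\cal U}=-1$ precisely at $Z_1$. Where you genuinely depart from the paper is the counting step. The paper passes from ``the image curve winds once about ${\cal U}=0$'' to ``there is a zero of ${\cal U}(s)$ associated with ${\cal I}$'' by appealing to the one-to-one character of the M\"obius map, with the real justification deferred to the Riemann-sheet argument of Appendix 2 (Theorem \ref{rhproof}); no contour is ever closed in the $s$-plane, and the multiplicity is deliberately left open (``consistent with a first-order zero, or a second order zero shared between two Riemann sheets\ldots''). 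You instead close the critical-line segment into a genuine contour and apply the argument principle, importing two facts the paper's proof never invokes: the unconditional localization of the zeros of ${\cal U}$ in $\sigma>1/2$ and of its poles in $\sigma<1/2$ (the critical strip of $\zeta$), and Corollary \ref{corabsu} to control the return path. That buys rigour at exactly the point where the paper is weakest, since winding of the image curve alone does not produce a zero in the domain. Three caveats, however. First, your chosen return path, the $|{\cal V}(s)|=1$ contour, meets the critical line at the fixed points $F_1,F_2$ (where ${\cal V}=\mp i$), not at $P_1,P_2$, so what it actually closes is the subsegment from $F_1$ to $F_2$; moreover along that path $\arg{\cal U}$ is pinned at $\pm\pi/2$ except at the detour, so its net change is $\pm\pi$, contradicting your own stipulation that it vanish --- the correct bookkeeping is $\pi$ (critical-line piece from $F_2$ down to $F_1$) plus $\pi$ (return with the detour enclosing the zero), total $2\pi$, so the conclusion survives after repair. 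Second, the existence of that closed level curve through the zero --- rather than $\arg{\cal U}=\pm\pi/2$ curves terminating at zeros of ${\cal U}'$, which do occur off the critical line (cf.\ Fig.~\ref{fig-derivz}) --- is proved neither by you nor by the paper; you are right to flag it as the main obstacle, and it is the common unfinished step of both arguments. Third, your closing claim of a ``bijection'' overreaches what the argument yields: as the paper's Comment following the theorem and Appendix 2 make explicit, a higher-order zero of ${\cal U}$ could in principle serve several intervals (the detour bookkeeping around a multiple zero is different), so the theorem asserts only a well-defined correspondence from intervals to zeros, not injectivity.
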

\begin{proof}
By the argument of Corollary \ref{corol3} we know that $\Im [{\cal V}(s)]$ is negative for $t$ above $P_1$. It will become less and less negative as $t$ increases, and will continue to increase
until it reaches the next pole ($P_2$).  It will successively pass through the first fixed point ($F_1$,
where ${\cal V}(s)=-i$), the first-order zero ($Z_1$),
and the second fixed point ($F_2$,
where ${\cal V}(s)=+i$),  before reaching $P_2$. $F_1$ and $F_2$ are the only points encountered where $|{\cal V}(s)|=1$.

Along this trajectory, ${\cal V}(s)$ has traced out the entire imaginary axis. From the properties
of the mapping (\ref{uthm4}), this shows that ${\cal U}(s)$ has simultaneously traced out the
boundary of the unit circle in its values, with its phase varying through a range of $2 \pi$ in monotonic fashion. This shows that the centre of the circle in the ${\cal U}$ plane is
a  zero of ${\cal U}(s)$. The phase variation is consistent with a first-order zero, or a second order zero shared between two Riemann sheets, or a  zero of order $n$ shared between $n$ Riemann sheets. (See Appendix 2 for a numerical example.)

The mapping between ${\cal V}(s)$ and ${\cal U}(s)$ is one-to-one, so each interval between poles of ${\cal V}(s)$ on the critical line  maps directly onto a zero of ${\cal U}(s)$ .
\end{proof}

\begin{comment}
The mapping from intervals on the critical line between poles of ${\cal V}(s)$ to zeros of ${\cal U}(s)$ is unique. As we shall see in Appendix 2, this does not necessarily mean that each zero of   ${\cal U}(s)$ is associated with only one such interval on the critical line.
\end{comment}

We can also extend the discussion of Theorem \ref{thmconn} by  constructing the contour $|{\cal V}(s)|=1$. This is in the plane of complex ${\cal V}$
values a circle, with centre at  $Z_1$ and with $F_1$ and $F_2$ lying on its boundary. In the complex $s$ plane, it is a closed contour around which the phase $\arg [{\cal V}(s)]$ varies
by $2\pi$, in keeping with there being a single zero enclosed by the contour (and no singularity).
Along the  contour, the phase of ${\cal U}(s)$ is either $+\pi/2$ or $-\pi/2$, and   the entire imaginary axis must be traced out once. This means that on this contour we must encounter a  first-order zero and a pole. A contour of phase $\arg[{\cal U}(s)]=\pi$ must pass through the zero and the pole; this must be symmetric under reflection in the critical line. This is also a contour of phase
$\arg [{\cal V}(s)]=\pi$ where $|{\cal U}(s)|>1$ ($\sigma<1/2$) and $\arg [{\cal V}(s)]=0$ where $|{\cal U}(s)|<1$ ($\sigma>1/2$).

\begin{corollary}
If the Riemann hypothesis holds, then the  distribution functions of zeros of ${\cal T}_+(s)$ and ${\cal T}_-(s)$ must agree with that for
$\zeta(2 s-1)$ in all terms which do not remain finite as $t\rightarrow \infty$.
\label{equidist}
\end{corollary}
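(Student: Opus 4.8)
The plan is to prove the statement first for ${\cal T}_-(s)$ alone and then transfer it to ${\cal T}_+(s)$ using an interlacing already in hand. By Corollary \ref{corol3} the zeros of ${\cal T}_+(s)$ and ${\cal T}_-(s)$ strictly alternate on the critical line, so the counting functions $N({\cal T}_+)(t)$ and $N({\cal T}_-)(t)$ differ by at most one for every $t$; in particular they share every term that diverges as $t\rightarrow\infty$. It therefore suffices to compare $N({\cal T}_-)(t)$ with the counting function of $\zeta(2s-1)$, and to show that the two agree up to a bounded error.

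First I would identify the point sets whose ordinates must be matched. The zeros of ${\cal T}_-(s)$ are exactly the poles of ${\cal V}(s)$, since ${\cal V}={\cal T}_+/{\cal T}_-$, and by Corollary \ref{corol2} they lie on $\sigma=1/2$ and are simple. The zeros of ${\cal U}(s)=\xi_1(2s-1)/\xi_1(2s)$ coincide with the non-trivial zeros of $\zeta(2s-1)$; assuming the Riemann hypothesis these sit at $s=3/4+i\gamma/2$, while the poles of ${\cal U}(s)$, coming from $\zeta(2s)$, sit at the mirror images $s=1/4+i\gamma/2$, where $1/2+i\gamma$ runs over the zeros of $\zeta(s)$. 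Hence the zeros of ${\cal U}(s)$ and the zeros of $\zeta(2s-1)$ are the same set, and counting either by imaginary part gives $N_\zeta(\frac12,2t)$, which is precisely the main term recorded in (\ref{dz3a}).

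The central step is Theorem \ref{thmconn}: each open interval of the critical line bounded by two consecutive poles of ${\cal V}(s)$ is carried onto a single zero of ${\cal U}(s)$, the correspondence respecting multiplicity (an order-$n$ zero of ${\cal U}$ accounting for $n$ such intervals). Consequently the number of poles of ${\cal V}(s)$ with ordinate at most $t$ equals the number of zeros of ${\cal U}(s)$ so counted, with multiplicity, up to a bounded discrepancy produced only by the at most one interval straddling the cutoff. Combining with the previous paragraph gives $N({\cal T}_-)(t)=N_{\zeta(2s-1)}(t)+O(1)$, and then, via Corollary \ref{corol3}, the same relation for $N({\cal T}_+)(t)$, which is the assertion.

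The main obstacle is to make the ordinate matching in the bijection of Theorem \ref{thmconn} quantitatively harmless: one must verify that the zero of ${\cal U}(s)$ attached to a critical-line interval near ordinate $t_0$ itself has imaginary part $t_0+O(1)$, so that truncating both counts at height $t$ introduces only a bounded error rather than one growing with $t$. Under the Riemann hypothesis this is transparent, because the zero of ${\cal U}$ at $3/4+i\gamma/2$ and its mirror pole at $1/4+i\gamma/2$ share the ordinate $\gamma/2$, about which the relevant pole-free interval of ${\cal V}(s)$ is centred by the reflection symmetry $s\rightarrow 1-\bar s$ of Theorem \ref{th51u}. As an independent confirmation of the size of the main term I would invoke Corollary \ref{newcor}, which gives $N({\cal T}_-)(t)=\pi^{-1}\arg[\xi_1(1+2it)]+O(1)$; Stirling's expansion of $\arg\Gamma(1/2+it)$ together with the boundedness of $\arg\zeta(1+2it)$ then reproduces $\pi^{-1}[t\log t-t(1+\log\pi)]$, in agreement with (\ref{dz3a}) and hence with $N_\zeta(\frac12,2t)$.
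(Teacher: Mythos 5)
Your proposal is correct in substance and follows essentially the paper's own route. The paper's proof \emph{is} the RH-conditional bijection you construct: under the hypothesis the zeros and poles of ${\cal U}(s)$ sit at $3/4+i\gamma/2$ and $1/4+i\gamma/2$, each lying on the contours $|{\cal V}(s)|=1$ (equivalently $\arg {\cal U}(s)=\pm\pi/2$) that surround a zero of ${\cal V}(s)$, giving a one-to-one correspondence between zeros of $\zeta(2s-1)$ and zeros of ${\cal T}_+(s)$. Your version counts poles of ${\cal V}(s)$ (zeros of ${\cal T}_-(s)$) through the intervals of Theorem \ref{thmconn} and then transfers to ${\cal T}_+(s)$ by the interlacing of Corollary \ref{corol3}; that is the same correspondence viewed from the poles of ${\cal V}(s)$ rather than its zeros, an immaterial difference.

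The one step where you go beyond the paper is also the step where your argument falters. You correctly identify that a bare bijection does not force the counting functions to agree --- one needs the zero of ${\cal U}(s)$ attached to an interval near ordinate $t_0$ to have ordinate $t_0+O(1)$ --- but the resolution you offer is not valid: the reflection $s\rightarrow 1-\bar{s}$ fixes the critical line \emph{pointwise}, so while it does pair the zero $3/4+i\gamma/2$ with the pole $1/4+i\gamma/2$ at the common ordinate $\gamma/2$, it carries no information about where along the critical line the associated interval lies, and hence cannot ``centre'' that interval at $\gamma/2$; that would require a bound on the vertical extent of the connecting contours, which you do not supply. To be fair, the paper's proof is silent on exactly the same point, so your attempt is no less rigorous than the original. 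Moreover, what you present as a mere cross-check is in fact the soundest part of your argument: Corollary \ref{newcor} gives $N({\cal T}_\pm)(t)=\pi^{-1}\arg[\xi_1(1+2it)]+O(1)$, Stirling's formula for $\arg\Gamma(1/2+it)$ together with the smallness of $\arg\zeta(1+2it)$ reproduces the two divergent terms of (\ref{dz3a}), and the Riemann--von Mangoldt formula gives the same two terms for the zero count of $\zeta(2s-1)$; this route secures the stated agreement directly (and needs the Riemann hypothesis only to speak of the zeros of $\zeta(2s-1)$ as lying on a line, not for the count itself).
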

\begin{proof}
Given the assumption that the Riemann hypothesis holds, then all zeros and poles of ${\cal U}(s)$ lie respectively on the lines $\sigma=3/4$ and $\sigma=1/4$. Each zero and pole lies at the intersection of contours of constant phase $\arg {\cal U}(s)=\pi/2$ and $\arg {\cal U}(s)=-\pi/2$, i.e., the contours of constant amplitude $|{\cal V}(s)|=1$. There are two such contours
 surrounding each zero of ${\cal V}(s)$, so there is a one-to-one mapping between zeros of ${\cal U}(s)$ and of ${\cal V}(s)$. We can similarly argue that each pole and zero  of
 ${\cal U}(s)$ are linked by a line of constant phase zero (which passes through the zero of  ${\cal V}(s)$) and a line of constant phase $\pi$ (which passes through the pole of  ${\cal V}(s)$)-- a second one-to-one mapping.
\end{proof}

The work of R.C. McPhedran  and C.G. Poulton on this project has been supported by the Australian Research Council's Discovery Grants Scheme, while the former also acknowledges the financial support of the European Community's Seventh Framework
Programme under contract number PIAPP-GA-284544-PARM-2. M.
The referees of this paper are thanked for insightful comments.

\newpage
\section*{Appendix 1: Convergence of the Logarithmic Potential Expansion}
We consider the convergence of the series for $\log|{\cal U}(s)|$ in equation (\ref{corabsu1}).
We expand the summand of the series for all terms in which $t_p>>t$, with $t>>1$. Expanding the
series with these assumptions, the $p$th term is to leading order
\begin{equation}
T_p (\sigma, t)=m_p \Re\left[\frac{(1-2\sigma_p) (1-2\sigma-2 i t)}{t_p^2}\right] +O\left( \frac{t^2}{t_p^4}\right)
=m_p \left[\frac{(1-2\sigma_p) (1-2\sigma)}{t_p^2}\right] +O\left( \frac{t^2}{t_p^4}\right).
\label{app1}
\end{equation}
We may then apply the integral test to assure the convergence of the series in equation (\ref{corabsu1}). To do this, we use the result( Titchmarsh \& Heath-Brown, 1987)  that the leading term in the density of zeros in the critical strip $0<\sigma<1$ of $\zeta (s)$ is   $\log (t)/(2 \pi)$, so that the convergence is evident.

To investigate numerically the representation of $\log|{\cal U}(s)|$ in equation (\ref{corabsu1}), we consider the application of the Euler-Maclaurin formula (Apostol, 1999) to the latter. The formula  of order $q$ is
\begin{equation}
\sum_{p=0}^\infty f(p)=\int_0^\infty f(x) dx -B_1 f(0)-\sum_{k=1}^q \frac{B_{2 k}}{(2 k)!} f^{(2 k-1)}(0) +R,
\label{app2}
\end{equation}
where $R$ denotes the remainder term of order $q$,  the $B_{2 k}$ are Bernoulli numbers, and we have assumed the function $f(x)$ and its derivatives up to order
$2 q-1$ tend to zero as $x\rightarrow \infty$. We apply this by summing directly in  (\ref{corabsu1}) up to $p=L$, and using the Euler-Maclaurin formula to estimate the remaining infinite sum. This application is not conceptually straightforward, since the zeros $t_p$ do not have a known functional dependence on $p$, and they are not even spaced, as is implicit in the derivation of the Euler-Maclaurin formula. We then write the integral as
\begin{equation}
\int_{L+1}^\infty T_p (\sigma, t) \frac{\log(t_p/\pi)}{\pi} dt_p= \frac{(1-2\sigma)}{2(L+1)\pi} \left[ 1+\log\left(\frac{L+1}{\pi}\right)\right],
\label{app3}
\end{equation}
where we have used the assumptions $m_p=1$ and $\sigma_p=1/4$, together with the known terms of the density function for zeros of $\zeta (s)$ in the critical strip, in  the
establishment of a functional form of the integrand. The derivative terms in equation (\ref{app2}) need to be evaluated using the same density function for $t_p$ values.

As an example of the accuracy of the resulting numerical estimate, with $L=1000$ and $q=0$, so only the first two terms in the right-hand side of (\ref{app2}) are used,
the difference between the highly accurate value for $\log{\cal U}(0.4+ i t)$ from Mathematica and the numerical estimate from the  Euler-Maclaurin formula is bounded from above by 0.000078 for $t$ in the range of $t$ between zero and 100. The values of $\log{\cal U}(0.4+ i t)$ are of order unity for $t$ in this range.

\section*{Appendix 2: Remarks on the Connection between ${\cal U}(s)$ and ${\cal V}(s)$}
 Let us consider once more the connection between the three functions ${\cal U}(s)$,  ${\cal V}(s)$ and $F_1(s)$ embodied in equation (\ref{uthm5}). We can express
 both ${\cal U}(s)$ and ${\cal V}(s)$ in terms of $F_1(s)$:
 \begin{equation}
 {\cal U}(s)=\frac{i (i+F_1(s))}{(i-F_1(s)},~~ {\cal V}(s)=\frac{i (1+F_1(s))}{(1-F_1(s)}.
 \label{rh1}
 \end{equation}
 We can thus connect ${\cal U}(s)$ and ${\cal V}(s)$ in an invariant way to $F_1(s)$ if we plot the first two as a function of the real and imaginary parts of the third. The results of this are shown in Fig. \ref{figf1plane}. They illustrate explicitly the result that constant amplitude lines and constant phase lines of ${\cal U}(s)$ are obtained from those of
 ${\cal V}(s)$ by a rotation through $90^\circ$ in the complex $F_1$ plane.
 
 \begin{figure}[h]
\includegraphics[width=3.0in]{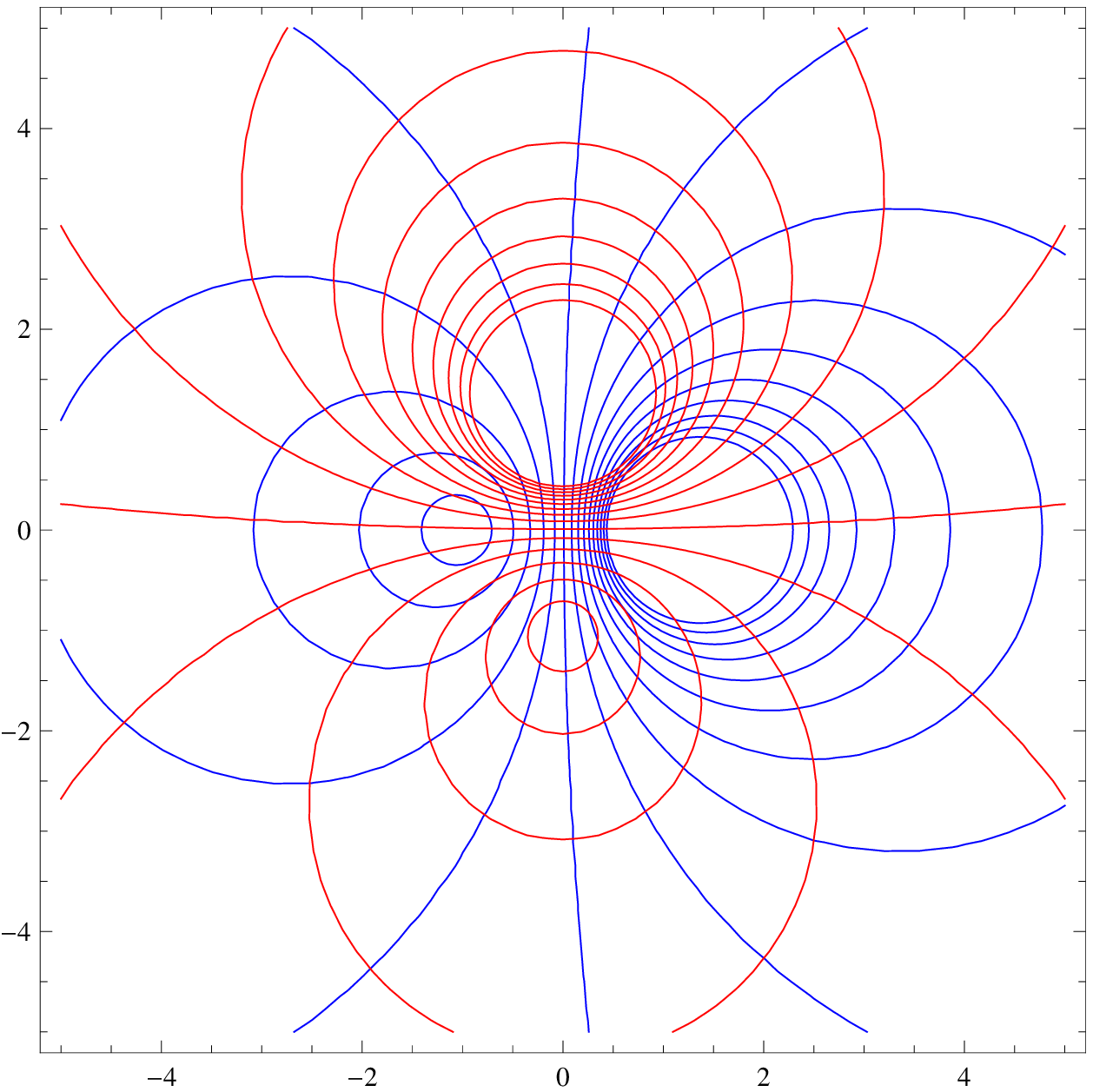}~~\includegraphics[width=3.0in]{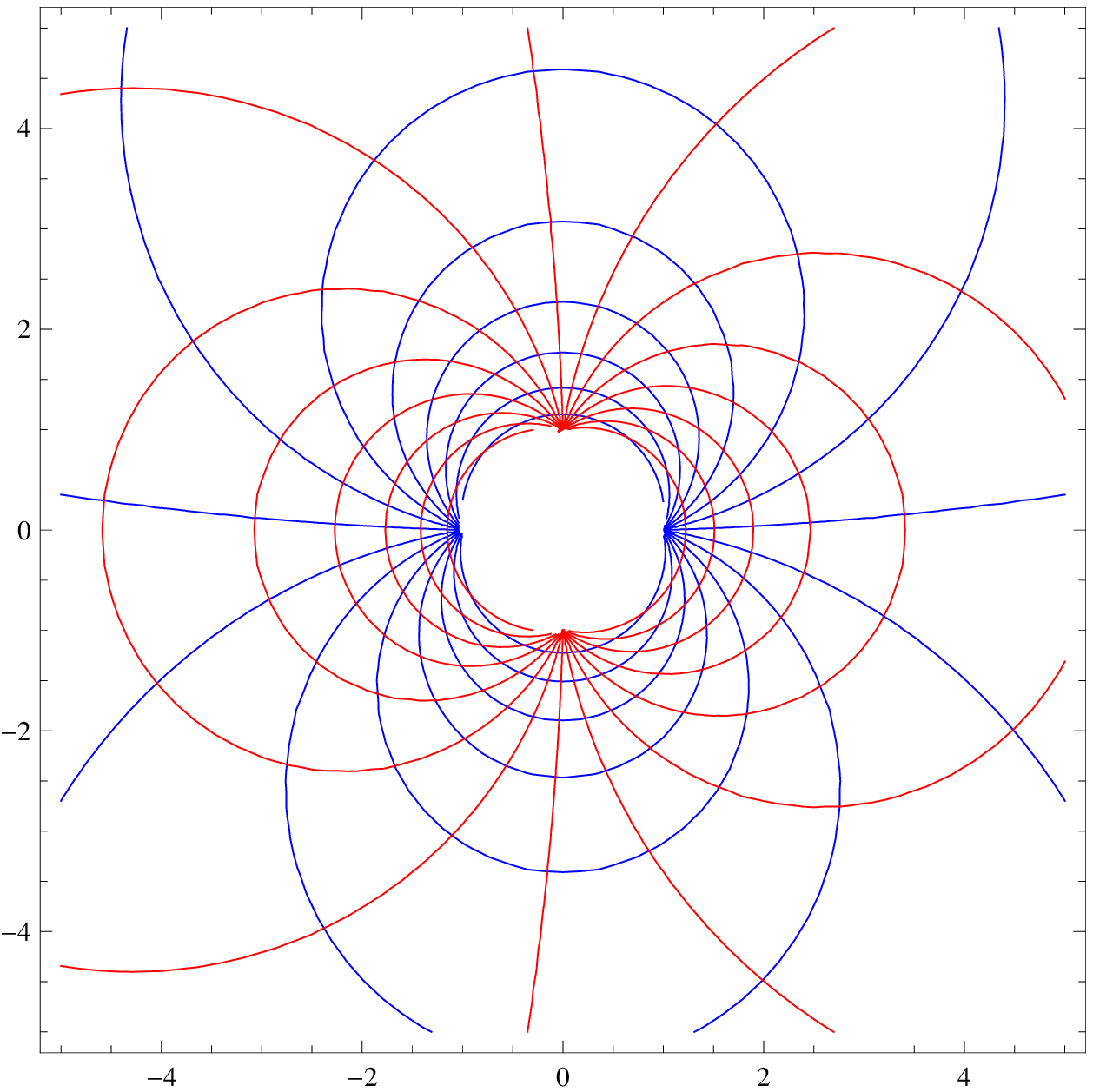}
\includegraphics[width=3.0in]{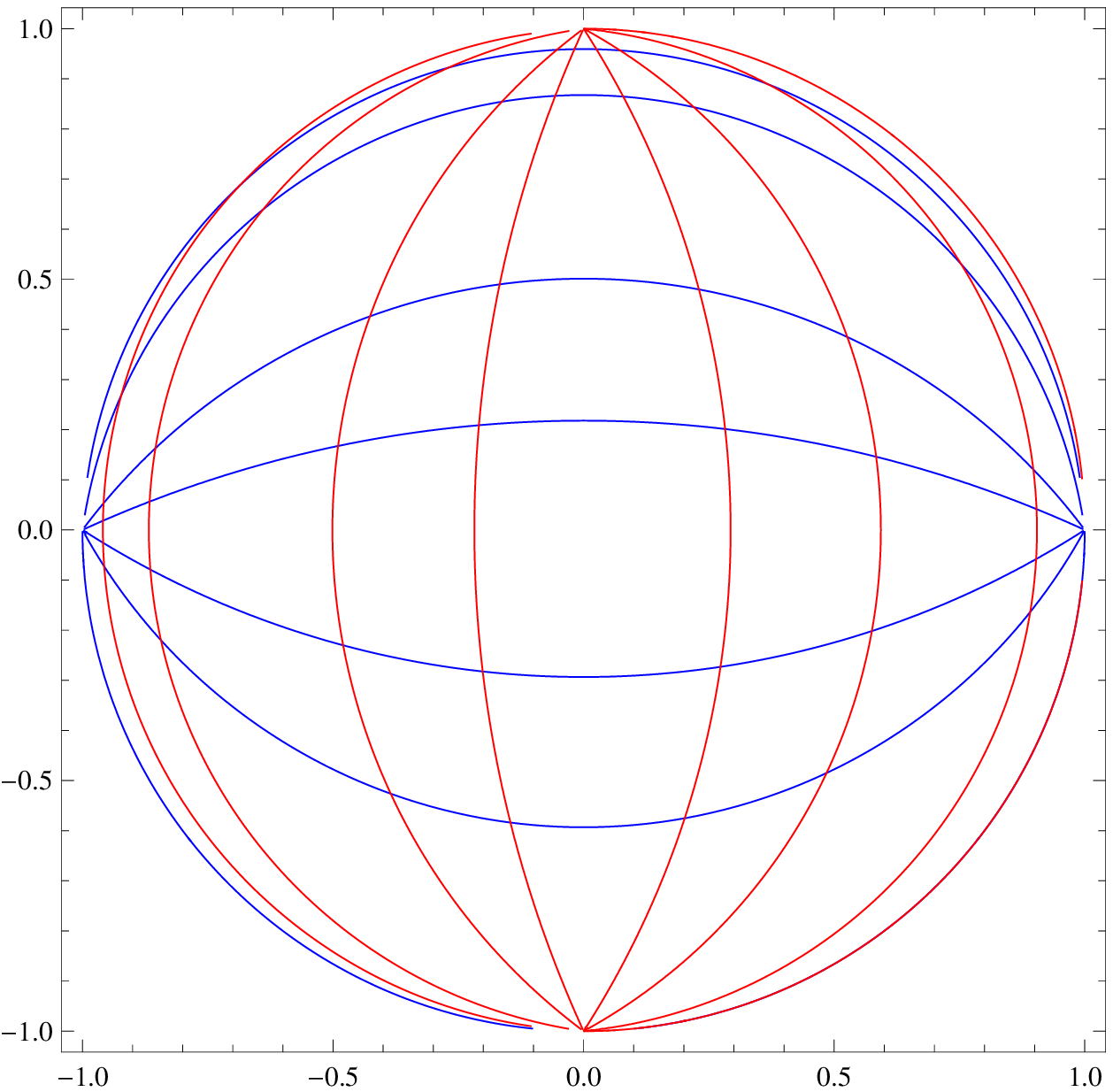}
\caption{\label{figf1plane}  Plots of lines of constant amplitude (top left) and constant phase (top right) of ${\cal V}(s)$ (blue lines) and of   ${\cal U}(s)$ (red lines)  as a function of the
real and imaginary parts of ${\cal F}_1(s)$. Bottom:Plots of constant phase for $|F_1|<1$. }\end{figure}

\begin{theorem}
If all zeros of ${\cal U}(s)$ are of first order, each corresponds to one Riemann sheet of $F_1(s)$ and one Riemann sheet of ${\cal V}(s)$.
\label{rhproof}
\end{theorem}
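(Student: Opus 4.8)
The plan is to lean on the single structural fact that the three functions ${\cal U}(s)$, ${\cal V}(s)$ and $F_1(s)$ are bound to one another by M\"obius (fractional-linear) transformations, as recorded in equations (\ref{tdef}), (\ref{uthm5}) and (\ref{rh1}), and that every such transformation is a conformal bijection of the extended complex plane. The lever I would use is that a M\"obius map has a derivative which is finite and non-zero everywhere except at its single pole (where it is still locally one-to-one on the sphere), so it preserves the local multiplicity, i.e. the branching order, with which any value is attained. Interpreting ``one Riemann sheet'' in the sense already used in the discussion following Theorem \ref{thmconn} (where a zero of order $n$ is described as ``shared between $n$ Riemann sheets''), the whole theorem reduces to transporting the first-order hypothesis on ${\cal U}(s)$ across these conformal bijections.

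First I would locate the images of a zero $s_0$ of ${\cal U}(s)$. Setting ${\cal U}=0$ in (\ref{uthm5}) gives $F_1(s_0)=i(0-i)/(0+i)=-i$, and setting ${\cal U}=0$ in (\ref{tdef}) gives ${\cal V}(s_0)=1$; these are consistent with the tabulated special values $F_1(0)=-i$ and ${\cal V}(0)=1$ of Theorems \ref{thmf1} and \ref{th51t} for the real zero at $s=0$. The point I would stress is that the map ${\cal U}\mapsto F_1=i({\cal U}-i)/({\cal U}+i)$ has its pole at ${\cal U}=-i$, while the map ${\cal U}\mapsto {\cal V}=(1+{\cal U})/(1-{\cal U})$ has its pole at ${\cal U}=1$; neither coincides with ${\cal U}=0$, so both maps are regular, with non-vanishing derivative, at the image of a zero of ${\cal U}(s)$.

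The argument would then be closed by the chain rule. Writing $F_1=g({\cal U})$ with $g(w)=i(w-i)/(w+i)$, I have $F_1'(s_0)=g'(0)\,{\cal U}'(s_0)$; since a first-order zero means ${\cal U}'(s_0)\neq0$ and $g'(0)\neq0$ (the pole of $g$ sits at ${\cal U}=-i$, not at ${\cal U}=0$), it follows that $F_1'(s_0)\neq0$, so $s\mapsto F_1(s)$ is locally univalent at $s_0$ and attains the value $-i$ with multiplicity one --- a single Riemann sheet. The same step applied through (\ref{tdef}), or more quickly through (\ref{uthm4a}) (whose factor $[{\cal V}+1]^2$ equals $4\neq0$ at $s_0$), yields ${\cal V}'(s_0)\neq0$ and hence a single sheet for ${\cal V}(s)$ as well. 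Finally I would reconnect this local statement to the interval picture of Theorem \ref{thmconn}: local univalence of ${\cal U}$ at $s_0$ means the phase of ${\cal U}(s)$ winds exactly once around $s_0$, so the contour $|{\cal U}(s)|=1$ pulls back to a single simple closed curve and exactly one critical-line interval between consecutive poles of ${\cal V}(s)$ is associated with $s_0$.

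I expect the main obstacle to be interpretive rather than analytic. The analytic content --- M\"obius bijectivity plus a chain-rule computation --- is routine once the framing is fixed; the delicate part is pinning down precisely what ``Riemann sheet'' is to mean and confirming that the clean local-multiplicity statement genuinely excludes the global sharing phenomenon flagged in the Comment after Theorem \ref{thmconn} and illustrated in Appendix 2, whereby a higher-order zero of ${\cal U}(s)$ could otherwise be distributed among several critical-line intervals. Making ``shared between $n$ sheets'' precise as the local mapping degree, and then verifying that degree one forecloses such sharing, is the step on which the rigour of the proof really rests.
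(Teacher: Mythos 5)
Your M\"obius/chain-rule computations are correct as far as they go: at a first-order zero $s_0$ of ${\cal U}(s)$ one indeed has $F_1(s_0)=-i$ and ${\cal V}(s_0)=1$, and since the relevant fractional-linear maps are regular and conformal at ${\cal U}=0$, equation (\ref{uthm4a}) and the chain rule give $F_1'(s_0)\neq 0$ and ${\cal V}'(s_0)\neq 0$. But this local univalence statement is not what the theorem asserts, and your closing paragraph concedes precisely the point at issue. In the paper, a ``Riemann sheet'' is a global object anchored to the critical line: by Corollary \ref{corol1}, $F_1(s)$ is real and strictly monotone on each critical-line interval between consecutive poles of ${\cal V}(s)$, hence sweeps out the whole real axis exactly once there, and analytic continuation off the interval fills the complex $F_1$ plane once; that region of $s$ values is the sheet. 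The theorem is the statement that this produces a genuine one-to-one correspondence: each sheet contains exactly one solution of $F_1(s)=-i$ (i.e.\ one zero of ${\cal U}$), every zero of ${\cal U}$ arises this way, and zeros lying off $\sigma=3/4$ --- which occur in pairs symmetric about that line --- sit on different sheets and hence are tied to different intervals of $t$. None of this global bookkeeping follows from the non-vanishing of a derivative at a single point; under your ``local mapping degree'' reading the theorem collapses to a tautology, while under the paper's reading your argument omits its entire content.

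The one sentence in which you attempt the bridge is a non sequitur, and in fact clashes with the paper's own geometry. Local univalence at $s_0$ controls the winding of $\arg {\cal U}$ on small circles about $s_0$; it says nothing about the level set $|{\cal U}(s)|=1$, which by Corollary \ref{corabsu} is (for $t>2.94334$) exactly the critical line $\sigma=1/2$ --- an unbounded straight line that does not enclose $s_0$ at all, since the zeros of ${\cal U}$ lie near $\sigma=3/4$. So the claim that this locus ``pulls back to a single simple closed curve'' around $s_0$, and with it the conclusion that exactly one critical-line interval is associated with $s_0$, is asserted rather than proved. Ruling out the sharing phenomenon flagged in the Comment after Theorem \ref{thmconn} (one zero of ${\cal U}$ claimed by several intervals) is the whole job here, and it requires either the paper's sheet construction or some argument-principle count on a closed contour about $s_0$ whose boundary behaviour is related back to the critical-line intervals; neither appears in your proposal. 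As it stands you have proved a correct but essentially trivial local fact and left the theorem's actual content --- the global zero-to-interval correspondence --- unestablished.
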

\begin{proof}
We have proved that in each interval on the critical line between successive poles of ${\cal V}(s)$ there is one zero of ${\cal V}(s)$, and one pole of $F_1(s)$ followed by a zero
of $F_1(s)$. All the poles and zeros mentioned are first-order. The fact that $F_1(s)$ has its poles and zeros on the critical line, where its values are real and it increases monotonically, shows that in between successive poles (where $s$ takes the values $s_1$ and $s_2$) it takes on values which map out the real axis. By analytic continuation, its values will then fill the complex $F_1$ plane for a set of values of $s$ built around the critical line element running from $s_1$ to $s_2$. The set  of  $s$ values and corresponding $F_1(s)$ values constitutes a Riemann sheet of $F_1(s)$ (Knopp, 1947). The corresponding contours of amplitude and phase of ${\cal V}(s)$ and ${\cal U}(s)$ are shown
in Fig. \ref{figf1plane}; in each case they include precisely one first order zero of ${\cal V}(s)$ and ${\cal U}(s)$.

The set of zeros of ${\cal U}(s)$ so mapped out as $s$ ranges over the entire critical line constitutes the entire set of zeros of this analytic function.  The corresponding statement for poles follows from
the functional equation (\ref{uthm1}) for ${\cal U}(s)$. If ${\cal U}(s)$ has zeros off the line $\sigma=3/4$, these will occur in pairs placed symmetrically about $\sigma=3/4$, each corresponding to a different sheet of $F_1(s)$ and of ${\cal V}(s)$, and thus to a different interval of $t$ on the critical line.
\end{proof}

The arguments of Section 4 are quite general, and apply even if certain properties of the function ${\cal U}(s)$ are altered. For example, we can choose a particular zero/pole pair, say the
$N$th zero/pole in $t=t_N>0$ and artificially double their order in a new function $\tilde{\cal U}_N(s)$:
\begin{equation}
\tilde{\cal U}_N(s)={\cal U}(s) \frac{[s-(3/4+i t_N)][s-(3/4-i t_N)]}{[s-(1/4+i t_N)][s-(1/4-i t_N)]},
\label{tildecaludef}
\end{equation}
and define
\begin{equation}
\tilde {\cal V}_N(s)=\frac{1+\tilde{\cal U}_N(s)}{1-\tilde{\cal U}_N(s)}.
 \label{tildeNcalvdef}
\end{equation}
Alternatively, we can double the order of every zero and pole:
\begin{equation}
\tilde{\cal U}(s)={\cal U}^2(s),~~\tilde {\cal V}(s)=\frac{1+\tilde{\cal U}(s)}{1-\tilde{\cal U}(s)}.
\label{tildecalvdef}
\end{equation}
 \begin{figure}[h]
\includegraphics[width=3.0in]{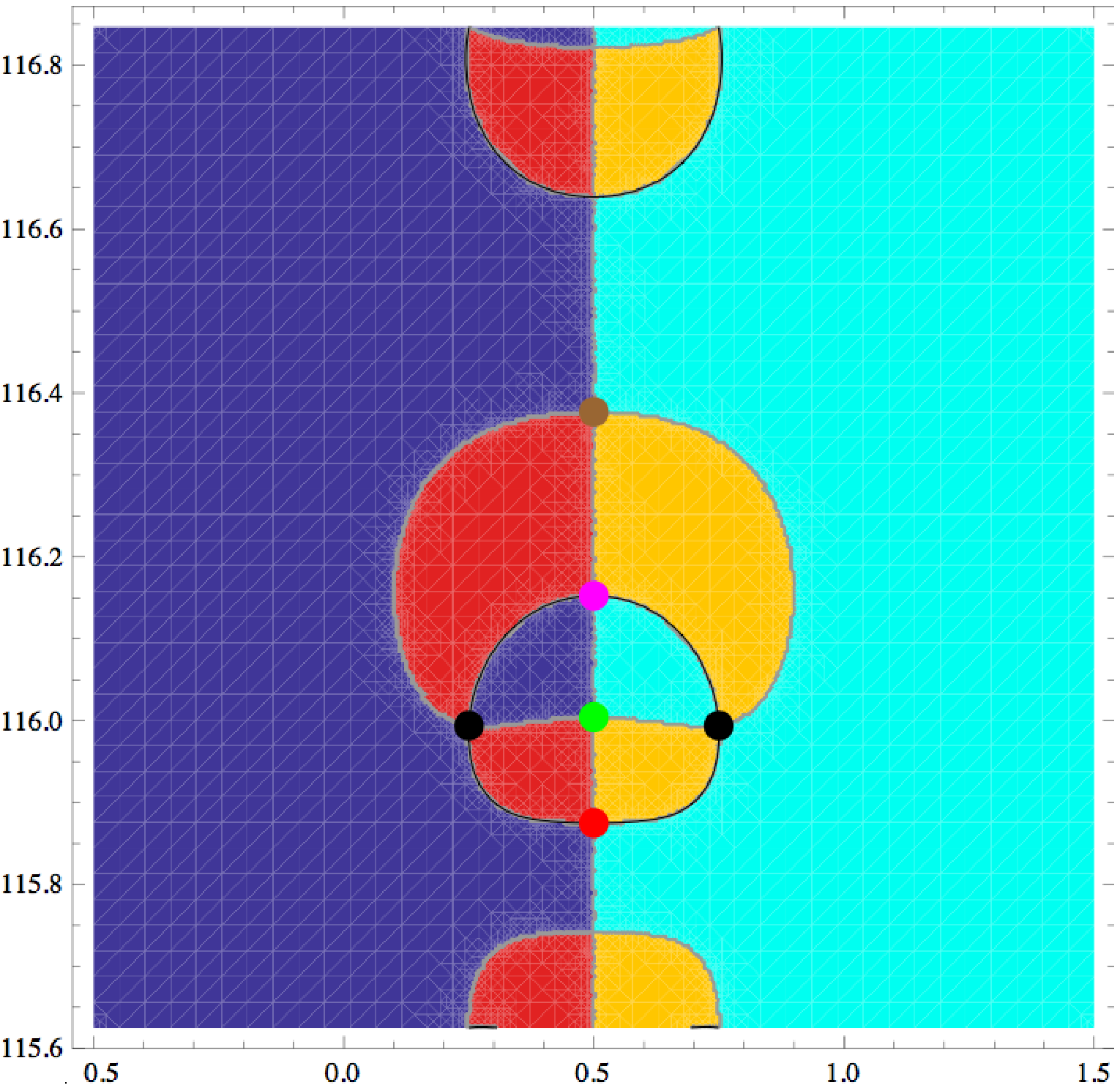}~~\includegraphics[width=3.0in]{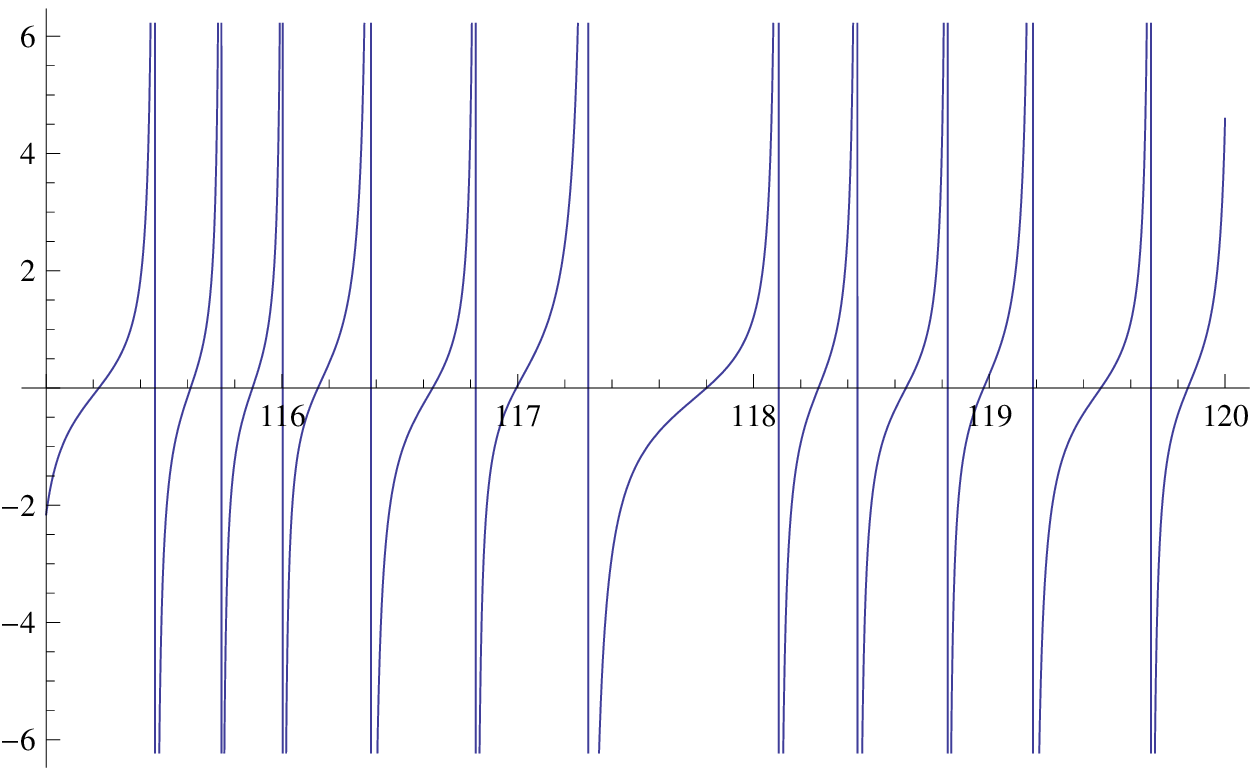}
\caption{\label{cex98}  Left: plot of the phase of $\tilde {\cal V}(s)$ in the region near the 98th zero and pole, which, with all other zeros and poles, have artificially been made second-order.
Right: plot of  $\Im \tilde {\cal V}(s)$ in the region around  $t=t_{98}$. }\end{figure}
The results of such a numerical experiment are shown in Fig. \ref{cex98}. On the left, we see two Riemann sheets of $\tilde {\cal V}(s)$ meeting at the second-order pole
and zero. On the right, we see that $\Im \tilde {\cal V}(s)$ is still monotonic increasing on the critical line, with first-order zeros and poles occurring with twice the density
as compared with the case of ${\cal V}(s)$.

As a second example, we can take the $N$th zero/pole in $t=t_N>0$ and artificially render them into a pair of zeros/poles symmetrically located around the lines
$\sigma=3/4$, $\sigma=1/4$ respectively:
\begin{equation}
\hat{\cal U}(s)={\cal U}(s) \frac{[s-(3/4-\delta+i t_N)][s-(3/4+\delta-i t_N)]}{[s-(1/4-\delta+i t_N)][s-(1/4+\delta-i t_N)]}\frac{[s-(1/4+i t_N)]}{[s-(3/4+i t_N)]},
\label{hatcaludef}
\end{equation}
and
\begin{equation}
\hat {\cal V}(s)=\frac{1+\hat{\cal U}(s)}{1-\hat{\cal U}(s)}.
 \label{hatcalvdef}
\end{equation}
 \begin{figure}[h]
\includegraphics[width=3.0in]{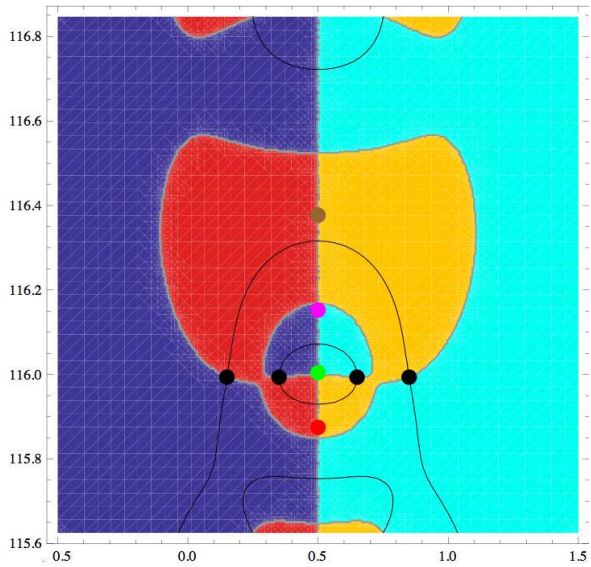}
\caption{\label{chatex98}  Plot of the phase of $\hat {\cal V}(s)$ in the region near the 98th zero and pole, which have artificially been split into a pair of zeros and a pair of poles,
denoted by the black dots.
}\end{figure}
Figure \ref{chatex98} shows that each of the split pairs corresponds to an interval on the critical line, with the inner pole and zero linked with the first interval and the outer pole and zero with the second.

\end{document}